\begin{document}

\title{Learning Control Lyapunov Functions from Counterexamples and Demonstrations
}


\author{Hadi Ravanbakhsh         \and
        Sriram Sankaranarayanan 
}


\institute{H. Ravanbakhsh \at
              University of Colorado, Boulder \\
              \email{hadi.ravanbakhsh@colorado.edu}           
           \and
           S. Sankaranarayanan \at
              University of Colorado, Boulder \\
              \email{sriram.sankaranarayanan@colorado.edu}           
}

\date{Received: date / Accepted: date}

\maketitle

\begin{abstract}
  We present a technique for learning control Lyapunov-like functions,
  which are used in turn to synthesize controllers for nonlinear
  dynamical systems that can stabilize the system, or satisfy
  specifications such as remaining inside a safe set, or eventually
  reaching a target set while remaining inside a safe set.  The
  learning framework uses a \emph{demonstrator} that implements a
  black-box, untrusted strategy presumed to solve the problem of
  interest, a \emph{learner} that poses finitely many queries to the
  demonstrator to infer a candidate function, and a \emph{verifier}
  that checks whether the current candidate is a valid control
  Lyapunov function. The overall learning framework is iterative,
  eliminating a set of candidates on each iteration using the
  counterexamples discovered by the verifier and the demonstrations
  over these counterexamples. We prove its convergence using
  ellipsoidal approximation techniques from convex optimization. We
  also implement this scheme using nonlinear MPC controllers to serve
  as demonstrators for a set of state and trajectory stabilization
  problems for nonlinear dynamical systems. We show how the
    verifier can be constructed efficiently using convex relaxations
    of the verification problem for polynomial systems to
    semi-definite programming (SDP) problem instances. Our approach
  is able to synthesize relatively simple polynomial control Lyapunov
  functions, and in that process replace the MPC using a guaranteed
  and computationally less expensive controller.
  \keywords{ Lyapunov Functions \and Controller Synthesis \and Learning from Demonstrations \and Concept Learning.}
\end{abstract}

\section{Introduction} \label{sec:intro}
We propose a novel \emph{learning from demonstration} scheme for
inferring control Lyapunov functions (potential functions) for
stabilizing nonlinear dynamical systems to reference states/trajectories, 
and implementing control laws for specifications
  such as maintaining a system inside a set of safe states, reaching a
  target set while remaining inside a safe set and tracking a given
  trajectory while not deviating too far away.  Control Lyapunov
functions (CLFs) have wide applications to autonomous systems~\cite{jadbabaie2002control,galloway2015torque,ames2013towards,nguyen2015optimal,KHANSARIZADEH2014}. They extend the classic notion of Lyapunov functions to
systems involving control
inputs~\cite{Sontag/1982/Characterization,sontag1983lyapunov,artstein1983stabilization}. Finding a CLF also leads us
to an associated feedback control law that can be used to solve the
stabilization problem. Additionally, they can be extended for feedback
motion planning using extensions to time-varying or sequential
CLFs~\cite{burridge1999sequential,tedrake2010lqr}. Likewise, they have
been investigated in the robotics community in many forms including
\emph{artificial potential functions} to solve path planning problems
involving obstacles~\cite{lopez1995autonomous}.

However, synthesizing CLFs for nonlinear systems remains a
challenge~\cite{primbs1999nonlinear}. Standard
approaches to finding CLFs include the use of dynamic programming,
wherein the value function satisfies the conditions of a 
CLF~\cite{bertsekas1995dynamic}, or using non-convex bilinear matrix
inequalities (BMI)~\cite{henrion2005solving}. 

In this article, we investigate the problem of learning a CLF using a
black-box \textsc{Demonstrator} that implements an unknown state
feedback law to stabilize the system to a given equilibrium.  This
\textsc{Demonstrator} can be queried at a given system state, and returns a \emph{demonstration} in the form of a  control input generated
at that state by its feedback law. Such a \textsc{Demonstrator} 
can be realized
using an expensive nonlinear model predictive controller (MPC) that
uses a local optimization scheme, or even a human operator under
certain assumptions~\footnote{However, we do not handle noisy or
erroneous demonstrators in this paper.}.  Additionally, the framework
has a \textsc{Learner} which selects a candidate CLF and
a \textsc{Verifier} that tests whether this CLF is valid. If the CLF
is invalid, the \textsc{Verifier} returns a state at which the current
candidate fails. The \textsc{Learner} queries the \textsc{Demonstrator} to
obtain a control input corresponding to this state.  It subsequently
eliminates the current candidate along with a set of related functions
from further consideration. The framework continues to exhaust the
space of candidate CLFs until no CLFs remain or a valid CLF is found
in this process. We prove the process can converge in finitely many
steps provided the
\textsc{Learner} chooses the candidate function appropriately at each
step. We also provide efficient SDP-based approximations to the
verification problem that can be used to drive the framework. Finally,
we test this approach on a variety of examples, by solving
stabilization problems for nonlinear dynamical systems. We show that
our approach can successfully find CLFs using finite horizon
nonlinear MPC schemes with appropriately chosen cost functions to
serve as demonstrators. In these instances, the CLFs yield control
laws that are computationally inexpensive, and  guaranteed against
the original dynamical model.

This paper is an extended version of our earlier 
work~\cite{Ravanbakhsh-RSS-17}. When compared to the
  earlier work, we have thoroughly expanded the technical sections to
  provide detailed proofs of the various results and a detailed
  exposition of each component of our learning
  framework. Additionally, we have included a new section that
  discusses specifications other than stability properties. We have
  also extended our experimental results and compare different options
  for implementing the overall learning loop as well as comparisons with
  other methods. We also provide a
  detailed discussion of various extensions to the approach presented
  in this paper.

\subsection{Illustrative Example: TORA System}
\begin{figure*}[t]
\begin{center}
\includegraphics[width=0.95\textwidth]{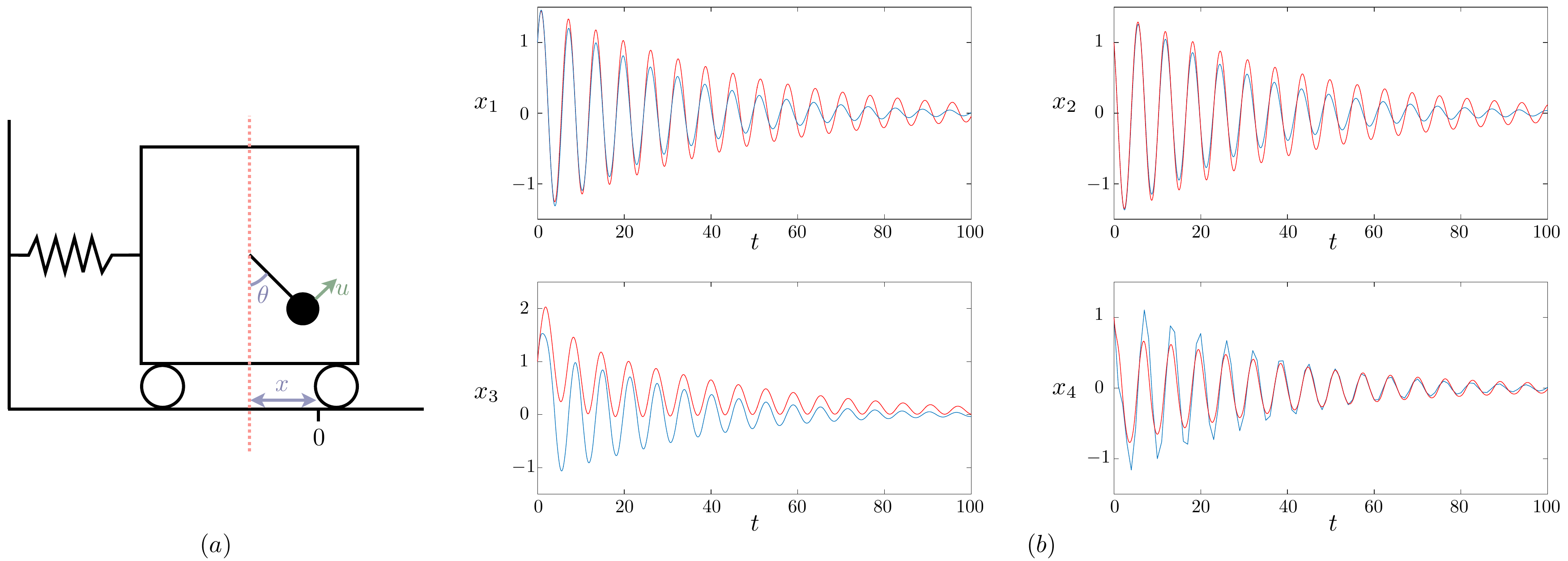}
\end{center}
\caption{TORA System. (a) A schematic diagram of the TORA 
system. (b) Execution traces of the system using MPC control 
(blue traces) and Lyapunov based control (red traces) starting 
from same initial point.}
\label{Fig:tora-example}
\end{figure*}

Figure~\ref{Fig:tora-example}(a) shows a mechanical system, called
translational oscillations with a rotational actuator (TORA).
The system consists of
a cart attached to a wall using a spring. Inside the cart, there is
an arm with a weight which can rotate. The cart itself can oscillate freely
and there are no friction forces. The system has two degrees of freedom,
including the position of the cart $x$, and the rotational position of 
the arm $\theta$. The controller can rotate the arm through input $u$.
The goal is to stabilize the cart
to $x=0$, with its velocity, angle, and angular velocity
$\dot{x} = \theta=\dot{\theta} = 0$. We refer the reader to Jankovic et
al.~\cite{jankovic1996tora} for a derivation of the dynamics, shown below
in terms of state variables $(x_1,\ldots,x_4)$, collectively written
as a vector $\vx$, and a single control input $(u_1)$, written as a vector $\vu$, after a basis transformation:
\begin{equation}\label{eq:tora-dyn}
		\dot{x_1} = x_2,\, \dot{x_2} = -x_1 + \epsilon \sin(x_3),\, \dot{x_3} = x_4,\, \dot{x_4} = u_1\,.
\end{equation}
$\sin(x_3)$ is approximated using a degree three polynomial
approximation which is quite accurate over the range $x_3 \in [-2,2]$.
The equilibrium $x = \dot{x} = \theta = \dot{\theta} = 0$ now
corresponds to $x_1 = x_2 = x_3 = x_4 = 0$.  The system has a single
control input $u_1$ that is bounded $u_1 \in [-1.5, 1.5]$.  Further,
we define a ``safe set''
$S: [-1,1] \times [-1,1] \times [-2,2] \times [-1,1]$, so that if
$\vx(0) \in S$ then $\vx(t) \in S$ for all time $t \geq 0$.

\paragraph{MPC Scheme:} A first approach to solve the problem  uses 
a nonlinear model-predictive control (MPC) scheme using
a discretization of the system dynamics with time step $\tau = 1$.
The time $t$ belongs to set $\{0, \tau, 2\tau,\ldots,N\tau = \T\}$ and:
\begin{equation}\label{eq:discretized-dynamics}
  \vx(t + \tau) =  \vx(t) + \tau f(\vx(t), \vu(t)) \,,
\end{equation}
with $f(\vx,\vu)$ representing the vector field of the ODE in
~\eqref{eq:tora-dyn}. Fixing the time horizon $\T = 30$, we use a
simple cost function
$J(\vx(0), \vu(0), \vu(\tau), \ldots, \vu(\T-\tau)\})$:
\begin{equation}\label{eq:mpc-formulation}
  \sum_{t \in \{0,\tau,...,\T-\tau\}} \left(||\vx(t)||_2^2 +
  ||\vu(t)||_2^2\right) + N \ ||\vx(\T)||_2^2 \,.
\end{equation}
Here, we
constrain $\vu(t) \in [-1.5, 1.5]$ for all
$t$ and define $\vx(t+\tau)$ in terms of
$\vx(t)$ using the discretization in~\eqref{eq:discretized-dynamics}.
Such a control is implemented using a first/second order numerical
gradient descent method to minimize the cost
function~\cite{Nocedal+Wright/2006/Numerical}. The stabilization of
the system was informally confirmed through hundreds of simulations
from different initial states. However, the MPC scheme is expensive,
requiring repeated solutions to (constrained) nonlinear optimization
problems in real-time. Furthermore, in general, the closed loop lacks
formal guarantees despite the \emph{high confidence} gained from
numerous simulations.

\begin{figure}[t]
\begin{center}
\begin{tikzpicture}
\matrix[every node/.style={rectangle, draw=black, line width=1.5pt}, row sep=20pt, column sep=15pt]{ & \node[fill=blue!20](n0){\begin{tabular}{c} 
\textsc{Learner} \end{tabular}}; & \\
\node[fill=green!20](n1){\begin{tabular}{c}
\textsc{Verifier} \end{tabular}}; & & \node[fill=red!20](n2){\begin{tabular}{c}
\textsc{Demonstrator} \end{tabular} }; \\
};
\path[->, line width=2pt] (n0) edge[bend right] node[left]{$V(\vx)?$} (n1)
(n1) edge[bend right] node[below]{\;\;\;\begin{tabular}{c}
Yes or \\
No($\vx_{j+1}$)\end{tabular}} (n0)
(n0) edge [bend left] node[above]{$\vx_j$} (n2)
(n2) edge [bend left] node[above]{$\vu_j$} (n0);
\draw (n0.north)+(0,0.3cm) node {$(\vx_1, \vu_1), \ldots, (\vx_j, \vu_j)$};
\end{tikzpicture}
\end{center}
\caption{Overview of the learning framework for learning a control Lyapunov function.}\label{fig:learning-framework}
\end{figure}
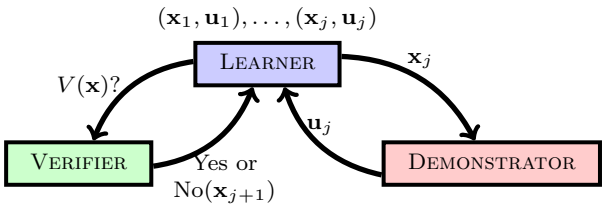

\paragraph{Learning a Control Lyapunov Function:} In this article, we introduce
an approach which uses the MPC scheme as a \textsc{demonstrator},
and attempts to learn a control Lyapunov function. Then, a 
control law (in a closed form) is obtained from the CLF. The overall
idea, depicted in Fig.~\ref{fig:learning-framework}, is to pose
queries to the \emph{offline} MPC at finitely many \emph{witness}
states $\{ \vx^{(1)}, \ldots, \vx^{(j)} \}$. Then, for each witness state
$\vx^{(i)}$, the MPC is applied to generate a sequence of control
inputs $\vu^{(i)}(0), \vu^{(i)}(\tau), \cdots, $ $\vu^{(i)}(\T-\tau)$ with $\vx^{(i)}$ as the initial
state, in order to drive the system into the equilibrium starting from
$\vx^{(i)}$. The MPC then retains the first control input
$\vu^{(i)}:\ \vu^{(i)}(0)$, and discards the remaining (as is standard in
MPC). This yields the so called observation pairs $(\vx^{(i)}, \vu^{(i)})$
that are used by the \textsc{learner}.

The
\textsc{learner} attempts to find a candidate function
$V(\vx)$ that is positive definite and which decreases at each witness
state $\vx^{(i)}$ through the control input $\vu^{(i)}$. This function $V$
is potentially a CLF function for the system.
This function is fed to the \textsc{verifier}, which checks whether
$V(\vx)$ is indeed a CLF, or discovers a state $\vx^{(j+1)}$ which
refutes $V$. This new state is added to the witness set and the
process is iterated. The procedure described in this paper synthesizes
the control Lyapunov function $V(\vx)$ below:
 \begin{align*}
 V =& 1.22 x_2^2 + 0.31 x_2x_3 + 0.44 x_3^2 - 0.28 x_4x_2\\
    & + 0.80 x_4x_3 + 1.69 x_4^2 + 0.07 x_1x_2 - 0.66 x_1x_3\\
    & - 1.85 x_4x_1 + 1.6 x_1^2\,.
 \end{align*}

Next, this function is used to design a associated control law that
guarantees the stabilization of the model described in Eq.~\eqref{eq:tora-dyn}.
Figure~\ref{Fig:tora-example}(b)
shows a closed loop trajectory for this control law vs control law
extracted by the MPC.  At each step, given a
current state $\vx$, we compute an input $\vu \in [-1.5, 1.5]$ such
that:
\begin{equation}\label{eq:clf-decrease}
  (\nabla V) \cdot f(\vx, \vu) < 0 \,.
\end{equation}
 First, the definition of a CLF guarantees that any 
 state $\vx \in S$, a control input $\vu \in [-1.5,1.5]$ that
satisfies Eq.~\eqref{eq:clf-decrease} exists.
Such a $\vu$ may be chosen directly by means of a formula involving
$\vx$~\cite{LIN1991UNIVERSAL,suarez2001global} unlike the MPC which
solves a nonlinear problem in Eq.~\eqref{eq:mpc-formulation}. Furthermore, the
resulting  control law guarantees  the stability of the resulting closed loop.

\section{Background}\label{sec:background}
We recall preliminary notions, including the stabilization problem for
nonlinear dynamical systems.

\subsection{Problem Statement}
We will first define the system model studied throughout this paper.
\begin{definition}[Control System]
$\ $ A state feedback control system $\Psi(X, U, f, \K)$ consists
of a plant, a controller over 
$X \subseteq \reals^n$ and $U \subseteq \reals^m$.
\begin{compactenum}
\item $X \subseteq \reals^n$ is the \emph{state space} of the system.
   The control inputs belong to a set
$U$ defined as a polyhedron:
\begin{equation}\label{eq:input-sat}
	U = \{\vu \ | \ A \vu \geq \vb \} \,.
\end{equation}
\item The plant consists of a vector field defined by a
  continuous and differentiable function $f: X \times U \mapsto \reals^n$.
\item The controller measures the state of the plant $\vx \in X$ and
provides feedback $\vu \in U$. The controller is defined by a 
feedback function $\K:X \mapsto U$
(Fig.~\ref{fig:system}).
\end{compactenum}
\end{definition}

For now, we assume $\K$ is a smooth (continuous and differentiable) function.
For a given feedback law $\K$, an
execution trace of the system, starting from an initial
state $\vx_0$ is  a function: 
$\vx: [0,T(\vx_0)) \mapsto X $, which maps time $t \in [0,T(\vx_0))$ to a state $\vx(t)$,
    such that
\[ \dot{\vx}(t) = f(\vx(t), \K(\vx(t))) \,, \] where $\dot{\vx}(\cdot)$ is
the right derivative of $\vx(\cdot)$ w.r.t. time over $[0, T(\vx_0))$.
Since $f$ and $\K$ are assumed to be smooth,  there exists a unique trajectory
for any $\vx_0$,  defined over some time interval $ [0, T(\vx_0))$.
Here $T(\vx_0)$ is $\infty$ if
trajectory starting from $\vx_0$ exists for all time. Otherwise, $T(\vx_0)$ is finite if the
trajectory ``escapes'' in finite time. For most of the systems we study, the closed loop dynamics
are such that a compact set $S$ will be positive invariant. In fact, this set will be a 
sublevel set of a Lyapunov function for the closed loop dynamics. This fact along
with the smoothness of $f, \K$ suffices to establish that $T(\vx_0) = \infty$ for all $\vx_0 \in S$. Unless otherwise noted,
we will consider control laws $\K$ that will guarantee existence of trajectories for all time.

\begin{figure}[!t]
\begin{center}
	\includegraphics[width=0.25\textwidth]{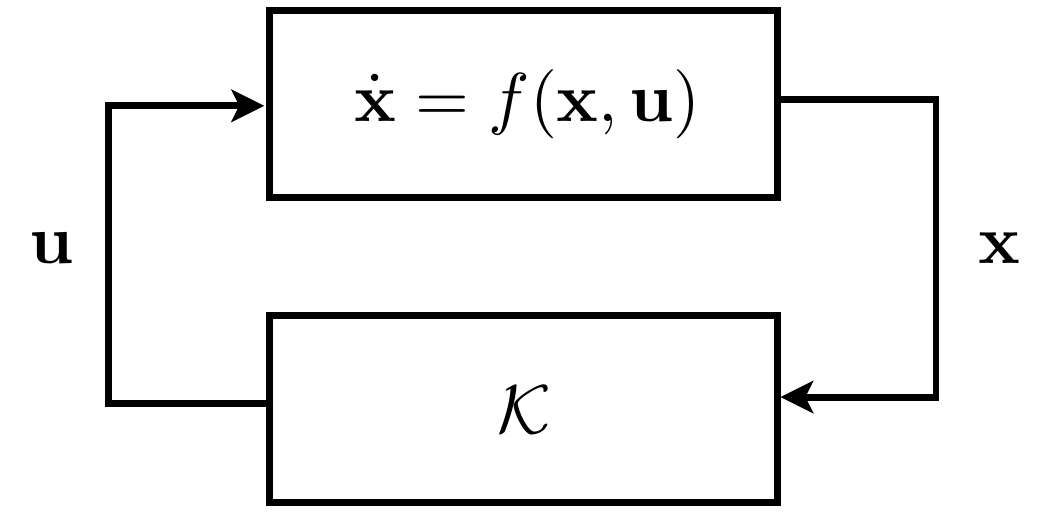}
\end{center}
\caption{Closed-loop state feedback system.}\label{fig:system} 
\end{figure}

A specification describes the desired behavior of all possible
execution traces $\vx(\cdot)$.
In this article, we study a variety of
specifications, including stability, trajectory tracking, and
safety. For simplicity, we first focus on stability. Extensions to
other specifications are presented in Section~\ref{sec:spec}. Also,
without loss of generality, we assume $\vx = \vzero$ is the desired
equilibrium. Moreover, $f(\vzero, \vzero) = \vzero$.

\begin{problem}[Synthesis for Asymptotic
  Stability] \label{prob:stability} Given a plant, the
  control synthesis problem is to design a controller (a feedback law $\K$) s.t.
  all traces $\vx(\cdot)$ of the closed loop system
  $\Psi(X, U, f, \K)$ are asymptotically stable. We require two properties for asymptotic stability.
First, the system is \emph{Lyapunov stable}:
	\[ \begin{array}{ll}
            (\forall \epsilon > 0) \\
          \ \ \; (\exists \delta > 0) \\
          \ \ \ \ \ \ \; \left(\begin{array}{c}\forall \vx(\cdot) \\ \vx(0) \in B_{\delta}(\vzero)\end{array}\right)\ (\forall t \geq 0)  \; \vx(t) \in \B_\epsilon(\vzero) \,,\\
           \end{array}\]
         wherein $\B_\delta(\vx) \subseteq \reals^n$  is the ball of radius 
         $\delta$ centered at $\vx$. In other words, for any
         chosen $\epsilon > 0$,  we may ensure that
         the trajectories will stay inside a ball of
         $\epsilon$ radius by choosing the initial
         conditions to lie inside a ball of $\delta$ radius.

    Furthermore, all the trajectories converge asymptotically towards the origin:
    \[\begin{array}{ll}
      (\forall \epsilon > 0)\ \left(\forall \vx(\cdot)\right)\ (\exists T > 0)\ (\forall t \geq T) \ \vx(t) \in \B_{\epsilon}(\vzero) \,.
    \end{array}\]
    I.e., For any chosen $\epsilon > 0$, all trajectories will eventually 
    reach a ball of radius $\epsilon$ around the origin and stay inside forever.
\end{problem}

Stability in our method is addressed through Lyapunov analysis. More
specifically, our solution is based on control Lyapunov functions
(CLF). CLFs were first introduced by Sontag~\cite{Sontag/1982/Characterization,sontag1983lyapunov}, and
studied at the same time by Artstein~\cite{artstein1983stabilization}. Sontag's work shows that if a system is asymptotically stablizable, then there exists a CLF even if the dynamics are not smooth~\cite{sontag1983lyapunov}. Now, let us recall the definition of a positive and
negative definite functions.
\begin{definition}[Positive Definite]
  A function $V: \reals^n$ $\mapsto \reals$ is \emph{positive
    definite} over a set $X$ containing $\vzero$, iff $V(\vzero) = 0$
  and $V(\vx) > 0$ for all $\vx \in X \setminus \{ \vzero\}$.

  Likewise, $V$ is \emph{negative definite} iff $-V$ is positive definite.
  
\end{definition}

\begin{definition}[Control Lyapunov Function(CLF)]
	A smooth, radially unbounded function $V$ is a control Lyapunov
function (CLF) over $X$, if the following 
conditions hold~\cite{artstein1983stabilization}:
\begin{equation}\label{eq:clf-def} 
\begin{array}{l}
	V\ \mbox{is positive definite over}\ X \\
	\min_{\vu \in U} (\nabla V) \cdot f(\vx, \vu)\ \mbox{is negative definite over} X\,,
\end{array}
\end{equation}
where $\nabla V$ is the gradient of $V$. Note that $(\nabla V) \cdot f$ is
the Lie derivative of $V$ according to the vector field $f$.
\end{definition}
 Another
way of interpreting the second condition is that for each $\vx \in X$,
a control $\vu \in U$ can be chosen to ensure an \emph{instantaneous
  decrease} in the value of $V$, as illustrated in Fig.~\ref{fig:clf}.
  
\paragraph{Solving Stabilization using CLFs:}
Finding a CLF $V$ guarantees the existence of a feedback law
that can stabilize all trajectories to the equilibrium~\cite{artstein1983stabilization}. However, constructing such a feedback law is not trivial and potentially expensive.
Further results can be obtained by restricting the vector field
$f$ to be control affine:
\begin{equation} \label{eq:control-affine}
	f(\vx, \vu):\ f_0(\vx) + \sum_{i=1}^m f_i(\vx)u_i \,,
\end{equation}
wherein $f_i : X \mapsto \reals[X]^n$.
Assuming $U: \reals^m$, Sontag provides a method for extracting a feedback law $\K$, for control affine systems from a control Lyapunov
function~\cite{sontag1989universal}. More specifically, if a CLF $V$ is available, the following feedback law stabilizes the system:
\begin{equation}\label{eq:sontag}
	\K_i(\vx) = \begin{cases} 0 & \beta(\vx) = 0 \\
-b_i(\vx) \frac{a(\vx) + \sqrt{a(\vx)^2 + \beta(\vx)^2}}{\beta(\vx)} & \beta(\vx) \neq 0\,,
	\end{cases}
\end{equation}
where $a(\vx) = \nabla V.f_0(\vx)$, $b_i(\vx) = \nabla V.f_i(\vx)$, and $\beta(\vx) = \sum_{i=1}^m b_i^2(\vx)$.
\begin{remark}
	Feedback law $\K$ provided by the Sontag formula is not necessarily continuous at the origin. Nevertheless, such a feedback law still guarantees stabilization. See~\cite{sontag1989universal} for more details.
\end{remark}
Sontag formula can be extended to systems
with saturated inputs where $U$ is an n-ball~\cite{LIN1991UNIVERSAL} or a polytope~\cite{suarez2001global}. Also switching-based feedback is possible, under some mild assumptions (to avoid Zeno behavior)~\cite{curtis2003clf,Ravanbakhsh-Others/2015/Counter-LMI}. 
We assume dynamics are affine in control and use these results which reduce Problem~\ref{prob:stability} to that of finding a control Lyapunov function $V$.

\begin{figure}[!t]
\begin{center}
	\includegraphics[width=0.3\textwidth]{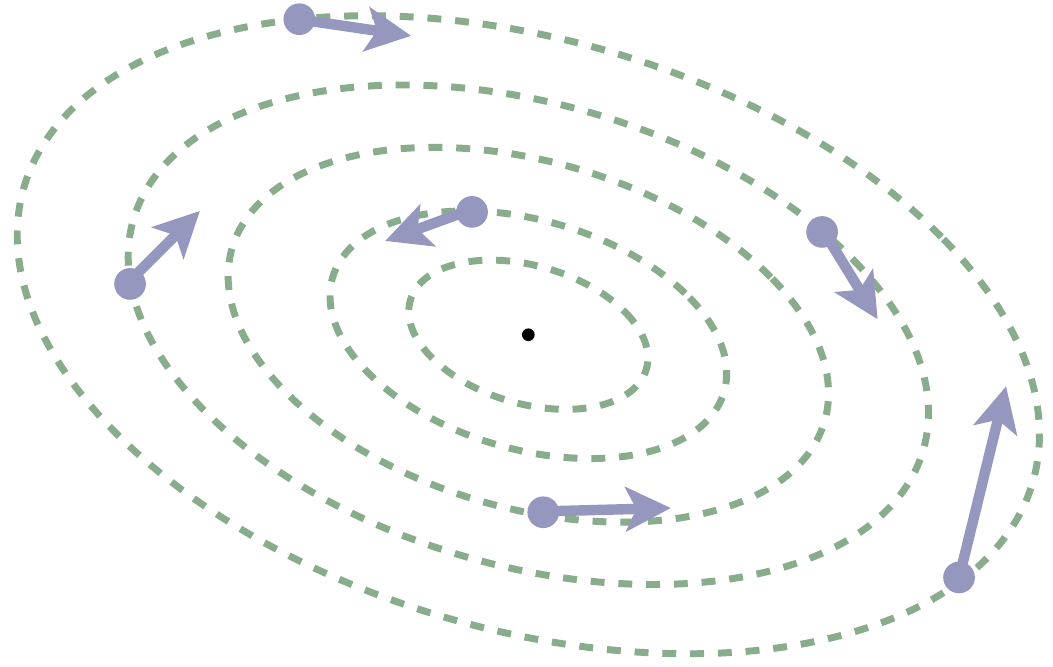}
\end{center}
\caption{Control Lyapunov Function (CLF): Level-sets of a CLF
  $V$ are shown using the green lines. For each state (blue dot), the
  vector field $f(\vx, \vu)$ for $\vu = \K(\vx)$ is the blue arrow,
  and it points to a direction which decreases $V$.}\label{fig:clf}
\end{figure}

\subsection{Discovering CLFs}
We briefly summarize approaches for discovering CLFs for a given plant
model in order to stabilize it to a given equilibrium state.
Efficient methods for discovering CLFs are available only for specific
classes of systems such as feedback linearizable systems, or for
so-called strict feedback systems, wherein a procedure called
\emph{backstepping} can be used~\cite{freeman2008robust}. However, finding CLFs for general nonlinear
systems is challenging~\cite{primbs1999nonlinear}.

One class of solutions uses optimal control theory by setting up the
problem of stabilization as one of minimizing a cost function over the
trajectories of the system.  If the cost function is set up
appropriately, then the value function for the resulting dynamic
programming problem is a a CLF~\cite{primbs1999nonlinear,bertsekas1995dynamic}. To do so,
however, one needs to solve a Hamilton-Jacobi-Bellman (HJB) partial
differential equation to discover the value function, which can be
quite hard in practice\cite{bryson1975applied}. In fact, rather than
solve HJB equations to obtain CLFs, it is more common to derive a CLF
using a procedure such as backstepping and apply inverse optimality
results to derive cost functions~\cite{freeman2008robust}.

A second class of solution is based on parameterization. More
specifically, a class of function $V_{\vc}(\vx)$ is parameterized by a
set of unknown parameters $\vc$. This parameterization is commonly
specified as a linear combination of basis functions of the
form $V_{\vc}(\vx) : \sum c_i g_i(\vx)$. Furthermore, the functions
$g_i$ commonly range over all possible monomials up to some prespecified
degree limit $D$. Next, an instantiation of
the parameters $\vc$ is discovered so that the resulting function $V$
is a CLF.  Unfortunately, discovering such parameters requires the
solution to a quantifier elimination problem, in general. This is quite
computationally expensive for nonlinear systems. Previously, authors
proposed a framework which uses sampling to avoiding expensive
quantifier eliminations~\cite{ravanbakhsh2015counterexample}. Despite
the use of sampling, scalability remains an issue. Another solution is
based on sum-of-squares
relaxations~\cite{Shor/1987/Class,lasserre2001global,Parillo/2003/Semidefinite},
along the lines of approaches used to discover Lyapunov
functions~\cite{Papachristodoulou+Prajna/2002/Construction}. However,
discovering CLFs using this approach entails solving a system of
bilinear matrix
inequalities~\cite{tan2004searching,henrion2005solving}.  In contrast
to LMIs, the set of solutions to a BMIs form a nonconvex set, and
solving BMIs is well-known to be computationally expensive, in
practice.  Rather than solving a BMI to find a CLF, and then
extracting the feedback law from the CLF, an alternative approach is
to simultaneously search for a Lyapunov function $V$ and an unknown
feedback law at the same
time~\cite{el1994synthesis,tan2004searching,majumdar2013control}. The latter approach also yields bilinear matrix inequalities of
comparable sizes.  Rather than seek algorithms that are
guaranteed to solve BMIs, a simpler approach is to attempt to solve the
BMIs using \emph{alternating minimization}: a form of coordinate
descent that fixes one set of variables in BMI, obtaining an LMI over
the remaining variables.  However, these approaches usually stuck in a
local ``saddle point'', and fail as a
result~\cite{Helton+Merino/1997/Coordinate}.

Approaches that parameterize a family of functions $V_{\vc}(\vx)$ face
the issue of choosing a family such that a CLF belonging to that family
is known to exist whenever the system is asymptotically stabilizable in the
first place. There is a rich literature on the existence of CLFs for a given
class of plant models. As mentioned earlier, if a system is
\emph{asymptotically stablizable}, then there exists a CLF even if the
dynamics are not smooth~\cite{sontag1983lyapunov}. However, the CLF does not
have to be smooth. Recent results, have shown some
light on the existence of polynomial Lyapunov functions for certain
classes of systems. Peet showed that an exponentially stable system
has a polynomial local Lyapunov function over a bounded region~\cite{Peet/2009/Exponentially}. Thus,
if there exists some feedback law that exponentially stabilizes a
given plant, we may conclude the existence of a polynomial
CLF for that system. This was recently extended to rationally stable systems
i.e., the distance to equilibrium decays as $o(t^{-k})$ for trajectories
starting from some set $\Omega$, by Leth et al.~\cite{Leth+Others/2017/Existence}. These results
do not guarantee that a search for a polynomial CLF will be successful
due to the lack of a bound on the degree $D$. This can be addressed by increasing
the degree of the monomials until a CLF is found, but the process can be prohibitively
expensive. Another drawback is that most approaches use SOS relaxations over polynomial systems to check the CLF conditions, although there is no guarantee as yet that polynomial CLFs that are also verifiable through SOS relaxations exist.

Another class of solutions involves approximate dynamic programming to
find approximations to value
functions~\cite{bertsekas2008approximate}. The solutions obtained
through these approaches are not guaranteed to be CLFs and thus may
need to be discarded, if the final result does not satisfy the
conditions for a CLF.  Approximate solutions are also investigated
through learning from demonstrations \cite{zhong2013value}.
Khansari-Zadeh et al. learn a CLF from
demonstrations through a combination of sampling states and corresponding
feedback provided by the demonstrator. A likely CLF is learned
through parameterizing a class of functions $V_{\vc}(\vx)$, and finding
conditions on $\vc$ by enforcing the conditions for the
CLFs at the sampled states~\cite{KHANSARIZADEH2014}. The conditions
for being a CLF should be checked on the solution obtained by solving these
constraints.

Compared to the techniques described above, the approach presented in
this paper is based on parameterization by choosing a class of
functions $V_{\vc}(\vx)$ and attempting to find a suitable $\vc \in C$
so that the result is a CLF.  Our approach avoids having to solve BMIs
by instead choosing finitely many sample states, and using
demonstrator's feedback to provide corresponding sample controls for
the state samples. However, instead of choosing these samples at
random, we use a verifier to select samples. Furthermore, our approach
can also systematically explore the space of possible parameters $C$
in a manner that guarantees termination in number of iterations
polynomial in the dimensionality of $C$ and $\vx$.  The result upon
termination can be a guaranteed CLF $V$ or failure to find a CLF among
the class of functions provided.

\section{Formal Learning Framework}\label{sec:framework}
As mentioned earlier, finding a control Lyapunov function is 
computationally expensive, requiring the solution
to BMIs~\cite{tan2004searching} or hard non-linear 
constraints~\cite{Ravanbakhsh-Others/2015/Counter-LMI}. 
The goal is to search for a solution (CLF) over a 
hypothesis space. More specifically, a CLF
is parameterized by a set of unknown parameters 
$\vc \in C$ ($C \subseteq \reals^r$). The parameterized CLF
is shown by $V_\vc$. And the goal is to
find $\vc \in C$ s.t. 
\begin{equation}\label{eq:clf-param-def} 
\begin{array}{l}
      V_\vc\ \mbox{is positive definite}\\
       \min_{\vu \in U} \nabla V_\vc.f(\vx, \vu) \ \mbox{is negative definite} \,.\\
\end{array}
\end{equation}

A standard approach is to choose a 
set of basis functions $g_1, \ldots, g_r$ ($g_i: X \mapsto \reals$)
 and search for a function of the form
\begin{equation} \label{eq:clf-template}
	V_\vc(\vx) = \sum_{j=1}^r c_j g_j(\vx) \,.
\end{equation}
\begin{remark}
The basis functions are chosen s.t. $V_\vc$ is radially 
unbounded and smooth, independent of the coefficients.
\end{remark}

As mentioned earlier, the learning framework has three components: a
demonstrator, a learner, and a verifier (see
Fig.~\ref{fig:learning-framework}).
The demonstrator inputs a state $\vx$ and returns a control input
$\vu \in U$, that is an appropriate ``instantaneous'' feedback for
$\vx$.  Formally, demonstrator is a function $\D : X \mapsto U$.

\begin{remark}[Demonstrator]
  The demonstrator is treated as a black box. This allows to use a
  variety of approaches ranging from trajectory
  optimization~\cite{zhang2016learning}, human
  expert demonstrations~\cite{KHANSARIZADEH2014}, and sample-based
  methods~\cite{lavalle2000rapidly,kocsis2006bandit}, which can be probabilistically complete.  While the
  demonstrator is \emph{presumed} to stabilize the system, our method
  can work even if the demonstrator is faulty. Specifically, a faulty
  demonstrator in worst case, may cause our method to
    terminate without having found a CLF. However,
    if a CLF is found by our approach, it is guaranteed to be correct.
\end{remark}

The formal learning procedure receives inputs:
\begin{enumerate}
	\item A plant described by $f$
	\item A ``black-box" demonstrator
function $\D: X \mapsto U$
	\item A set of basis functions $g_1,\ldots,g_r$ to form the 
hypothesis space $V_\vc(\vx) : \sum_{j=1}^r c_j g_j(\vx)$,
\end{enumerate} 
and either (a)  outputs a $\vc \in C$ s.t. 
$V_\vc(\vx): \vc^t \cdot \vg(\vx)$ is a CLF (Eq.~\eqref{eq:clf-param-def});
or (b)  declares \textsc{Failure:} no CLF could be discovered.

The goal of this framework is to find a CLF from a finite set of
queries to a demonstrator.

\begin{definition}[Observations]
We define a set of observations $O$ as
\[
O : \{(\vx_1, \vu_1),\ldots,(\vx_j, \vu_j)\} \subset X \times U\,,
\]
where $\vu_i$ is the demonstrated feedback for state $\vx_i$, i.e., $\vu_i: \D(\vx_i)$.
Further, we will assume that $\vx_i \not= \vzero$.
\end{definition}

\begin{definition}[Observation Compatibility] \label{def:compatible-data}
	A function $V$ is said to be compatible with a set of 
	observations $O$ iff $V$ respects the CLF conditions 
	(Eq.~\eqref{eq:clf-def}) for every observation in $O$:
	\[
	V(\vzero) = 0 \ \wedge \ \bigwedge\limits_{(\vx_i, \vu_i) \in O_j}
\left(\begin{array}{c} V(\vx_i) > 0\ \land\ \\ \nabla V \cdot f(\vx_i, \vu_i) < 0 \end{array}\right)\,.
	\]
\end{definition}

We note that observation compatible functions need not
  necessarily be a CLF, since they may violate the CLF condition for
  some state $\vx$ that is not part of an observation in $O$.  On the
  flip side, not every CLF (satisfying the conditions in Eq.~\eqref{eq:clf-def}) will necessarily be compatible with a given
  observation set $O$.

\begin{definition}[ Demonstrator Compatibility ] \label{def:compatible-dem}
	A function $V$ is said to be compatible with a demonstrator
	$\D$ iff $V$ respects the CLF conditions 
	(Eq.~\eqref{eq:clf-def}) for every observation that can be generated by 
	the demonstrator:
	\[
	V(\vzero) = 0 \ \wedge \ \forall{\vx \neq \vzero}
\left(\begin{array}{c} V(\vx) > 0\ \land\ \\ \nabla V \cdot f(\vx, \D(\vx)) < 0 \end{array}\right)\,.
	\]
	In other words, $V$ is a Lyapunov function for the closed loop system
	$\Psi(X, U, f, \D)$.
\end{definition}

Now, we describe the learning framework. The framework consists of a learner and a verifier. The learner interacts
with the verifier and the demonstrator. The framework works iteratively and
at each iteration $j$ the learner maintains a set of observations
\[
O_j : \{(\vx_1, \vu_1),\ldots,(\vx_{j}, \vu_{j})\} \subset X \times U\,.
\]
Corresponding to $O_j$, $C_j \subseteq C$ is defined as a set of candidate
unknowns for function $V_\vc(\vx)$. Formally, $C_j$ is a set of all $\vc$
s.t. $V_\vc$ is compatible with $O_j$:
\begin{equation} \label{eq:C_j_0}
C_j : \left\{ \vc \in C\left|
\begin{array}{c} V_\vc(0) = 0 \ \land \\
\bigwedge\limits_{(\vx_i, \vu_i) \in O_j}
\left(\begin{array}{c} V_\vc(\vx_i) > 0\ \land\ \\ \nabla V_\vc \cdot f(\vx_i, \vu_i) < 0 \end{array}\right)
\end{array}
\ \right.\right\}.
\end{equation}

\begin{figure}[!t]
\begin{center}
	\includegraphics[width=0.4\textwidth]{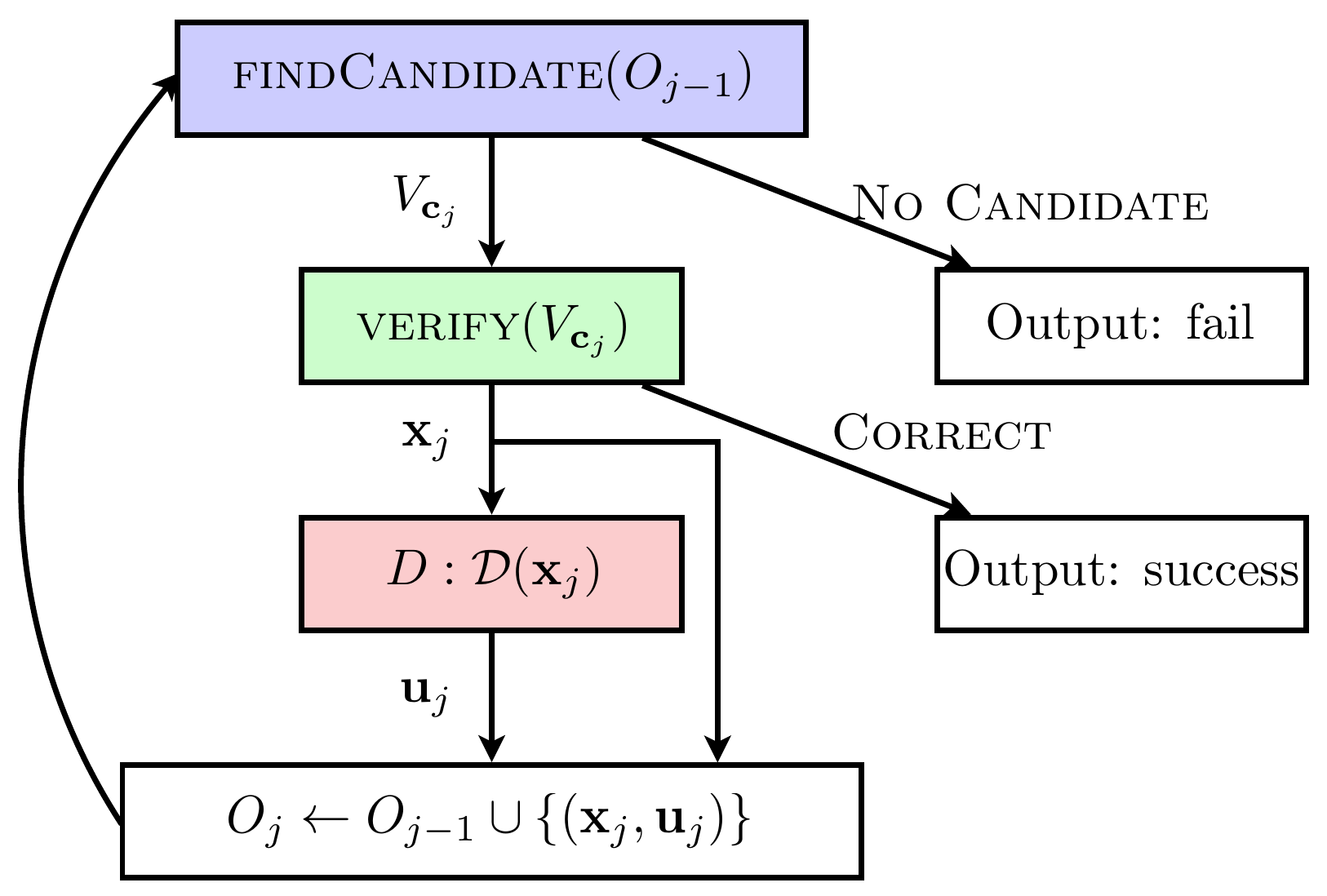}
\end{center}
\caption{Visualization of the learning framework}\label{fig:framework} 
\end{figure}

The overall procedure is shown in Fig.~\ref{fig:framework}.
The procedure starts with an empty set $O_0=\emptyset$ and the corresponding
set of compatible function parameters $C_0 : \{\vc \in C \ | \ V_\vc(\vzero) = 0 \}$.
Each iteration $j$ (starting from $j=1$) involves the following steps:
\begin{enumerate}
	\item \textsc{findCandidate}: The learner checks if there exists a $V_\vc$ compatible with $O_{j-1}$.
	\begin{enumerate}[(a)]
	\item If no such $\vc$ exists, the learner declares failure ($C_{j-1} = \emptyset$).
	\item Otherwise, a candidate $\vc_j \in C_{j-1}$ is chosen and the corresponding function $V_{\vc_j}(\vx):\vc_j.\vg(\vx)$ is considered for verification.
	\end{enumerate}
	\item \textsc{verify}: The verifier oracle tests whether $V_{\vc_j}$ is a CLF (Eq.~\eqref{eq:clf-param-def})
	\begin{enumerate}[(a)]
	\item If yes, the process terminates successfully ($V_{\vc_j}$ is a CLF)
	\item Otherwise, the oracle provides a witness $\vx_j \neq \vzero$ for the negation
	of Eq.~\eqref{eq:clf-param-def}.
	\end{enumerate}
	\item \textsc{update}: Using the demonstrator $\vu_j : \D(\vx_j)$, a new observation $(\vx_j, \vu_j)$ is added to the training set:
	\begin{equation}
	O_{j} : O_{j-1} \cup \{(\vx_j, \vu_j)\}		
	\end{equation}
	\begin{equation}\label{eq:C_j+1}
		C_{j}:\ C_{j-1} \cap \left\{ \vc \ |\ \begin{array}{c} V_{\vc}(\vx_{j}) > 0\ \land\ \\\nabla V_{\vc}\cdot f(\vx_{j}, \vu_{j}) < 0 \end{array} \right\} \,.
	\end{equation}

\end{enumerate}

\begin{theorem}\label{thm:formal-learning-thm}
  The learning framework as described above has the
  following property:
\begin{enumerate}
\item $\vc_j \not\in C_{j}$. I.e., the candidate found at the $j^{th}$ step is eliminated from further consideration.
\item If the algorithm succeeds at iteration $j$, then the output function $V_{\vc_j}$ is a valid CLF for stabilization.
\item The algorithm declares failure at iteration $j$ if and only if no linear combination of the basis functions is a CLF compatible with the demonstrator. 
\end{enumerate}
\end{theorem}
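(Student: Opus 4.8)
The plan is to verify the three claims essentially by unwinding the definitions of the sets $C_j$ and the actions taken at each step, with the only real content being a small invariant maintained across iterations. The structure is: (1) handle the elimination claim locally from the \textsc{verify}/\textsc{update} steps; (2) handle soundness directly from what the verifier certifies; (3) prove both directions of the failure characterization, where the nontrivial direction needs an invariant relating demonstrator-compatible functions to every $C_j$.

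For claim~1, I would argue as follows. If the algorithm proceeds past iteration $j$ to the \textsc{update} step, the verifier has returned a witness $\vx_j \neq \vzero$ refuting Eq.~\eqref{eq:clf-param-def} for $V_{\vc_j}$. I need to observe that a refutation of the negative-definiteness condition at $\vx_j$ is exactly a violation of $\nabla V_{\vc_j}\cdot f(\vx_j,\vu) < 0$ for \emph{every} admissible $\vu$, in particular for $\vu_j = \D(\vx_j) \in U$; and a refutation of positive-definiteness at $\vx_j$ means $V_{\vc_j}(\vx_j) \le 0$. Either way, $\vc_j$ fails at least one of the two conjuncts in the set defining $C_j$ in Eq.~\eqref{eq:C_j+1}, so $\vc_j \notin C_j$. (One should note the mild point that the verifier's witness for the $\min_{\vu} \nabla V \cdot f < 0$ condition is a point where the inequality fails for all feasible $\vu$; this is what makes the demonstrated $\vu_j$ usable to cut $\vc_j$.)

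Claim~2 is immediate: success at iteration $j$ means the \textsc{verify} step returned ``yes'', i.e.\ $V_{\vc_j}$ satisfies Eq.~\eqref{eq:clf-param-def}, which is the definition of a CLF; and by the results of Section~\ref{sec:background} (Sontag's formula~\eqref{eq:sontag} for control-affine systems, and its extensions to saturated/polytopic inputs), the existence of a CLF yields a stabilizing feedback law, solving Problem~\ref{prob:stability}. I would state this in one or two sentences, citing those results rather than reproving them.

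Claim~3 is the heart of the argument. The ``if'' direction ($C_{j-1} = \emptyset \Rightarrow$ no basis-combination CLF is demonstrator-compatible) requires the \textbf{key invariant}: for every $j$, every demonstrator-compatible $V_\vc$ lies in $C_j$. This follows by induction on $j$ using Definition~\ref{def:compatible-dem} — a demonstrator-compatible $V_\vc$ satisfies $V_\vc(\vzero)=0$ and, for the particular state $\vx_j$ and its demonstrated input $\vu_j = \D(\vx_j)$, both $V_\vc(\vx_j) > 0$ and $\nabla V_\vc \cdot f(\vx_j,\vu_j) < 0$, which are exactly the constraints added in Eq.~\eqref{eq:C_j+1}; the base case is $C_0 = \{\vc \mid V_\vc(\vzero)=0\}$. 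Hence if some basis-combination CLF were demonstrator-compatible, $C_j$ would be nonempty for all $j$ and the learner would never declare failure — contrapositive gives the claim. For the ``only if'' direction, failure at iteration $j$ means $C_{j-1}=\emptyset$, so no $V_\vc$ is compatible even with the finite observation set $O_{j-1} \subseteq X \times U$; since any demonstrator-compatible function is in particular observation-compatible with $O_{j-1}$ (its constraints are a subset), there is no demonstrator-compatible CLF in the hypothesis class — indeed no demonstrator-compatible function at all among the $V_\vc$. The main obstacle, such as it is, is stating precisely what the verifier returns as a counterexample so that the demonstrated input $\vu_j$ genuinely cuts $\vc_j$ out of $C_j$ (claim~1) while simultaneously never cutting out any demonstrator-compatible $\vc$ (the invariant for claim~3); once that is pinned down, everything else is definitional bookkeeping.
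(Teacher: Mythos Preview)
Your proposal is correct and matches the paper's approach closely; for claim~1 you are in fact more explicit than the paper about why the verifier's witness $\vx_j$ forces $\nabla V_{\vc_j}\cdot f(\vx_j,\vu_j)\ge 0$ for the \emph{particular} demonstrated input $\vu_j$ (the paper simply asserts the disjunction without the intermediate ``for all $\vu\in U$, hence for $\vu_j$'' step).

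One wrinkle on claim~3: your two paragraphs labeled ``if'' and ``only if'' actually establish the \emph{same} implication --- failure $\Rightarrow$ no demonstrator-compatible $V_\vc$ --- once via the invariant and once via observation-compatibility being weaker than demonstrator-compatibility. The paper's proof likewise argues only this one direction, so you are not missing anything the paper supplies. The genuine converse (no demonstrator-compatible $V_\vc$ in the hypothesis class $\Rightarrow$ the algorithm must declare failure) is not proved by either you or the paper, and indeed need not hold as stated: the verifier could accept a valid CLF that happens not to be compatible with the particular demonstrator $\D$, in which case the algorithm succeeds rather than fails. So your redundancy is harmless, but you should relabel the two paragraphs or collapse them into one.
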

\begin{proof}
	1) Suppose that $\vc_j \in C_{j}$. Then, $\vc_j$ satisfies the following conditions (Eq.~\eqref{eq:C_j+1}):
	\[
	V_{\vc_j}(\vx_{j}) > 0\ \land\ \nabla V_{\vc_j}\cdot f(\vx_{j}, \vu_{j}) < 0 \,.
	\]
	However, the verifier guarantees that $\vc_j$ is a counterexample for Eq.~\eqref{eq:clf-def}. I.e.,
	\[
		V_{\vc_j}(\vx_{j}) \leq 0\ \lor\ \nabla V_{\vc_j}\cdot f(\vx_{j}, \vu_{j}) \geq 0 \,,
	\]
	which is a contradiction. Therefore, $\vc_j \not\in C_{j}$.
	
	2) The algorithm declares success if the verifier could not find a counterexample. In other words, $V_{\vc_j}$ satisfies conditions of Eq.~\eqref{eq:clf-def} and therefore a CLF.
	
	3) The algorithm declares failure if $C_j = \emptyset$. On the other hand, by definition, $C_j$ yields the
	set of all $\vc$ s.t. $V_\vc$ (which is linear combination 
	of basis functions) is compatible with the
	observations $O_j$. Therefore, $C_j = \emptyset$ implies that 
	that no linear combination of the basis functions is compatible
	with the $O_j$ and therefore compatible with the demonstrator.
\end{proof}

One possible choice of basis functions involves monomials
$g_j(\vx):\ \vx^{\alpha_j}$ wherein $|\alpha_j|_1 \leq D_V$ for some
degree bound $D_V$ for the learning concept (CLF).  Inverse results
suggest polynomial basis for Lyapunov functions are expressive enough
for verification of exponentially stable, smooth nonlinear systems
over a bounded region~\cite{peet2008polynomial}. This, justifies using
polynomial basis for CLF.

In the next two section we present implementations of each of the modules involved, namely the learner and the verifier.

\section{Learner}\label{sec:learner}
Recall that the learner needs to check if there exists a $\vc$
s.t. $V_\vc$ is compatible with the observation set $O$
(Definition~\ref{def:compatible-data}). In other words, we wish to
check
\[
(\exists \vc \in \C) \ V_\vc(\vzero) = 0 \wedge \bigwedge_{(\vx_i, \vu_i) \in O} 
\left(\begin{array}{c} V_\vc(\vx_i) > 0\ \land\ \\ \nabla V_\vc \cdot f(\vx_i, \vu_i) < 0 \end{array}\right)\,.
\]
Note that each function $V_{\vc}(\vx_i) : \vc^t \cdot \vg(\vx_i)$ in
our hypothesis space, is linear in $\vc$.  Also,
$\nabla V_\vc.f(\vx_i, \vu_i)$ is linear in $\vc$:
\[
\nabla V_\vc.f(\vx_i, \vu_i) = \sum_{k=1}^r c_k \nabla g_k(\vx_i).f(\vx_i, \vu_i) \,.
\]
The (initial) space of all candidates $C$ is assumed to be a
hyper-rectangular box, and therefore a polytope. Let
$\overline{C_j}$ represent the topological closure of the set $C_j$
obtained at the $j^{th}$ iteration (see Eq.~\eqref{eq:C_j_0}).
\begin{lemma}\label{lemma:cj-convex}
For each $j \geq 0$, $\overline{C_j}$ is a polytope.
\end{lemma}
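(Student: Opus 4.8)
The plan is to unroll the recursion defining $C_j$, observe that it cuts the polytope $C_0$ by finitely many \emph{open} half-spaces, and then compute the closure by showing one may legitimately replace each strict inequality by its non-strict counterpart.

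Concretely, I would first dispose of the base case: since $V_\vc(\vx)=\sum_j c_j g_j(\vx)$, the constraint $V_\vc(\vzero)=0$ is the single linear equation $\sum_j c_j g_j(\vzero)=0$, so $C_0 = C \cap \{\,\vc : \sum_j c_j g_j(\vzero)=0\,\}$ is the box $C$ (itself a polytope) intersected with a hyperplane — a closed, bounded polyhedron, hence a polytope, with $\overline{C_0}=C_0$. Next, as already noted in the text, both $V_\vc(\vx_i)=\sum_k c_k g_k(\vx_i)$ and $\nabla V_\vc\cdot f(\vx_i,\vu_i)=\sum_k c_k\,\nabla g_k(\vx_i)\cdot f(\vx_i,\vu_i)$ are linear functionals of $\vc$; call them $\ell_i^+$ and $\ell_i^-$. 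From the definition of $C_j$ we then have $C_j = C_0 \cap \bigcap_{i=1}^{j}\{\ell_i^+ > 0\}\cap\{\ell_i^- < 0\}$, i.e. a polytope intersected with an open convex polyhedron.

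The heart of the argument is to show $\overline{C_j}$ equals $R_j := C_0 \cap \bigcap_{i=1}^{j}\{\ell_i^+\ge 0\}\cap\{\ell_i^- \le 0\}$, which is manifestly a polytope (closed; bounded since $R_j\subseteq C$; finitely many linear constraints). If $C_j=\emptyset$ this is trivial ($\overline{C_j}=\emptyset$, the failure case of the framework), so assume there is some $\vc^{*}\in C_j$. The inclusion $\overline{C_j}\subseteq R_j$ holds because $R_j$ is closed and contains $C_j$. For $R_j\subseteq \overline{C_j}$, I would take $\vc\in R_j$ and run the line-segment argument on $\vc_t := (1-t)\vc^{*} + t\vc$ for $t\in[0,1)$: these points lie in $C_0$ by convexity, and satisfy $\ell_i^+(\vc_t) = (1-t)\ell_i^+(\vc^{*}) + t\,\ell_i^+(\vc) \ge (1-t)\,\ell_i^+(\vc^{*}) > 0$ and, symmetrically, $\ell_i^-(\vc_t) < 0$ for every $i$, so $\vc_t\in C_j$; letting $t\to1$ yields $\vc\in\overline{C_j}$.

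The only real subtlety is precisely this last point: the closure of a set defined by strict linear inequalities is \emph{not} in general the set defined by the corresponding non-strict ones when the former is empty or low-dimensional in the wrong way, so the proof genuinely needs (a) the separate treatment of $C_j=\emptyset$ and (b) a witness $\vc^{*}$ that strictly satisfies all the inequalities, used to ``pull'' boundary points back into $C_j$. Everything else is routine linearity-and-convexity bookkeeping.
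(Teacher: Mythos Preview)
Your proof is correct and, in fact, more careful than the paper's. Both arguments rest on the same observation—that $V_\vc(\vx_i)$ and $\nabla V_\vc \cdot f(\vx_i,\vu_i)$ are linear in $\vc$, so each observation contributes two half-space constraints—but the paper proceeds by induction on $j$ and simply asserts that $\overline{C_j}$ is the intersection of the polytope $\overline{C_{j-1}}$ with two half-spaces, without justifying why closure commutes with the intersection here. You instead unroll the recursion, write $C_j$ explicitly as $C_0$ cut by $2j$ open half-spaces, and then prove $\overline{C_j}=R_j$ via the line-segment argument with a strict witness $\vc^*$, handling the empty case separately. Your treatment is the rigorous version of what the paper intends; the paper's inductive packaging is slightly shorter but leaves the closure step implicit.
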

\begin{proof}
	We prove by induction. Initially $C$ is an hyper-rectangular box. 
	Also, $C_0 : C \cap H_0$, where
	\[
	H_0 = \{\vc \ | \ V_\vc(\vzero) = \sum_{k=1}^r c_k g_k(\vzero) = 0 \} \,.
	\] 
	As $V_\vc$ is linear in $\vc$, $H_0 : \{\vc \ | \ \va_0^t . \vc = b_0\}$ is a hyper-plane, where $\va_0$ and $b_0$ depend on the values of, $g_k(\vzero)$ ($k = 1,\ldots,r$). And $C_0$ would be intersection of
	a polytope and a hyper-plane, which is a polytope.
	Now, assume $\overline{C_{j-1}}$ is a polytope. Recall that $C_{j}$ is defined as
	$C_{j} : C_{j-1} \cap H_{j}$ (Eq.~\eqref{eq:C_j+1}), where
	\[
	H_{j}:\  \left\{ \vc \ |\ \begin{array}{c} \sum_{k=1}^{r} (c_k \ g_k(\vx_{j})) > 0\ \land\ \\ \sum_{k=1}^{r} (c_k \ \nabla g_k(\vx_j) \cdot f(\vx_{j}, \vu_{j})) < 0 \end{array} \right\} \,.
	\]
	Notice that $f(\vx_{j}, \vu_{j})$, $g_k(\vx_i)$, and $\nabla g_k(\vx_i)$ are constants and
	\begin{align*}
	H_{j} : &H_{j1} \cap H_{j2} \\ 
	H_{j1} : &\{\vc \ | \ \va_{j1}^t . \vc > b_{j1}\} \\
	& = \{\vc | \sum_{k=1}^{r} (c_k \ g_k(\vx_{j})) > 0\} \\
	H_{j2} : &\{\vc \ | \ \va_{j2}^t . \vc > b_{j2}\}\\ & =
	\{\vc | 
	\sum_{k=1}^{r} (c_k \ \nabla g_k(\vx_j) \cdot f(\vx_{j}, \vu_{j})) < 0 \}\,.
	\end{align*}
	Therefore, $\overline{C_{j}}$ is intersection of a polytope ($\overline{C_{j-1}}$) and two half-spaces ($H_{j}$) which yields another polytope.
\end{proof}

The learner should sample a point $\vc_j \in C_{j-1}$ at $j^{th}$
iteration, which is equivalent to checking emptiness of a polytope
with some strict inequalities. This is solved using slight
modification of simplex method, using infinitesimals for strict
inequalities, or using interior point
methods~\cite{Vanderbei/2004/Linear}.  We will now demonstrate that by
choosing $\vc_j$ carefully, we can guarantee the polynomial time termination
of our learning framework.

\subsection{Termination}
Recall that in the framework, the learner provides a candidate and the
verifier refutes the candidate by a counterexample and a new
observation is generated by the demonstrator.  The following
lemma relates the sample $\vc_j \in C_{j-1}$ at the $j^{th}$ iteration
and the set $C_{j}$ in the subsequent iteration.

\begin{lemma}\label{lemma:cj-half-space}
  There exists a half-space  $H^*_{j}:\ \va^t \vc \geq b$ such that (a) $\vc_j$ lies on boundary of hyperplane $H^*_{j}$, and (b)
  $C_{j} \subseteq C_{j-1} \cap H^*_{j}$.
\end{lemma}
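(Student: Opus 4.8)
The plan is to split on which of the two conjuncts of the CLF condition~\eqref{eq:clf-param-def} the verifier's witness $\vx_j$ refutes, and in each case to obtain $H^*_j$ by taking the hyperplane supporting the new constraint that \textsc{update} adds to $C_{j-1}$ and sliding it parallel to itself until it passes through $\vc_j$. Recall the relevant bookkeeping: the learner submitted $V_{\vc_j}$ with $\vc_j \in C_{j-1}$, the verifier returned $\vx_j \neq \vzero$, and \textsc{update} set $C_j = C_{j-1}\cap\{\vc : V_\vc(\vx_j) > 0\}\cap\{\vc : \nabla V_\vc\cdot f(\vx_j,\vu_j) < 0\}$ with $\vu_j = \D(\vx_j)\in U$. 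Since the learner always maintains $V_\vc(\vzero)=0$, the refutation means that at least one of (i) $V_{\vc_j}(\vx_j) \le 0$, or (ii) $\min_{\vu\in U}\nabla V_{\vc_j}\cdot f(\vx_j,\vu) \ge 0$ holds. Both of the new constraints are linear in $\vc$: $V_\vc(\vx_j) = \vg(\vx_j)^t\vc$, and $\nabla V_\vc\cdot f(\vx_j,\vu_j) = \sum_{k=1}^r c_k\,\nabla g_k(\vx_j)\cdot f(\vx_j,\vu_j)$.

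In case (i) I take $\va := \vg(\vx_j)$ and $b := \va^t\vc_j = V_{\vc_j}(\vx_j)$, so $b \le 0$, and set $H^*_j := \{\vc : \va^t\vc \ge b\}$. Then $\va^t\vc_j = b$, so $\vc_j$ lies on the bounding hyperplane of $H^*_j$, giving (a); and every $\vc \in C_j$ satisfies $\va^t\vc = V_\vc(\vx_j) > 0 \ge b$, so $\vc \in H^*_j$, which together with $C_j \subseteq C_{j-1}$ gives (b). Case (ii) is symmetric: because $\vu_j \in U$ we have $\nabla V_{\vc_j}\cdot f(\vx_j,\vu_j) \ge \min_{\vu\in U}\nabla V_{\vc_j}\cdot f(\vx_j,\vu) \ge 0$, so letting $\va$ be the constant vector with $\va^t\vc = -\,\nabla V_\vc\cdot f(\vx_j,\vu_j)$ and $b := \va^t\vc_j \le 0$, the set $H^*_j := \{\vc : \va^t\vc \ge b\}$ again has $\vc_j$ on its bounding hyperplane, while every $\vc \in C_j$ satisfies $\va^t\vc = -\,\nabla V_\vc\cdot f(\vx_j,\vu_j) > 0 \ge b$.

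The only step that needs real care — and the main obstacle — is property (a): the inequality refuted by $\vc_j$ need not be \emph{active} at $\vc_j$ (for instance $V_{\vc_j}(\vx_j)$ may be strictly negative), so one cannot just take the topological closure of the half-space cut out of $C_{j-1}$. The remedy is the parallel shift to $b = \va^t\vc_j$, and its correctness rests precisely on the inequality $b \le 0$, which is what lets the strict defining constraint $\va^t\vc > 0$ of $C_j$ still imply the non-strict $\va^t\vc \ge b$. One degenerate subcase should be acknowledged: if $\va = \vzero$ (all $g_k(\vx_j)$ vanish in case (i), or all $\nabla g_k(\vx_j)\cdot f(\vx_j,\vu_j)$ vanish in case (ii)), then the corresponding defining inequality of $C_j$ is infeasible, so $C_j = \emptyset$ and any hyperplane through $\vc_j$ works — consistent with the framework declaring failure at the following iteration.
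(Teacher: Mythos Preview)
Your proof is correct but takes a different route from the paper. The paper's argument is purely abstract: since $\vc_j \in C_{j-1}\setminus C_j$ and $C_j$ is convex (Lemma~\ref{lemma:cj-convex}), it invokes a separating hyperplane $\hat H_j:\va^t\vc=\hat b$ between $\vc_j$ and $C_j$, then slides it to pass through $\vc_j$ by resetting $b:=\va^t\vc_j\le\hat b$. No case split is made and the specific form of the new constraints is never used. Your argument instead works concretely with the two defining half-spaces $H_{j1},H_{j2}$ added at the \textsc{update} step, identifies which one $\vc_j$ actually violates (using that $\vu_j\in U$ to handle the decrease condition), and shifts that one to pass through $\vc_j$. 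What the paper's approach buys is brevity and independence from the particular structure of the update; what yours buys is an explicit normal vector $\va$ tied to the counterexample $\vx_j$, which is exactly the picture the paper later draws in Section~\ref{sec:counterexample-selection} when discussing how the choice of $\vx_j$ affects $H_{j1},H_{j2}$. Your handling of the degenerate case $\va=\vzero$ is also a nice touch that the paper's abstract argument sidesteps.
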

\begin{proof}
    Recall that we have $\vc_j \in C_{j-1}$ but $\vc_j \not\in C_{j}$ by Theorem~\ref{thm:formal-learning-thm}.
  Let $\hat{H}_j: \va^t \vc = \hat{b}$ be a separating hyperplane between the (convex) set $C_{j}$ and the point $\vc_j$, such that $C_{j} \subseteq \{ \vc\ |\ \va^t \vc \geq \hat{b}\}$. By setting the offset $b:\ \va^t \vc_j$,
  we note that $b \leq \hat{b}$. Therefore, by defining $H^*_{j}$ as $\va^t \vc \geq b$, we obtain
  the required half-space that satisfies conditions (a) and (b).
\end{proof}

While sampling a point from $C_{j-1}$ is solved by solving a linear programming problem, Lemma.~\ref{lemma:cj-half-space} suggests that the choice
of $\vc_j$ governs the convergence of the algorithm. Figure.~\ref{fig:learning-iteration}
demonstrates the importance of this choice by showing candidate $\vc_j$, hyperplanes $H_{j1}$
and $H_{j2}$ and $C_{j}$.

For a faster termination, we wish to remove a ``large portion'' of
$C_{j-1}$ to obtain a ``smaller'' $C_{j}$.  There are two important
factors which affect this: (i) counterexample $\vx_j$ selection and
(ii) candidate $\vc_j$ selection.  Counterexample $\vx_j$, would
affect $\vu_j: \D(\vx_j)$, $g(\vx_j)$, and $f(\vx_j, \vu_j)$ and
therefore defines the hyper-planes $H_{j1}$ and $H_{j2}$. On the other
hand, candidate  $\vc_j \not\in C_{j}$. We
postpone discussion on the counterexample selection to the next
section, and for the rest of this section we focus on different
techniques to generate a candidate $\vc_j \in C_{j-1}$. 

The goal is to find a $\vc_j$ s.t. 
\begin{equation} \label{eq:volume-reduction}
	\Vol(C_{j}) \leq \alpha \Vol(C_{j-1}) \,,
\end{equation}
for each iteration $j$ and a fixed constant $0 \leq \alpha < 1$, independent of the hyperplanes $H_{j1}$ and $H_{j2}$. Here $\Vol(C_j)$ represents the
  volume of the (closure) of the set $C_j$. Since closure of $C_j$ is
  contained in $C$ which happens to be compact, this volume will
  always be finite. Note that if we can guarantee Eq.~\eqref{eq:volume-reduction},
it immediately follows that $\Vol(C_j) \leq \alpha^j \Vol(C_0)$. This implies
that the volume of the remaining candidates ``vanishes'' rapidly.

\begin{remark} By referring to $\Vol(C_j)$, we are implicitly
    assuming that $C_j$ is not embedded inside a subspace of
    $\reals^r$, i.e., it is full-dimensional. However, this assumption
    is not strictly true. Specifically, $C_0 : C \cap H_0$, where
    $H_0$ is a hyper-plane. Thus, strictly speaking, the volume of
    $C_0$ in $\reals^r$ is $0$. This issue is easily addressed by
    first factoring out the linearity space of $C_0$, i.e., the affine
    hull of $C_0$. This is performed by using the equality constraints
    that describe the affine hull to eliminate variables from
    $C_0$. Subsequently, $C_0$ can be treated as a full dimensional
    polytope in $\reals^{r-d_j}$, wherein $d_j$ is the dimension of
    its linearity space.
  
Furthermore, since $C_{j} \subseteq C_0$, we can continue
    to express $C_{j} $ inside $\reals^{r-d_j}$ using the same basis
    vectors as $C_0$.  A further complication arises if $C_{j}$ is
    embedded inside a smaller subspace. We do not treat this case in
    our analysis. However, note that this can happen for at most $r$
    iterations and thus, does not pose a problem for the termination
    analysis.
\end{remark}

Intuitively, it is clear from Figure~\ref{fig:learning-iteration} that 
a candidate at the \emph{center} of $C_{j-1}$ would be a good
one. We now relate the choice of $\vc_j$ to an appropriate
  definition of center, so that Eq.~\eqref{eq:volume-reduction} is satisfied.

\begin{figure}[t]
\begin{center}
\begin{tikzpicture}[scale=0.8]
\draw[draw=black, line width = 1.5pt, fill=green!20](0,0) -- (2,0.5) -- (2,2.5) -- (0,4) -- (-1,1) -- cycle;
\draw[fill=red!20] (0.5,1.45) circle (0.1);
\draw[draw=blue!30, line width=1.5pt, pattern = north west lines, pattern color=black] (0,0) -- (2,0.5) -- (2,1.1) -- (1.2,1.9) -- ( -0.35,0.35) -- cycle;
\draw[draw=black, line width=1.5pt, dashed](-0.7,0.0) -- (2.3, 3.0);
\draw[draw=black, line width=1.5pt, dashed](2.3,0.8) -- (-1, 4.1);
\node at (0.22,1.45) {$\vc_j$};
\node at (0.3,2.3) {$C_{j-1}$};
\node at (0.8,0.6) {$C_{j}$};
\node at (-1.,3.5){$H_{j1}$};
\node at (2.8,2.8){$H_{j2}$};
\end{tikzpicture}
\end{center}
\caption{Search space: Original candidate region $C_j$ (green) at the start of the
  $j^{th}$ iteration, the candidate $\vc_j$, and the new region
  $C_{j+1}$ (hatched region with blue lines).}\label{fig:learning-iteration}
\end{figure}
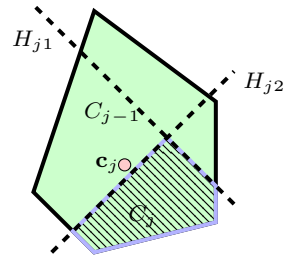

\paragraph{Center of Maximum Volume Ellipsoid}
Maximum volume ellipsoid (MVE) inscribed inside a polytope is unique with many useful
characteristics.
\begin{theorem}[Tarasov et al.\cite{tarasov1988method}] \label{thm:mve}
Let $\vc_j$ be chosen as the center of the MVE inscribed in $C_{j-1}$. Then,
\[ \Vol\left(C_{j}\right) \leq \left(1-\frac{1}{r}\right) \Vol\left(C_{j-1}\right) \,.\]
\end{theorem}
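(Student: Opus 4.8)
The plan is to strip off the combinatorial content using the lemmas already proved, normalize the geometry by an affine change of coordinates, and then reduce the whole statement to a single geometric inequality about how the center of a maximum‑volume inscribed ellipsoid (MVE) splits a convex body; the outer containment supplied by John's theorem then furnishes the quantitative control.

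First I would invoke Lemma~\ref{lemma:cj-half-space}: since $\vc_j \in C_{j-1}$ was taken to be the MVE center but $\vc_j \notin C_j$, there is a half-space $H^*_j = \{\vc \mid \va^t\vc \ge b\}$ whose bounding hyperplane passes through $\vc_j$ and with $C_j \subseteq C_{j-1}\cap H^*_j$. Hence it suffices to prove $\Vol\bigl(C_{j-1}\cap H^*_j\bigr) \le \bigl(1-\tfrac1r\bigr)\Vol(C_{j-1})$. Next I would normalize. Volume ratios, the notion of ``ellipsoid'', the MVE (which is unique, hence affinely equivariant), and ``hyperplane through a prescribed point'' are all invariant under invertible affine maps, so without loss of generality the MVE of $C_{j-1}$ is the Euclidean unit ball $B$ centered at the origin and $\vc_j = \vzero$. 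Then $B \subseteq C_{j-1}$, and John's theorem for general (not necessarily symmetric) convex bodies gives the outer containment $C_{j-1} \subseteq rB$, i.e.\ the $r$-fold dilation of $B$ about the origin contains $C_{j-1}$. Rotating coordinates I may also take $H^*_j = \{x_1 \ge 0\}$.

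With $K := C_{j-1}$, $K^+ := K\cap\{x_1\ge0\}$ and $K^- := K\cap\{x_1\le0\}$, the claim is now the self-contained inequality
\[ \Vol(K^+) \;\le\; (r-1)\,\Vol(K^-), \qquad\text{given } B \subseteq K \subseteq rB , \]
equivalently, a cut through the MVE center never leaves less than a $1/r$ fraction of the volume on either side. To prove this I would lean on the \emph{maximality} of the inscribed ball, not merely on $K \subseteq rB$. Writing $A(t) = \Vol_{r-1}\bigl(K\cap\{x_1=t\}\bigr)$, Brunn--Minkowski makes $A^{1/(r-1)}$ concave on the interval where it is positive, while $B \subseteq K$ forces that interval to contain $[-1,1]$ and $A(0)$ to be at least the volume of the unit $(r-1)$-ball. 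If $\Vol(K^+) = \int_{t\ge 0} A$ exceeded $\bigl(1-\tfrac1r\bigr)\Vol(K)$, the slice profile would be so skewed toward $t>0$ that one could translate $B$ slightly in the $+e_1$ direction and rescale it --- stretching along $e_1$ and along the hyperplane directions into the now-wider cross sections --- to obtain an ellipsoid still contained in $K$ but of strictly larger volume, contradicting that $B$ is the MVE.

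The main obstacle is precisely this last perturbation argument: one must choose the translation amount and the rescaling to keep the ellipsoid inside $K$ using only the concavity of $A^{1/(r-1)}$ and the containments, and then optimize so that the volume gain is strictly positive exactly when $\Vol(K^+) > (1-\tfrac1r)\Vol(K)$. Getting the \emph{sharp} constant $1-\tfrac1r$ out of this (rather than a weaker bound such as $1-\tfrac1{r+1}$, which cruder arguments readily give) is the delicate bookkeeping, and it is where the MVE optimality conditions genuinely enter. An essentially equivalent route is to construct an explicit volume-nonincreasing map of $K^+$ into $K^-$ from the radial structure of $K$ about the origin. Either way, this step is the geometric heart of the method of inscribed ellipsoids, and if the calculation proves too heavy I would at that point simply appeal to the corresponding lemma of Tarasov et al.~\cite{tarasov1988method}.
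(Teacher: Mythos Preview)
The paper does not supply a proof of this theorem: it is stated with attribution to Tarasov et al.\ and immediately used as a black box. So there is no in-paper argument to compare your proposal against.

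Your reduction is sound: invoking Lemma~\ref{lemma:cj-half-space} to replace $C_j$ by $C_{j-1}\cap H^*_j$, then affinely normalizing so that the MVE is the unit ball $B$ centered at $\vzero$, and recording the John-type outer containment $C_{j-1}\subseteq rB$, is exactly the right setup. You are also correct that the sandwich $B\subseteq K\subseteq rB$ alone cannot yield the claim and that the optimality of $B$ as the MVE must enter.

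However, your proposal ultimately concedes the hard step --- the actual volume inequality after normalization --- by saying you would ``simply appeal to the corresponding lemma of Tarasov et al.'' if the perturbation bookkeeping proves too heavy. That is precisely what the paper already does. So what you have written is not an independent proof but the paper's citation, prefaced by a correct (and, in the paper, implicit) reduction. If you want a self-contained argument, the Brunn--Minkowski perturbation you sketch must be carried out in full, and you should be cautious about the sharp constant: for $r=2$ the MVE center of a triangle is its centroid, and a line through it parallel to a side leaves a $5/9$ fraction on one side, which already exceeds $1-\tfrac12$. So verify the precise statement in the original reference before investing effort in proving the bound exactly as written.
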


\begin{figure}[t]
\begin{center}
	\includegraphics[width=0.4\textwidth]{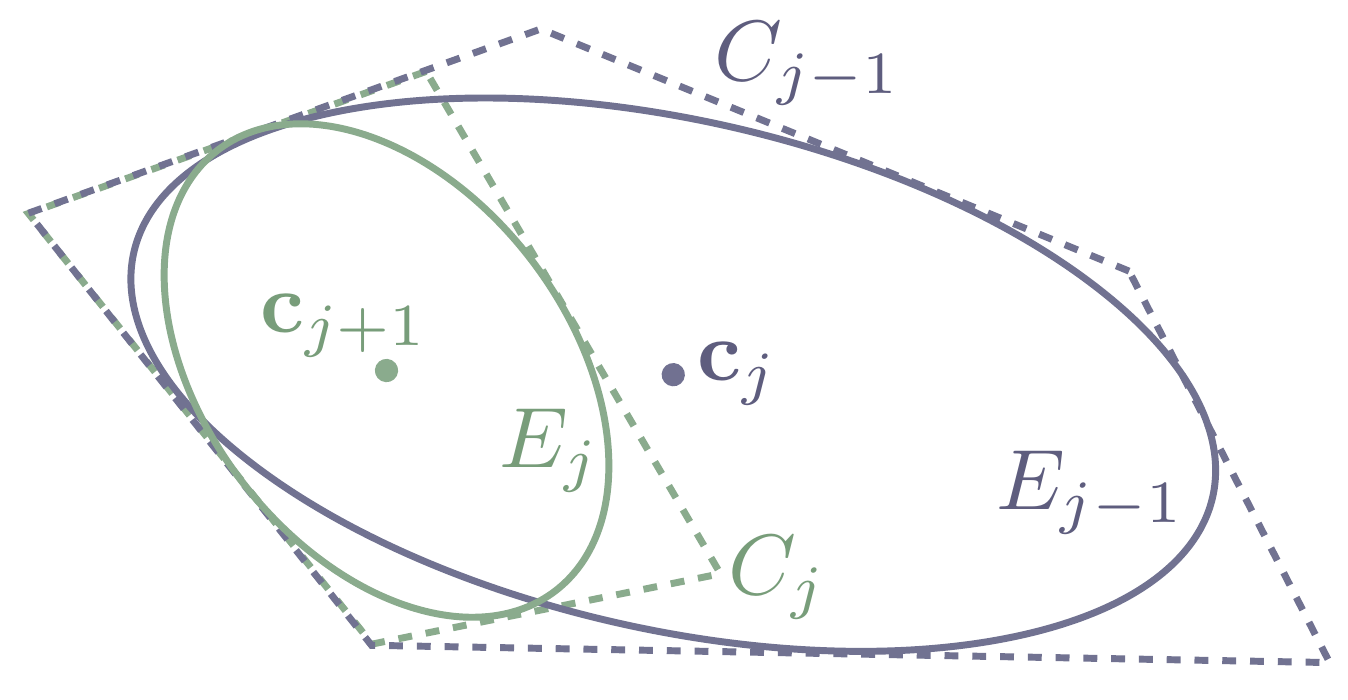}
\end{center}
\caption{Search Space: Original candidate region $C_{j-1}$ ($C_{j}$) is shown in blue (green) polygon.
The maximum volume ellipsoid $E_{j-1}$ ($E_{j}$) is inscribed in $C_{j-1}$ ($C_{j}$) and
its center is the candidate $\vc_j$ ($\vc_{j+1}$).}
\label{fig:ellipsoid}
\end{figure}

Recall, here that $r$ is the number of basis functions such that $C_{j-1} \subseteq \reals^r$. This leads us to a scheme that guarantees termination
of the overall procedure in finitely many steps under 
some assumptions. The idea is simple. Select the center of the MVE inscribed in $C_{j-1}$ at each iteration (Fig.~\ref{fig:ellipsoid}).

Let $C \subseteq (-\Delta, \Delta)^r$ for $\Delta >
0$. Furthermore, let us additionally terminate the procedure
  as having failed whenever the $\Vol(C_j) < (2\delta)^r$ for
  some arbitrarily small $\delta > 0$. This additional termination
  condition is easily justified when one considers the precision
  limits of floating point numbers and sets of small volumes. Clearly,
  as the volume of the sets $C_j$ decreases exponentially, each point
  inside the set will be quite close to one that is outside, requiring
  high precision arithmetic to represent and sample from the sets
  $C_j$. 

\begin{theorem} \label{thm:termination}
If at each step $\vc_j$ is chosen as the center of the MVE in $C_{j-1}$, the learning loop terminates in at most
\[ \frac{r (\log(\Delta) - \log(\delta))}{- \log\left(1 - \frac{1}{r}\right) } = O(r^2) \ \mbox{iterations}\,.\]
\end{theorem}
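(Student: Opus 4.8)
The plan is to combine the per-iteration volume contraction from Theorem~\ref{thm:mve} with the two hard bounds on the volume: the initial upper bound coming from $C\subseteq(-\Delta,\Delta)^r$, and the lower threshold $(2\delta)^r$ below which the procedure is declared to have failed. First I would observe that, by Theorem~\ref{thm:mve}, choosing $\vc_j$ as the center of the maximum-volume inscribed ellipsoid of $C_{j-1}$ gives $\Vol(C_j)\le(1-\tfrac1r)\Vol(C_{j-1})$ at every iteration that actually executes, and hence by induction $\Vol(C_j)\le(1-\tfrac1r)^j\,\Vol(C_0)$. Since $C_0\subseteq C\subseteq(-\Delta,\Delta)^r$ (after factoring out the linearity space, as discussed in the preceding remark, so that we are genuinely working in the relevant full-dimensional ambient space and $r$ is the right exponent), we have $\Vol(C_0)\le(2\Delta)^r$, giving $\Vol(C_j)\le(1-\tfrac1r)^j(2\Delta)^r$.

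Next I would argue about termination. The loop can only continue past iteration $j$ if none of the three stopping conditions has fired; in particular the added failure condition requires $\Vol(C_j)\ge(2\delta)^r$. Combining with the upper bound, a necessary condition for the loop to still be running after $j$ iterations is
\[
(2\delta)^r \;\le\; \Bigl(1-\tfrac1r\Bigr)^j (2\Delta)^r \,.
\]
Taking logarithms and rearranging, this forces
\[
j \;\le\; \frac{r\bigl(\log(2\Delta)-\log(2\delta)\bigr)}{-\log\!\bigl(1-\tfrac1r\bigr)}
      \;=\; \frac{r\bigl(\log\Delta-\log\delta\bigr)}{-\log\!\bigl(1-\tfrac1r\bigr)}\,,
\]
the $\log 2$ terms cancelling. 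Hence once $j$ exceeds this value the failure condition must have triggered (or the loop terminated earlier via success or $C_{j}=\emptyset$), which gives the stated iteration bound.

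Finally I would convert this into the $O(r^2)$ claim by the standard estimate $-\log(1-\tfrac1r)\ge\tfrac1r$ (equivalently $1-\tfrac1r\le e^{-1/r}$), so that $\tfrac{1}{-\log(1-1/r)}\le r$ and the bound is at most $r^2(\log\Delta-\log\delta)$, i.e.\ $O(r^2)$ with the constant depending on the fixed quantities $\Delta,\delta$. I expect the only real subtlety — and the step I would be most careful about — is the dimension bookkeeping: Theorem~\ref{thm:mve} contracts volume by $1-\tfrac1r$ only when the ellipsoid lives in an $r$-dimensional space, whereas $C_0$ (and possibly later $C_j$) may drop into a lower-dimensional affine subspace. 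The remark preceding the theorem already tells us to factor out the linearity space and that further dimension drops can happen at most $r$ times; I would invoke exactly that to justify applying the contraction factor with the correct dimension throughout, noting that the at-most-$r$ extra ``collapse'' iterations are absorbed into the $O(r^2)$ bound and do not affect the asymptotics.
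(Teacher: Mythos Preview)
Your proposal is correct and follows essentially the same route as the paper: iterate the $(1-\tfrac{1}{r})$ volume contraction from Theorem~\ref{thm:mve}, bound $\Vol(C_0)$ by $(2\Delta)^r$, compare against the $(2\delta)^r$ threshold, take logarithms (with the $\log 2$ terms cancelling), and use $-\log(1-\tfrac{1}{r})=\Omega(\tfrac{1}{r})$ for the $O(r^2)$ asymptotic. If anything, your treatment of the asymptotic via the explicit inequality $1-\tfrac{1}{r}\le e^{-1/r}$ and your attention to the dimension bookkeeping are slightly more careful than the paper's own proof, which simply appeals to a Taylor expansion and does not revisit the linearity-space remark.
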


\begin{proof}
	Initially, $\Vol(C_0) <  (2\Delta)^r$. Then by Theorem~\ref{thm:mve}
	\begin{align*}
	&\Vol(C_j) \leq (1 - \frac{1}{r})^j \ \Vol(C_0) < (1 - \frac{1}{r})^j (2\Delta)^r \\
	\implies & \log\left(\frac{\Vol(C_j)}{(2\Delta)^r}\right) < j \  \log(1-\frac{1}{r})\,.
	\end{align*}
	After $k = \frac{r(\log(\Delta)-\log(\delta))}{-\log(1-\frac{1}{r})}$ iterations:
	\begin{equation*}
		\log\left(\frac{\Vol(C_j)}{(2\Delta)^r}\right) < \frac{r(\log(\Delta)-\log(\delta))}{-\log(1-\frac{1}{r})} \  \log(1-\frac{1}{r})\,,
	\end{equation*}
	and
	\begin{align*}
		\implies & \log\left(\frac{\Vol(C_j)}{(2\Delta)^r}\right) < r \log\left(\frac{\delta}{\Delta}\right) \\
		\implies & \log\left(\frac{\Vol(C_j)}{(2\Delta)^r}\right) < r \log\left(\frac{2\delta}{2\Delta}\right) \\
		\implies & \log(\Vol(C_k)) < \log((2\delta)^r)\,.
	\end{align*}	
	And it is concluded that $\Vol(C_k) < (2\delta)^r$, which is the termination condition. And asymptotically $-\log(1 - \frac{1}{r})$ is $\Omega(\frac{1}{r})$ (can be shown using Taylor expansion as $r \rightarrow \infty$) and therefore, the maximum number of iterations would be $O(r^2)$.
\end{proof}

However, checking the termination condition is computationally
expensive as calculating the volume of a polytope is
$\sharp P$ hard, i.e., as hard as counting the number of
  solutions to a SAT problem. One solution is to first calculate an
upper bound on the number of iterations using
Theorem~\ref{thm:termination}, and stop if the number of iterations
has exceeded the upper-bound.

A better approach is to consider some robustness for the candidate.
\begin{definition}[Robust Compatibility]
A candidate $\vc$ is $\delta$-robust for $\delta > 0$ w.r.t. 
observations (demonstrator), iff for each $\hat{\vc} \in \B_{\delta}(\vc)$, 
$V_{\hat{\vc}}:\hat{\vc}^t \cdot \vg(\vx)$ is
compatible with observations (demonstrator) as well. 
\end{definition}

Let $E_j$ be the MVE inscribed inside $C_j$ (Fig.~\ref{fig:ellipsoid}).
Following the robustness assumption, it is sufficient to terminate
the procedure whenever:
\begin{equation}\label{eq:termin-cond-2}
\Vol(E_j) < \gamma \delta^r\,,
\end{equation}
where $\gamma$ is the volume of $r$-ball with radius $1$.
\begin{theorem}[\cite{tarasov1988method,khachiyan1990inequality}] \label{thm:mve-2}
Let $\vc_j$ be chosen as the center of $E_{j-1}$. Then,
\[ \Vol(E_{j}) \leq \left(\frac{8}{9}\right) \Vol\left(E_{j-1}\right) \,.\] 
\end{theorem}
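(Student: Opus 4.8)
The plan is to reduce the assertion to the dimension-free central-cut inequality for maximum-volume inscribed ellipsoids of Tarasov et al.~\cite{tarasov1988method} and Khachiyan~\cite{khachiyan1990inequality}; everything particular to our construction amounts to checking that the hyperplane cutting $C_{j-1}$ down to $C_j$ may be taken through the \emph{center} of the current maximum-volume ellipsoid $E_{j-1}$.

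First I would use affine invariance: ratios of volumes and the maximum-volume inscribed ellipsoid (MVE) are both covariant under invertible affine maps, so after applying the affine map carrying $E_{j-1}$ onto the unit ball I may assume $E_{j-1} = \B_1(\vzero)$ with center $\vc_j = \vzero$. Next I invoke Lemma~\ref{lemma:cj-half-space}: since $\vc_j \in C_{j-1}$ but $\vc_j \notin C_j$ (by Theorem~\ref{thm:formal-learning-thm}), there is a half-space $H^*_j$ with $\vc_j$ on its bounding hyperplane and $C_j \subseteq C_{j-1}\cap H^*_j$; under the normalization this is a half-space whose boundary passes through the origin, i.e.\ through the center of the MVE of $C_{j-1}$. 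Finally, $E_j$ is one particular ellipsoid inscribed in $C_j \subseteq C_{j-1}\cap H^*_j$, while the MVE is by definition the largest-volume inscribed ellipsoid, so $\Vol(E_j) \le \Vol\big(\mathrm{MVE}(C_{j-1}\cap H^*_j)\big)$. At this point the claim follows from the bound that cutting a convex body by a hyperplane through the center of its MVE shrinks the MVE volume by at least the factor $8/9$, independently of the ambient dimension --- the inequality of~\cite{khachiyan1990inequality} (see also~\cite{tarasov1988method}) --- applied to the body $C_{j-1}$ and the half-space $H^*_j$.

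If instead a self-contained derivation of that last bound is wanted, the route I would take is through the classical (John) optimality conditions for the MVE: with $E_{j-1}=\B_1(\vzero)$ there are contact points $u_1,\dots,u_m$ on the unit sphere and weights $\lambda_i>0$ with $\sum_i \lambda_i u_i = \vzero$ and $\sum_i \lambda_i u_i u_i^\top = I$, so that $C_{j-1} \subseteq \bigcap_i \{\vc : u_i^\top \vc \le 1\}$. One checks that this circumscribing polytope again has $\B_1(\vzero)$ as its MVE, so replacing $C_{j-1}$ by it loses nothing, and the problem becomes the finite, dimension-free optimization: over centers $p\in\reals^r$ and symmetric positive-definite shape matrices $A$ of a candidate inscribed ellipsoid $E' = p + A\,\B_1(\vzero)$, maximize $\det A$ subject to the support-function inequalities $u_i^\top p + \|A u_i\| \le 1$ for all $i$ together with (after a rotation making the cut $\{x_1\le 0\}$) the central-cut inequality $e_1^\top p + \|A e_1\| \le 0$. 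Carrying the maximization out, the extremal configuration is essentially two-dimensional --- a ball in the directions along the cut hyperplane combined with a compression transverse to it --- and the optimal value works out to $\det A = 8/9$.

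I expect the genuine obstacle to be exactly this dimension-free constant. The naive manipulations --- summing the inequalities $u_i^\top p + \|A u_i\|\le 1$ against the weights $\lambda_i$ and using $\sum_i \lambda_i u_i = \vzero$ --- only recover $\sum_i \lambda_i \|A u_i\| \le r$, which together with the John structure does no better than the trivial monotonicity bound $\Vol(E_j) \le \Vol(E_{j-1})$. The improvement to $8/9$ has to come from coupling the half-space inequality with the John identities (it cannot come from either ingredient alone, since $E'=\B_1(\vzero)$ already saturates all the $u_i$-inequalities with $\det A = 1$), and making that coupling quantitative and tight is the delicate computation handled in the cited references. The remaining ingredients --- the affine reduction, Lemma~\ref{lemma:cj-half-space}, and monotonicity of the MVE volume under inclusion --- are routine.
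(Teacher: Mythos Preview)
The paper does not prove this theorem at all: it is stated with the attributions \cite{tarasov1988method,khachiyan1990inequality} and immediately used, with no argument supplied. Your proposal is therefore strictly more than what the paper does.

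That said, your reduction is the correct one and is exactly what is needed to pass from the paper's setting to the cited literature result. The Tarasov--Khachiyan inequality concerns a single central cut, whereas in the paper $C_j = C_{j-1}\cap H_{j1}\cap H_{j2}$ is obtained from \emph{two} half-spaces, neither of which need pass through $\vc_j$; invoking Lemma~\ref{lemma:cj-half-space} to produce a single half-space $H^*_j$ through $\vc_j$ with $C_j\subseteq C_{j-1}\cap H^*_j$, followed by monotonicity of the MVE volume under inclusion, is precisely the bridge. The paper leaves this implicit in the citation; you have made it explicit. Your sketch of the self-contained derivation of the $8/9$ constant via the John optimality conditions is also on the right track, and your caveat that the nontrivial content lies entirely in extracting that dimension-free constant is accurate.
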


\begin{theorem} \label{thm:termination2}
If at each step $\vc_j$ is chosen as the center of $E_{j-1}$, the learning loop condition defined by Eq.~\eqref{eq:termin-cond-2} is violated in at most
\[ \frac{r (\log(\Delta) - \log(\delta))}{- \log\left(\frac{8}{9}\right) } = O(r) \ \mbox{iterations}\,.\]
\end{theorem}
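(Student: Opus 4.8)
The plan is to reprise the argument of Theorem~\ref{thm:termination}, but to track the volume of the inscribed maximum-volume ellipsoid $E_j$ of $C_j$ rather than the volume of $C_j$ itself, so that the per-iteration contraction factor becomes the dimension-independent constant $8/9$ of Theorem~\ref{thm:mve-2} instead of the dimension-dependent $1-\tfrac1r$ of Theorem~\ref{thm:mve}. As in the remark preceding the theorem, I first pass to the affine hull of $C_0$ (forced to be a proper subspace of $\reals^r$ by the equality $V_{\vc}(\vzero)=0$) so that all the sets are full-dimensional and $E_j$, $\Vol(\cdot)$ are well defined in the appropriate ambient dimension.

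Then I would establish the geometric decay $\Vol(E_j) \le (8/9)^j\,\Vol(E_0)$ for all $j$. At iteration $j$ the learner picks $\vc_j$ as the center of $E_{j-1}$, and Lemma~\ref{lemma:cj-half-space} gives a half-space $H^*_j$ with $\vc_j$ on its bounding hyperplane and $C_j \subseteq C_{j-1}\cap H^*_j$. The maximum-volume inscribed ellipsoid is monotone under set inclusion --- the MVE of a subset is itself a feasible inscribed ellipsoid of the superset --- so $\Vol(E_j) \le \Vol\big(\mathrm{MVE}(C_{j-1}\cap H^*_j)\big)$, and since the cut passes through the center of $E_{j-1}$, Theorem~\ref{thm:mve-2} bounds the right-hand side by $\tfrac89\,\Vol(E_{j-1})$. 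Induction on $j$ then yields the claim.

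For the endpoints: $E_0 \subseteq C_0 \subseteq C \subseteq (-\Delta,\Delta)^r$, and the largest-volume ellipsoid inside a box of half-width $\Delta$ is its inscribed ball, of volume $\gamma\Delta^r$; hence $\Vol(E_0) < \gamma\Delta^r$. The loop stops once $\Vol(E_j) < \gamma\delta^r$ (Eq.~\eqref{eq:termin-cond-2}). Combining with the decay bound, it suffices that $(8/9)^j\gamma\Delta^r \le \gamma\delta^r$, i.e. $(8/9)^j \le (\delta/\Delta)^r$; taking logarithms and dividing by the negative quantity $\log(8/9)$ gives $j \ge \frac{r(\log\Delta - \log\delta)}{-\log(8/9)}$, which is exactly the stated count. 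Because $-\log(8/9) = \log(9/8)$ is a fixed positive constant, this bound is $O(r)$ --- the improvement over the $O(r^2)$ of Theorem~\ref{thm:termination} coming precisely from replacing the factor $1-\tfrac1r$ (whose logarithm is $\Theta(1/r)$) by a constant.

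The arithmetic is routine and essentially identical to the proof of Theorem~\ref{thm:termination}. The one point that needs care is the per-iteration contraction: one must not conflate $E_j$, the MVE of $C_j$, with the MVE of $C_{j-1}\cap H^*_j$, so the chain of inequalities must explicitly use $C_j\subseteq C_{j-1}\cap H^*_j$ and monotonicity of the MVE volume before Theorem~\ref{thm:mve-2} can be applied; and, as noted above, one should be careful that the ambient dimension used for $E_0$ and for $\gamma\Delta^r$ is the dimension of the affine hull of $C_0$, consistently with the remark preceding the theorem.
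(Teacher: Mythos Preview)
Your proposal is correct and follows essentially the same approach as the paper's proof: bound $\Vol(E_0)<\gamma\Delta^r$ via the inscribed ball of the box, apply the per-iteration contraction $\Vol(E_j)\le\tfrac{8}{9}\Vol(E_{j-1})$ from Theorem~\ref{thm:mve-2}, and solve the resulting logarithmic inequality. You are in fact more careful than the paper in two places---explicitly invoking Lemma~\ref{lemma:cj-half-space} together with monotonicity of the MVE volume to justify the contraction (the paper simply cites Theorem~\ref{thm:mve-2} as a black box), and explicitly handling the affine-hull dimensionality issue---but the argument is otherwise identical.
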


\begin{proof}
	Initially, $\B_{\Delta}(\vzero)$ is the MVE inside box $[-\Delta, \Delta]^r$ and therefore, $\Vol(E_0) < \gamma \Delta^r$. 
	Then by Theorem~\ref{thm:mve}
	\begin{align*}
	&\Vol(E_j) \leq (\frac{8}{9})^j \ \Vol(E_0) < (\frac{8}{9})^j \gamma \Delta^r \\
	\implies & \log(\Vol(E_j)) - \log(\gamma \Delta^r) < j \  \log(\frac{8}{9}) \,.
	\end{align*}
	After $k = \frac{r(\log(\Delta)-\log(\delta))}{-\log(\frac{8}{9})}$ iterations:
	\begin{equation*}
		\log(\Vol(E_k)) - \log(\gamma \Delta^r) < \frac{r(\log(\Delta)-\log(\delta))}{-\log(\frac{8}{9})} \  \log(\frac{8}{9}) \,,
	\end{equation*}
	and
	\begin{align*}
		\implies & \log(\Vol(E_k)) - \log(\gamma \Delta^r) < r(\log(\delta) - \log(\Delta)) \\
		\implies & \log(\Vol(E_k)) - \log(\gamma \Delta^r) < \log(\gamma \delta^r) - \log(\gamma \Delta^r) \\
		\implies & \log(\Vol(E_k)) < \log(\gamma \delta^r)\,.
	\end{align*}	
	It is concluded that $\Vol(E_k) < \gamma \delta^r$, which is the termination condition. And asymptotically the maximum number of iterations would be $O(r)$.
\end{proof}

Volume of an ellipsoid is effectively computable and thus, such termination 
condition can be checked easily. Also, the convergence rate is linear in $r$
as opposed to $r^2$, when the robustness is not guaranteed.

\begin{theorem}\label{thm:clf-or-no-robust-solution}
	The learning framework either finds a control Lyapunov functions or proves that no linear combination of
	the basis function would yield a function with robust compatibility 
	with the demonstrator.
\end{theorem}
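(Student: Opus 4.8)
The plan is to establish Theorem~\ref{thm:clf-or-no-robust-solution} as an essentially immediate consequence of the termination analysis (Theorem~\ref{thm:termination2}) combined with the correctness properties of the framework (Theorem~\ref{thm:formal-learning-thm}). First I would observe that, by Theorem~\ref{thm:termination2}, when $\vc_j$ is chosen as the center of the MVE $E_{j-1}$, the loop must halt in at most $O(r)$ iterations: either it halts because the verifier accepts a candidate, or because \textsc{findCandidate} reports $C_{j-1} = \emptyset$, or because the robustness-based termination test in Eq.~\eqref{eq:termin-cond-2}, namely $\Vol(E_j) < \gamma \delta^r$, is triggered. So the proof reduces to showing that in each of the three exit branches we either have a genuine CLF in hand, or we may legitimately conclude that no linear combination of the basis functions is $\delta$-robustly compatible with the demonstrator.

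The first branch is handled by part~(2) of Theorem~\ref{thm:formal-learning-thm}: if the verifier accepts, $V_{\vc_j}$ satisfies the CLF conditions of Eq.~\eqref{eq:clf-def}, hence it is a valid CLF and we are done. The second branch ($C_{j-1}=\emptyset$) is handled by part~(3) of Theorem~\ref{thm:formal-learning-thm}: emptiness of the compatible set means no linear combination of the basis functions is compatible even with the finitely many observations collected so far, hence a fortiori none is compatible with the demonstrator, and certainly none is $\delta$-robustly compatible. The substantive branch is the third one: the loop exits because $\Vol(E_j) < \gamma\delta^r$ while $C_j$ is still nonempty. Here I would argue by contradiction. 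Suppose some $\vc^\ast$ (a linear combination of the basis functions) were $\delta$-robustly compatible with the demonstrator. Then every $\hat{\vc}\in\B_\delta(\vc^\ast)$ is compatible with the demonstrator, hence in particular compatible with each observation $(\vx_i,\vu_i)\in O_j$, so $\B_\delta(\vc^\ast)\subseteq \overline{C_j}$ (working, as in the preceding remark, inside the affine hull of $C_0$ so that volumes are taken in the appropriate $\reals^{r-d}$). A ball of radius $\delta$ has volume $\gamma\delta^r$ (in the reduced dimension $\gamma\delta^{r-d}$, matching the convention used for $E_j$), and since $E_j$ is the \emph{maximum}-volume inscribed ellipsoid of $C_j$, we get $\Vol(E_j)\ge \Vol(\B_\delta(\vc^\ast)) = \gamma\delta^r$, contradicting the exit condition $\Vol(E_j)<\gamma\delta^r$. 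Hence no such $\vc^\ast$ exists, i.e., no linear combination of the basis functions is $\delta$-robustly compatible with the demonstrator.

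Finally I would remark that Theorem~\ref{thm:termination2} guarantees this exhausts all possibilities within $O(r)$ iterations, so the framework always terminates with one of the two stated outcomes. The main obstacle — and the only place requiring care — is the third branch: one must be precise about the dimension in which the volumes $\Vol(E_j)$ and $\gamma\delta^r$ are measured, since $C_0$ (and hence each $C_j$) lies in a proper affine subspace of $\reals^r$; this is exactly the bookkeeping flagged in the remark following Eq.~\eqref{eq:volume-reduction}, and the containment $\B_\delta(\vc^\ast)\subseteq\overline{C_j}$ together with the maximality of the inscribed ellipsoid must be stated relative to that subspace. The degenerate case where $C_j$ collapses into an even smaller subspace (which the remark notes can occur at most $r$ times) does not affect the argument, since in that event $\Vol(E_j)=0<\gamma\delta^r$ and the same contradiction argument applies.
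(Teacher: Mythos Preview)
Your proposal is correct and follows essentially the same approach as the paper: both argue that success yields a genuine CLF via Theorem~\ref{thm:formal-learning-thm}, and that failure (triggered by $\Vol(E_j)<\gamma\delta^r$) precludes any $\delta$-ball from fitting inside $C_j$, hence precludes any $\delta$-robustly demonstrator-compatible candidate. Your version is somewhat more careful than the paper's --- you explicitly separate the $C_{j-1}=\emptyset$ branch and flag the affine-subspace dimension bookkeeping --- but the substantive argument is the same.
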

\begin{proof}
	By Theorem~\ref{thm:formal-learning-thm}, if verifier certifies correctness of
	a solution $V$, then $V$ is a CLF. Assume that the framework terminates after
	$k$ iterations and no solution is found. Then, by Theorem~\ref{thm:termination},
	$\Vol(E_k) < \gamma \delta^r$. This means that a ball with radius $\delta$ 
	would not fit in $C_k$ as $E_k$ is the MVE inscribed inside $C_k$.
	In other words
	\[
	(\forall \vc \in C_k) \ (\exists \hat{\vc} \in \B_{\delta}(\vc)) \ \hat{\vc} \not\in C_k\,.
	\]
	On the other hand, for all $\vc \not\in C_k$, $V_\vc$ is not compatible
	with the observations $O_j$. Therefore, even if there is a CLF $V_\vc$ 
	s.t. $\vc \in C_k$, the CLF is not robust in its compatibility with the
	demonstrator.
\end{proof}
The MVE itself can be computed by solving a convex optimization 
problem\cite{tarasov1988method,vandenberghe1998determinant}.

\paragraph{Other Definitions for Center of Polytope:}
Beside the center of MVE inscribed inside a polytope,
there are other notions for defining center of a polytope. These include
the center of gravity and 
Chebyshev center.
Center of gravity provides the following inequality~~\cite{bland1981ellipsoid}
\[ \Vol\left(C_{j}\right) \leq \left(1-\frac{1}{e}\right) \Vol\left(C_{j}\right)
< 0.64 \ \Vol(C_{j-1}) \,,\]
meaning that the volume of candidate set is reduced by at least 36\% at each iteration.
Unfortunately, calculating center of gravity is very expensive.
Chebyshev center~\cite{elzinga1975central} of a polytope is the center of the largest Euclidean ball 
that lies inside the polytope. Finding a Chebyshev center for a polytope
is equivalent to solving a linear program, and while it yields a good heuristic, it 
would not provide an inequality in the form of Eq.~\eqref{eq:volume-reduction}.

There are also notions for defining center for a set of constraints, 
including analytic center, and volumetric center.  
Assuming $C : \{ \vc \ | \bigwedge_i \va_i^t . \vc < b_i \}$, then
analytic center for $\bigwedge_i \va_i^t . \vc < b_i$ is defined as
\[
ac(\bigwedge_i \va_i^t . \vc < b_i) = \argmin_{\vc} - \sum_i \log (b_i - \va_i^t . \vc) \,.
\]
Notice that infinitely many inequalities can represent $C$ and any point inside $C$ can be
an analytic center depending on the inequalities. 
Atkinson et al.~\cite{atkinson1995cutting} and Vaidya~\cite{vaidya1996new} provide
candidate generation techniques, based on these centers 
, along with appropriate termination conditions and 
convergence analysis.

\section{Verifier}\label{sec:verifier}

The verifier checks the CLF conditions in Eq.~\eqref{eq:clf-param-def} for a 
candidate $V_{\vc_j}(\vx): \vc_j^t \cdot \vg(\vx)$. Since the CLF is generated by 
the learner, it is guaranteed that $V_{\vc_j}(\vzero) = 0$ (Eq.~\eqref{eq:C_j_0}). 
Accordingly, verification is split into two 
separate checks:

\noindent\textbf{(A)} Check if $V_{\vc_j}(\vx)$ is a positive polynomial for
  $\vx \neq \vzero$, or equivalently:
  \begin{equation}\label{eq:positivity-cond}
 (\exists\ \vx \neq \vzero)\ V_{\vc_j}(\vx) \leq 0 \,.
\end{equation}

\noindent\textbf{(B)} Check if the Lie derivative of $V_{\vc_j}$ can be made negative for
each $\vx \neq \vzero$ by a choice $\vu \in U$: 
\begin{equation}\label{eq:decrease-cond-init}
 (\exists \vx \neq \vzero ) \ (\forall \vu \in U)\ (\nabla V_{\vc_j}) \cdot f(\vx, \vu) \geq 0 \,.
\end{equation}
This problem \emph{seems} harder  due to the presence of a \emph{quantifier alternation}.
\begin{lemma} \label{lem:control-dual}
Eq.~\eqref{eq:decrease-cond-init} holds for some $\vx \neq \vzero$ iff
\begin{equation}\label{eq:decr-condition}
\begin{array}{ll}
(\exists\ \vx \neq \vzero,\vlam) & \vlam\geq \vzero, \vlam^t \vb \geq -\nabla V_{\vc_j}.f_0(\vx) \\	
& A_i^t \vlam=\nabla V_{\vc_j}.f_i(\vx) (i \in \{1 \ldots m\}).
\end{array}
\end{equation}
\end{lemma}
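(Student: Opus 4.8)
The plan is to fix an arbitrary witness state $\vx \neq \vzero$ and prove the pointwise equivalence between the inner statement $(\forall \vu \in U)\ (\nabla V_{\vc_j})\cdot f(\vx,\vu) \ge 0$ and the existence of a $\vlam$ satisfying the body of \eqref{eq:decr-condition}; the outer existential $\exists\,\vx\neq\vzero$ then passes through both sides verbatim. First I would use the control-affine structure \eqref{eq:control-affine} to rewrite the Lie derivative as an affine function of $\vu$: setting $a(\vx) := \nabla V_{\vc_j}\cdot f_0(\vx)$, $b_i(\vx) := \nabla V_{\vc_j}\cdot f_i(\vx)$, and $b(\vx) := (b_1(\vx),\dots,b_m(\vx))$, we get $(\nabla V_{\vc_j})\cdot f(\vx,\vu) = a(\vx) + b(\vx)^t \vu$. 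Hence the inner condition says exactly that the linear functional $\vu \mapsto b(\vx)^t\vu$ is bounded below by $-a(\vx)$ over $U = \{\vu \mid A\vu \ge \vb\}$, i.e. that the optimal value of the linear program $\min\{\, b(\vx)^t\vu \mid A\vu\ge\vb \,\}$ is at least $-a(\vx)$.

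Next I would dualize this LP. Introducing a multiplier $\vlam \ge \vzero$ for the constraints $A\vu\ge\vb$, the Lagrangian $b(\vx)^t\vu - \vlam^t(A\vu-\vb)$ is linear in $\vu$, so its infimum over $\vu$ is $\vlam^t\vb$ when $A^t\vlam = b(\vx)$ and $-\infty$ otherwise; the dual is thus $\max\{\, \vlam^t\vb \mid A^t\vlam = b(\vx),\ \vlam\ge\vzero \,\}$, where the equality read componentwise is $A_i^t\vlam = b_i(\vx)$ with $A_i$ the $i$-th column of $A$. Assuming $U\neq\emptyset$ (the standing assumption on the admissible input set, which I would state explicitly at the outset), strong LP duality gives that primal and dual optimal values coincide and that the dual optimum is attained whenever finite. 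Therefore $\min_{\vu\in U} b(\vx)^t\vu \ge -a(\vx)$ holds iff some dual-feasible $\vlam\ge\vzero$ with $A^t\vlam = b(\vx)$ satisfies $\vlam^t\vb \ge -a(\vx)$. Substituting back the definitions of $a(\vx)$ and $b_i(\vx)$ yields precisely the quantifier-free body of \eqref{eq:decr-condition}, and re-attaching $\exists\,\vx\neq\vzero$ closes the argument.

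The delicate points are the degenerate LP cases, which I would dispatch explicitly rather than gloss over. If the primal is unbounded below — possible only when the recession cone of $U$ contains a direction along which $b(\vx)^t\vu$ strictly decreases — then the inner $\forall\vu$ statement is false, and by LP duality the dual is infeasible, so no admissible $\vlam$ exists; both sides are false, consistently. The one case the lemma genuinely needs to rule out is $U=\emptyset$, where the left side is vacuously true for every $\vx$ while dual feasibility $\{A^t\vlam = b(\vx),\ \vlam\ge\vzero\}$ may be empty — hence the explicit hypothesis $U\neq\emptyset$. Beyond this bookkeeping, the proof is a direct invocation of strong duality for linear programming, so I expect no substantive obstacle; the only care needed is getting the primal-dual sign conventions right and matching the indexing convention $A_i^t\vlam = (A^t\vlam)_i$ in the statement.
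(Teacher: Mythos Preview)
Your proposal is correct and takes essentially the same approach as the paper: fix $\vx\neq\vzero$, expand the Lie derivative via the control-affine structure, and dualize the resulting linear feasibility/optimization problem in $\vu$. The only cosmetic difference is that the paper invokes Farkas' lemma directly on the infeasibility of $\{A\vu\ge\vb,\ a(\vx)+b(\vx)^t\vu<0\}$, whereas you phrase the same step as strong LP duality for $\min\{b(\vx)^t\vu\mid A\vu\ge\vb\}$ and are more explicit about the degenerate cases ($U=\emptyset$, unbounded primal); these are equivalent routes to the identical conclusion.
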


\begin{proof}
	Suppose Eq.~\eqref{eq:decrease-cond-init} holds. Then, for the given $V$, there
	exists a $\vx \neq \vzero$ s.t.
	\begin{equation} \label{eq:u-cond-before-farkas-lemma}
	(\forall \vu \in U) \ \nabla V \cdot f(\vx,\vu) = \left( \begin{array}{c} \nabla V \cdot f_0(\vx) + \\ \sum\limits_{i=1}^{m} \nabla V \cdot f_i(\vx) u_i \end{array} \right)\hspace{-0.1cm} \geq0,
	\end{equation}
	which is equivalent to:
	\[
		(\not\exists \vu) A \vu \geq \vb \land \nabla V \cdot f_0(\vx) + \sum_{i=1}^{m} \nabla V \cdot f_i(\vx) u_i < 0 \,.
	\]
	 This yields a set of linear inequalities (w.r.t. $\vu$). Using Farkas lemma, this
	 is equivalent to
	\begin{align*}
	(\exists \vlam \geq 0) & A_i^t \vlam = \nabla V \cdot f_i(\vx) (i \in \{1...m\})\\
	& \vlam^t \vb \geq -\nabla V \cdot f_0(\vx).
	\end{align*}
	Thus, for a given $V$, Eq.~\eqref{eq:decrease-cond-init} is equivalent
	to Eq.~\eqref{eq:decr-condition}.
\end{proof}

The verifier needs to check Eq.~\eqref{eq:positivity-cond} and
Eq.~\eqref{eq:decr-condition}. This problem is \emph{in general}
undecidable if the basis functions include trigonometric and
exponential functions. However, $\delta$-decision procedures can solve
these problems approximately~\cite{gao2013dreal}. In our
experience, $\delta$-decision procedures do not scale as
  verifiers for the range of benchmarks we wish to
  tackle. Nevertheless, these solvers allow us to conveniently
implement a verifier for small but hard problems involving rational
and trigonometric functions.

Assuming that the dynamics and chosen bases are polynomials in $\vx$,
the verification problem reduces to checking if a given semi-algebraic
set defined by polynomial inequalities is empty.  The verification
problem for polynomial dynamics and polynomial CLFs is decidable with 
a high complexity (NP hard)~\cite{Basu+Pollock+Roy/03/Algorithms}.
Exact
approaches using semi-algebraic
geometry~\cite{Brown:2007:CQE:1277548.1277557} or branch-and-bound
solvers (including the dReal approach cited above) can tackle this
problem precisely. However, for scalability, we consent to a
relaxation using SDP solvers. We now present a
relaxation using semidefinite programming (SDP) solvers.

\subsection{SDP Relaxation}

Let $\vw:\ [\vx, \vlam]$ collect the state variables $\vx$ and the
dual variables $\vlam$ involved in the conditions stated
in~\eqref{eq:decr-condition}. The core idea behind the SDP relaxation
is to consider a vector collecting all monomials of degree up to $D$:
\[ \vm:\ \left(\begin{array}{c}
               1 \\  w_1 \\ w_2\\ \ldots \\ \vw^{D}\\ \end{array}\right) \,,\]
wherein $D$ is chosen to be at least half 
of the maximum degree in $\vx$ among all monomials in $g_j(\vx)$ and
$\nabla g_j \cdot f_i(\vx)$:
\[
D \geq \frac{1}{2} \max\left( \bigcup_{j} \left( \{ \mbox{deg}(g_j) \} \cup \{ \bigcup_i\mbox{deg}(\nabla g_j \cdot f_i ) \} \right) \right).
\]
Let us define $Z(\vw): \vm \vm^t$, which is a symmetric matrix
  of monomial terms of degree at most $2D$. Each polynomial of degree
up to $2D$ may now be written as a trace inner product
\[p(\vx, \vlam):\ \tupleof{ P, Z(\vw)} = \mathsf{trace}( P Z(\vw) )\,,\]
wherein the matrix $P$ has real-valued entries that define the
  coefficients in $p$ corresponding to the various
  monomials. Although, $Z$ is a function of $\vx$ and $\vlam$, we will
  write $Z(\vx)$ as a function of just $\vx$ to denote the matrix
  $Z([\vx, \vzero])$ (i.e., set $\vlam = \vzero$).

Checking Eq.~\eqref{eq:positivity-cond}
is equivalent to solving the following optimization
problem over $\vx$
\begin{equation}\label{eq:positivity-cond-relax}
\begin{array}{ll}
 \mathsf{max}_{\vx} \tupleof{I,Z(\vx)} & \\
 \mathsf{ s.t. }  &\tupleof{\mathcal{V}_{\vc_j}, Z(\vx)} \leq 0\,, \\
\end{array}
\end{equation}
wherein $I$ is the identity matrix, and $V_{\vc_j}(\vx)$ is written in the
inner product form as $\tupleof{\mathcal{V}_{\vc_j}, Z(\vx)}$.  Let
$\tupleof{\Lambda_k, Z(\vw)}$ represent the variable $\lambda_k$. $\vlam$ is
represented as vector $\Lambda(Z(\vw))$, wherein the $k^{th}$ element is
$\tupleof{\Lambda_k, Z(\vw)}$.  Then, the conditions in
~\eqref{eq:decr-condition} are now written as
\begin{equation}\label{eq:decr-cond-relax}
 \begin{array}{ll}
 \mathsf{max}_{\vw} \tupleof{I,Z(\vw)} & \\
 \mathsf{s.t.} & \hspace{-1.4cm} \tupleof{F_{\vc_j,i}, Z(\vw)} = A_i^t 
 \Lambda(Z(\vw)),\ i \in \{1,\ldots, m\}  \\
& \hspace{-1.4cm} \tupleof{-F_{\vc_j,0}, Z(\vw)} \leq \vb^t \Lambda(Z(\vw)) \\
& \hspace{-1.4cm} \Lambda(Z(\vw)) \geq 0 \,,
\end{array}
\end{equation}
wherein the components $\nabla V_{\vc_j} \cdot f_i(\vx)$ 
defining the Lie derivatives of $V_{\vc_j}$ are now written
in terms of $Z(\vw)$ as $\tupleof{F_{\vc_j,i},Z(\vw)}$.
Notice that $Z(\vzero)$ is a square matrix where the first element ($Z(\vzero)_{1,1}$) is $1$ and the rest of the entries are zero. Let $Z_0 = Z(\vzero)$ . Then $\tupleof{I, Z_0} = 1$, and $(\forall \vw) \ Z(\vw) \succeq Z_0$.

The SDP relaxation is used to solve these problems and provide an upper
bound of the solution and $D$ defines the degree of 
relaxation~\cite{henrion2009gloptipoly}. The 
relaxation treats $Z(\vw)$ as a fresh matrix variable $Z$ that is no longer
a function of $\vw$. The constraint $Z \succeq Z_0$ is added.
$Z(\vw): \vm \vm^t$ is a rank one matrix and ideally, $Z$ should
be constrained to be rank one as well. However, such a constraint is
non-convex, and therefore, will be dropped from our relaxation.
Also, constraints involving $Z(\vw)$ in Eqs.~\eqref{eq:positivity-cond-relax} and~\eqref{eq:decr-cond-relax} are added
as support constraints (cf.~\cite{lasserre2001global,lasserre2009moments,henrion2009gloptipoly}).
Both optimization
problems (Eqs.\eqref{eq:positivity-cond-relax} and~\eqref{eq:decr-cond-relax}) are
feasible by setting $Z$ to be $Z_0$. Furthermore, if the optimal
solution for each problem is $1$ in the SDP relaxation, then we will
conclude that the given candidate is a CLF. Unfortunately, the
  converse is not necessarily true: the relaxation may fail to
  recognize that a given candidate is in fact a CLF.

\begin{lemma}\label{lem:non-zero-sol}
Whenever the relaxed optimization problems in Eqs.~\eqref{eq:positivity-cond-relax} and ~\eqref{eq:decr-cond-relax}
yield $1$ as a solution, then the given candidate $V_{\vc_j}(\vx)$ is in fact a CLF. 
\end{lemma}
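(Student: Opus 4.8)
The plan is to show the contrapositive-flavored statement: the SDP relaxations are sound, so an optimal value of $1$ certifies the genuine (non-relaxed) conditions fail to have a counterexample. The key observation is that the relaxation only enlarges the feasible set. For any true state $\vx \neq \vzero$ (and corresponding dual $\vlam$) witnessing a violation of Eq.~\eqref{eq:positivity-cond} or Eq.~\eqref{eq:decr-condition}, the rank-one matrix $Z(\vw) = \vm\vm^t$ is feasible for the corresponding relaxed program in Eqs.~\eqref{eq:positivity-cond-relax} or~\eqref{eq:decr-cond-relax}: it satisfies $Z \succeq Z_0$, it reproduces all the polynomial (in)equalities exactly since $p(\vx,\vlam) = \tupleof{P, Z(\vw)}$ holds by construction for the monomial-collecting matrix, and it has objective value $\tupleof{I, Z(\vw)} = \|\vm\|_2^2 = 1 + \|\vw\|_2^2 > 1$ (strictly, because $\vx \neq \vzero$). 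Hence if a true counterexample existed, the relaxed optimum would be strictly greater than $1$.

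First I would make precise that the objective $\tupleof{I, Z(\vw)}$ equals $\sum_i m_i^2 = 1 + w_1^2 + w_2^2 + \cdots \geq 1 + \|\vw\|_2^2$, with equality to $1$ only when $\vw = \vzero$, i.e. $\vx = \vzero$ and $\vlam = \vzero$. Second, I would record that the feasible point $Z = Z_0$ is always available and gives objective exactly $1$, so the relaxed optimum is always $\geq 1$. Third, I would argue the direction we need: suppose the positivity check Eq.~\eqref{eq:positivity-cond-relax} attains optimum $1$. If $V_{\vc_j}$ were not positive definite, there is $\vx^* \neq \vzero$ with $V_{\vc_j}(\vx^*) \leq 0$; then $Z(\vx^*) = \vm(\vx^*)\vm(\vx^*)^t$ is feasible (the support constraint $\tupleof{\mathcal{V}_{\vc_j}, Z(\vx^*)} = V_{\vc_j}(\vx^*) \leq 0$ holds and $Z(\vx^*) \succeq Z_0$) with objective $1 + \|\vx^*\|_2^2 > 1$, contradicting optimality of $1$. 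So $V_{\vc_j}$ is positive definite on $X \setminus \{\vzero\}$. Fourth, the same argument applied to Eq.~\eqref{eq:decr-cond-relax}: if the Lie-derivative condition failed, Lemma~\ref{lem:control-dual} gives $\vx^* \neq \vzero$ and $\vlam^* \geq \vzero$ satisfying the equalities and inequality in Eq.~\eqref{eq:decr-condition}; setting $\vw^* = [\vx^*, \vlam^*]$ and $Z = Z(\vw^*)$ yields a feasible point of Eq.~\eqref{eq:decr-cond-relax} (the linear constraints hold exactly because they are the $Z$-encodings of the genuine polynomial identities, and $\Lambda(Z(\vw^*)) = \vlam^* \geq \vzero$) with objective $1 + \|\vw^*\|_2^2 > 1$, again contradicting the optimum being $1$. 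Therefore $\min_{\vu \in U} \nabla V_{\vc_j} \cdot f(\vx, \vu) < 0$ for every $\vx \neq \vzero$. Combined with $V_{\vc_j}(\vzero) = 0$ guaranteed by the learner, both conditions of Eq.~\eqref{eq:clf-param-def} hold, so $V_{\vc_j}$ is a CLF.

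The main obstacle — really the only delicate bookkeeping point — is verifying that the linear constraints in Eq.~\eqref{eq:decr-cond-relax}, written abstractly in terms of $\tupleof{F_{\vc_j,i}, Z(\vw)}$ and $\Lambda(Z(\vw))$, are satisfied \emph{exactly} by the rank-one lift $Z(\vw^*) = \vm(\vw^*)\vm(\vw^*)^t$ of a genuine solution $\vw^*$ to Eq.~\eqref{eq:decr-condition}. This follows because, for the rank-one matrix, $\tupleof{P, Z(\vw^*)} = \vm^t P \vm = p(\vw^*)$ is exactly the polynomial evaluation for whatever polynomial $p$ the coefficient matrix $P$ encodes; in particular $\tupleof{F_{\vc_j,i}, Z(\vw^*)} = \nabla V_{\vc_j} \cdot f_i(\vx^*)$ and $\tupleof{\Lambda_k, Z(\vw^*)} = \lambda_k^*$, so each relaxed constraint collapses to the corresponding genuine constraint from Eq.~\eqref{eq:decr-condition}, which holds by assumption. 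One should also note the degree bound on $D$ was chosen precisely so that all monomials appearing in $g_j$, $\nabla g_j \cdot f_i$, and the products with $\vlam$ have degree at most $2D$ and hence are representable as entries of $Z(\vw)$; this is what makes the encoding faithful. No compactness or Positivstellsatz machinery is needed, since we only need soundness (one-directional), which the paper already flags is all that holds — the converse may fail.
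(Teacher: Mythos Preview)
Your proof is correct and follows essentially the same contrapositive route as the paper: both lift a genuine counterexample $\vw^* \neq \vzero$ to the rank-one matrix $Z(\vw^*)=\vm\vm^t$, observe it is feasible for the relaxed program, and note its objective $\tupleof{I,Z(\vw^*)}=\mathsf{trace}(\vm\vm^t)=\|\vm\|_2^2>1$ (the paper phrases this last step via the eigenvalues of $Z(\vw^*)-Z_0$, which amounts to the same computation). Your additional bookkeeping paragraph verifying that the relaxed constraints collapse to the genuine ones under the rank-one lift is a detail the paper leaves implicit; the minor slip in your first paragraph writing $=1+\|\vw\|_2^2$ instead of $\geq 1+\|\vw\|_2^2$ is harmless and you correct it immediately afterward.
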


\begin{proof}
	Suppose that $V_{\vc_j}$ is not a CLF but both optimization problems yield an optimal value of $1$. Then, one of Eq.~\eqref{eq:positivity-cond} or Eq.~\eqref{eq:decrease-cond-init} is satisfied. 
	I.e. $(\exists \vx^* \neq \vzero, \vlam^* \geq \vzero)$ s.t. $V_{\vc_j}(\vx^*) \leq 0$ or $A_i^t \vlam^*=\nabla V_{\vc_j}.f_i(\vx^*) (i \in \{1 \ldots m\}) ,\vlam^{*t} \vb \geq - \nabla V_{\vc_j}.f_0(\vx^*)$.
		Let $\vw^* = [\vx^*, \vlam^*]$ and therefore $Z(\vw^*) \succeq Z_0$ 
		is a solution for Eq.~\eqref{eq:positivity-cond-relax} or
        Eq.~\eqref{eq:decr-cond-relax}.  Let $Z' = Z(\vx^*) - Z_0$. As
        $\vw^* \neq \vzero$, $Z'$ has a non-zero diagonal element, and
        since $Z' \succeq 0$, we may also conclude that one of the
        eigenvalues of $Z'$ must be positive. Therefore,
        $\tupleof{I, Z'} > 0$ as the trace of $Z'$ is the sum of
        eigenvalues of $Z'$. Thus,
        $\tupleof{I, Z(\vw)} > \tupleof{I, Z_0} = 1$. Thus,
          the optimal solution of at least one of the two problems has
          to be greater than one. This contradicts our original
          assumption.
\end{proof}

However, the converse is not true. It is possible for $Z \succeq Z_0$
to be optimal for either relaxed condition, but 
  $Z \not= Z(\vw)$ for any $\vw$.
This happens because (as mentioned earlier) the relaxation drops two key
constraints to convexify the conditions: (1) $Z$ has to be a rank one
matrix written as $Z: \vm \vm^t$ and (2) there is a $\vw$ such that
$\vm$ is the vector of monomials corresponding to $\vw$. 

\begin{lemma}\label{lem:no-lam-relaxation}
	Suppose Eq.~\eqref{eq:decr-cond-relax} has a solution $Z \not= Z_0$, then
	\begin{align*}
		& (\forall \vu \in U) \ \tupleof{F_{\vc_j,0}, Z} + \sum_{i=1}^m \tupleof{F_{\vc_j,i}, Z} u_i \geq 0\,.
	\end{align*}
\end{lemma}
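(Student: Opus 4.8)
The plan is to exploit Farkas' lemma in the reverse direction, mirroring the argument in Lemma~\ref{lem:control-dual}. Recall that Eq.~\eqref{eq:decr-cond-relax} asserts the existence of a feasible $Z \succeq Z_0$ satisfying, with $\vlam := \Lambda(Z)$, the conditions $\vlam \geq \vzero$, $A_i^t \vlam = \tupleof{F_{\vc_j,i}, Z}$ for $i \in \{1,\ldots,m\}$, and $\vb^t \vlam \geq -\tupleof{F_{\vc_j,0}, Z}$. First I would set $\vlam^* := \Lambda(Z)$ for the given feasible $Z$, and observe that these are exactly the Farkas-dual conditions for the linear system in $\vu$ given by $A\vu \geq \vb$ together with $\tupleof{F_{\vc_j,0}, Z} + \sum_{i=1}^m \tupleof{F_{\vc_j,i}, Z} u_i < 0$.

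Second, I would apply Farkas' lemma: the existence of such a nonnegative multiplier vector $\vlam^*$ certifies that the linear system above is \emph{infeasible} in $\vu$. That is, there is no $\vu \in U$ with $\tupleof{F_{\vc_j,0}, Z} + \sum_{i=1}^m \tupleof{F_{\vc_j,i}, Z} u_i < 0$, which is precisely the negation of the claimed conclusion. Hence for every $\vu \in U$ we have $\tupleof{F_{\vc_j,0}, Z} + \sum_{i=1}^m \tupleof{F_{\vc_j,i}, Z} u_i \geq 0$, as desired. The key point is that the linear-algebraic identities in Eq.~\eqref{eq:decr-cond-relax} are linear in the entries of $Z$, so treating $Z$ as a fixed constant matrix (rather than as $Z(\vw)$) makes $\tupleof{F_{\vc_j,i}, Z}$ honest real constants, and the Farkas argument goes through verbatim on these constants.

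There is a subtlety I would need to address: Farkas' lemma in the form used here requires the feasible region $U = \{\vu \mid A\vu \geq \vb\}$ to be such that the standard alternative theorem applies — this is fine since it is the exact dual pairing already invoked (in the other direction) in the proof of Lemma~\ref{lem:control-dual}, and no nonemptiness or boundedness hypothesis on $U$ is needed for the direction we want (multiplier exists $\Rightarrow$ primal infeasible is the ``easy'' direction of Farkas and holds unconditionally). I would state this explicitly so the reader does not worry about edge cases. I would also note that the hypothesis $Z \neq Z_0$ plays no role in the conclusion as stated — it is presumably included because this lemma is meant to be chained with the observation that a nontrivial solution $Z \neq Z_0$ is what the relaxation actually returns when it fails to certify the CLF; the inequality $\tupleof{F_{\vc_j,0}, Z} + \sum_i \tupleof{F_{\vc_j,i}, Z} u_i \geq 0$ holds for the trivial solution $Z = Z_0$ too (where all the inner products vanish), so no harm.

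I do not anticipate a genuine obstacle here; this is essentially a bookkeeping lemma that repackages one direction of Farkas' lemma applied to the relaxed constraints. The only thing to be careful about is notational: making sure the signs line up (the constraint in Eq.~\eqref{eq:decr-cond-relax} is written as $\tupleof{-F_{\vc_j,0}, Z} \leq \vb^t \Lambda(Z)$, i.e. $\vb^t\vlam^* \geq -\tupleof{F_{\vc_j,0},Z}$, which is the correct Farkas inequality for the strict-inequality direction $< 0$), and that $\Lambda(Z) \geq 0$ supplies the nonnegativity of the multipliers. So the write-up is short: restate the three feasibility conditions on $\vlam^* = \Lambda(Z)$, invoke Farkas to conclude infeasibility of the strict system in $\vu$, and negate.
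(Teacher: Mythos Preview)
Your proposal is correct and matches the paper's own proof essentially line for line: set $\hat{\vlam} = \Lambda(Z)$, read off the three feasibility conditions from Eq.~\eqref{eq:decr-cond-relax}, and invoke Farkas' lemma (as in Lemma~\ref{lem:control-dual}) to conclude the universal statement in $\vu$. Your additional remarks about the ``easy'' direction of Farkas and the inessential role of $Z \neq Z_0$ are accurate and more explicit than the paper, but the argument is the same.
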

\begin{proof}
	While in the relaxed problem, the relation between monomials are lost, each inequality in
	Eq.~\eqref{eq:decr-cond-relax} holds. Let $\hat{\vlam} = \Lambda(Z)$. Then, we have:
	\begin{align*}
		\tupleof{F_{\vc_j,i}, Z} = A_i^t \hat{\vlam},\ i \in \{1,\ldots, m\}  \\
		\tupleof{-F_{\vc_j,0}, Z} \leq \vb^t \hat{\vlam} , \ \hat{\vlam} \geq 0\,.
	\end{align*}	
	Similar to Lemma.~\ref{lem:control-dual} (using Farkas Lemma) this is equivalent to
	\begin{equation*}
(\forall \vu \in U) \ \tupleof{F_{\vc_j,0}, Z} + \sum_{i=1}^m \tupleof{F_{\vc_j,i}, Z} u_i \geq 0 \,.
	\end{equation*}
\end{proof}

\subsection{Lifting the Counterexamples}

Thus far, we have observed that the relaxed optimization
  problems (Eqs.~\eqref{eq:positivity-cond-relax} 
  and~\eqref{eq:decr-cond-relax}) yield matrices $Z$ as
  counterexamples, rather than vectors $\vx$. Furthermore, given a
  solution $Z$, there is no way for us to extract a corresponding
  $\vx$ for reasons mentioned above. We solve this issue by ``lifting''
our entire learning loop to work with  observations of the form:
\[ O_j: \{ (Z_1, \vu_1),\ldots,(Z_{j}, \vu_{j})\} \,,\]
effectively replacing states $\vx_i$ by matrices $Z_i$.

Also, each basis function $g_k(\vx)$ in $\vg$ is now written instead as $\tupleof{G_k, Z}$.
The candidates are therefore, $\sum_{k=1}^r  c_k \tupleof{ G_k, Z}$.
 Likewise, we write the components of its Lie
 derivative $\nabla g_k \cdot f_i$ in terms of $Z$ ($\tupleof{G_{ki}, Z}$).
Therefore
\begin{align}\label{eq:relaxed-template}
	\mathcal{V}_\vc = \sum_{k=1}^r c_{k} G_k \ , \ F_{\vc,i} = \sum_{k=1}^r c_{k} G_{ki}\,.
\end{align}
 
\begin{definition}[Relaxed CLF]\label{def:relaxed-CLF}
	A polynomial function $V_\vc(\vx) = \sum_{k=1}^r c_k g_k(\vx)$, s.t. $\tupleof{\mathcal{V}_\vc, Z_0} = 0$ is a $D$-relaxed CLF iff for all
	$Z \not= Z_0$:
	\begin{equation}\label{eq:relaxed-clf}
	\begin{array}{l}  \tupleof{\mathcal{V}_\vc, Z} > 0 \ \land \\
		(\exists \vu \in U) \ \tupleof{F_{\vc,0}, Z} + \sum_{i=1}^m \tupleof{F_{\vc,i}, Z} < 0\,.
	\end{array}
		\end{equation}
\end{definition}

\begin{theorem}\label{thm:relaxed-CLF-vs-CLF}
	A relaxed CLF is a CLF.
\end{theorem}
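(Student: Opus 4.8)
The plan is to show that the $D$-relaxed CLF conditions are strictly stronger than the ordinary CLF conditions by exhibiting, for every nonzero state $\vx$, a specific matrix $Z \neq Z_0$ that "encodes" $\vx$, and then invoking the relaxed conditions at that matrix. The natural choice is $Z = Z(\vx) = \vm(\vx)\vm(\vx)^t$ with $\vlam = \vzero$, i.e.\ the rank-one moment matrix built from the monomial vector evaluated at $\vx$. First I would verify that this $Z$ is a legitimate point of the feasible sets: since $\vx \neq \vzero$, the monomial vector $\vm(\vx)$ has a component (some $w_i = x_i \neq 0$) strictly exceeding what it is at the origin, so $Z(\vx) \neq Z_0$, and moreover $Z(\vx) \succeq Z_0$ (this is essentially the computation already done inside the proof of Lemma~\ref{lem:non-zero-sol}, where $Z' = Z(\vx) - Z_0 \succeq 0$ with a positive eigenvalue).

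Next I would translate the two relaxed inequalities of Definition~\ref{def:relaxed-CLF} back into statements about $\vx$. By construction of the lifted encoding (Eq.~\eqref{eq:relaxed-template} and the surrounding definitions), $\tupleof{\mathcal{V}_\vc, Z(\vx)} = V_\vc(\vx)$ and $\tupleof{F_{\vc,i}, Z(\vx)} = \nabla V_\vc \cdot f_i(\vx)$ for $i = 0, 1, \ldots, m$, because these inner products were defined precisely so that they reproduce the polynomial values when $Z$ is a genuine moment matrix. Hence the first relaxed condition $\tupleof{\mathcal{V}_\vc, Z(\vx)} > 0$ gives $V_\vc(\vx) > 0$, i.e.\ positive-definiteness of $V_\vc$ over $X \setminus \{\vzero\}$ (together with $\tupleof{\mathcal{V}_\vc, Z_0} = 0$, which says $V_\vc(\vzero) = 0$). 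The second relaxed condition states that there exists $\vu \in U$ with $\tupleof{F_{\vc,0}, Z(\vx)} + \sum_i \tupleof{F_{\vc,i}, Z(\vx)} u_i < 0$; substituting the identities above, this is exactly $\nabla V_\vc \cdot f_0(\vx) + \sum_i (\nabla V_\vc \cdot f_i(\vx)) u_i = \nabla V_\vc \cdot f(\vx, \vu) < 0$ using the control-affine form in Eq.~\eqref{eq:control-affine}. Since this holds for every $\vx \neq \vzero$, the quantity $\min_{\vu \in U} \nabla V_\vc \cdot f(\vx,\vu)$ is negative definite, and $V_\vc$ is radially unbounded and smooth by the standing assumption on the basis functions. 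That is precisely the definition of a CLF in Eq.~\eqref{eq:clf-def}.

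The only subtlety I anticipate — and the step I would be most careful about — is the bookkeeping in the second condition regarding which variable the $u_i$ multiplies: the relaxed condition as written suppresses the $u_i$ coefficients in a way that must be read consistently with Lemma~\ref{lem:no-lam-relaxation}, so I would state explicitly that for a genuine moment matrix $Z(\vx)$ the term $\sum_i \tupleof{F_{\vc,i}, Z(\vx)} u_i$ equals $\sum_i (\nabla V_\vc \cdot f_i(\vx)) u_i$, closing the gap. A secondary point worth a sentence is that we only need the relaxed condition at moment matrices of the special form $Z(\vx) = Z([\vx,\vzero])$; the relaxed CLF definition quantifies over \emph{all} $Z \neq Z_0$, which is a superset, so restricting to this subfamily is sound and in fact this is the whole reason the relaxation is conservative (it may reject true CLFs but never accepts false ones). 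No genuinely hard estimate is involved — the argument is a direct unwrapping of the definitions once the rank-one moment matrix is plugged in.
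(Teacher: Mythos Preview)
Your proposal is correct and follows essentially the same approach as the paper: both arguments hinge on evaluating the relaxed conditions at the rank-one moment matrix $Z(\vx)$ for an arbitrary nonzero $\vx$, using the identities $\tupleof{\mathcal{V}_\vc, Z(\vx)} = V_\vc(\vx)$ and $\tupleof{F_{\vc,i}, Z(\vx)} = \nabla V_\vc \cdot f_i(\vx)$. The only cosmetic difference is that the paper phrases the argument as a contrapositive (assume $V_\vc$ is not a CLF, exhibit $Z = Z(\vx)$ violating the relaxed conditions), whereas you argue directly; the content is identical.
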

\begin{proof}
	Suppose that $V_\vc$ is not a CLF. The proof is complete by showing that $V_\vc$
	is not a relaxed CLF. If $V_\vc(\vzero) \neq 0$, then $\tupleof{\mathcal{V}_\vc, Z_0} \neq 0$ and $V_\vc$ is not a relaxed CLF. Otherwise, according to Eq.~\eqref{eq:clf-def} 
	there exists a $\vx \neq \vzero$ s.t.
	\[
	V_\vc(\vx) \leq 0  \ \lor \ (\forall \vu \in U) \ \nabla V_\vc.f(\vx, \vu) \geq 0 \,.
	\]
	Therefore, there exists $\vx \neq \vzero$ s.t.
	\begin{align*}
	&\tupleof{\mathcal{V}_\vc, Z(\vx)} \leq 0 \ \lor \\
	&(\forall \vu \in U) \ \tupleof{F_{\vc,0}, Z(\vx)} + \sum_{i=1}^m \tupleof{F_{\vc,i}, Z(\vx)} u_i \geq 0	\,.
	\end{align*}
	Setting $Z:\ Z(\vx)$ shows that  
	$V_\vc$ is not a relaxed CLF, since the negation of Eq.~\eqref{eq:relaxed-clf} holds.
\end{proof}

We lift the overall formal learning framework to work with
  matrices $Z$ as counterexamples using the following modifications to
  various parts of the framework:
\begin{enumerate}
\item First, for each $(Z_j, \vu_j)$ in the observation set, $Z_j$ is the feasible solution
  returned by the SDP solver while solving Eqs.~\eqref{eq:decr-cond-relax} and ~\eqref{eq:positivity-cond-relax}.

\item However, the demonstrator $\D$ requires its input to be a state
  $\vx \in X$.  We define a projection operator $\pi: \zeta \mapsto X$
  mapping each $Z$ to a state $\vx: \pi(Z)$, such that the
  demonstrator operates over $\pi(Z_j)$ at each step. Note that the
  vector of monomials $\vm$ used to define $Z$ from $\vx$ includes the
  degree one terms $x_1, \ldots, x_n$. The projection operator
  simply selects the entries from $Z$ corresponding to these
  variables. Other more sophisticated projections are also possible,
  but not considered in this work.

\item The space of all candidates $C$ remains unaltered except
    that each basis polynomial is now interpreted as
    $g_j: \tupleof{G_j, Z}$ and similarly for the Lie derivative
    $(\nabla g_j)\cdot f(\vx, \vu)$. Thus, the learner is effectively
    unaltered.
\end{enumerate}

\begin{definition}[Relaxed Observation Compatibility] \label{def:compatible-data-relaxed}
	A polynomial function $V_\vc$ is said to be compatible with a set of 
	$D$-relaxed-observations $O$ iff $V_\vc$ respects the $D$-relaxed CLF conditions 
	(Eq.~\eqref{eq:clf-def}) for every point in $O$:
	\begin{align*}
	& \tupleof{\mathcal{V}_\vc, Z_0} = 0 \ \wedge \\	
	&\bigwedge\limits_{(Z_k, \vu_k) \in O_j}
\left(\begin{array}{c} \tupleof{\mathcal{V}_\vc, Z_k} > 0\ \land\ \\ \tupleof{F_{\vc,0}, Z_k} + \sum_{i=1}^m \tupleof{F_{\vc,i}, Z_k}u_{ki} < 0 \end{array}\right)\,.
	\end{align*}
\end{definition}

\begin{definition}[Relaxed Demonstrator Compatibility] \label{def:compatible-dem-relaxed}
	A polynomial function $V_\vc$ is said to be compatible with a relaxed-demonstrator
	$\D \circ \pi$ iff $V_\vc$ respects the $D$-relaxed CLF conditions 
	(Eq.~\eqref{eq:clf-def}) for every observation that can be generated by 
	the relaxed-demonstrator:
	\begin{align*}
	& \tupleof{\mathcal{V}_\vc, Z_0} = 0 \ \wedge \\ &(\forall Z \succeq Z_0, \ Z \neq Z_0)\\
	& \ \ \ \ \ \ \ \ 
	\left(\begin{array}{c} \tupleof{\mathcal{V}_\vc, Z} > 0\ \land\ \\ \tupleof{F_{\vc,0}, Z} + \sum_{i=1}^m \tupleof{F_{\vc,i}, Z} \D(\pi(Z))_i < 0 \end{array}\right)\,.	
	\end{align*}
	In other words, $V_\vc$ is a relaxed Lyapunov function for the closed loop system
	$\Psi(X, U, f, \D \circ \pi)$.
\end{definition}

\begin{theorem}
	The adapted formal learning framework terminates and either finds a CLF $V$, or proves that 
	no linear combination of basis functions would yield a
	CLF, with robust compatibility w.r.t. the (relaxed) demonstrator.
\end{theorem}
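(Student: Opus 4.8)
The plan is to read this statement as the $Z$-lifted counterpart of Theorems~\ref{thm:formal-learning-thm} and~\ref{thm:clf-or-no-robust-solution}, and to reuse those arguments after checking that each ingredient survives the replacement of witness states $\vx_i$ by matrices $Z_i$. I would split the claim into three parts: (i) the loop halts after $O(r)$ iterations; (ii) if it halts with success, the returned $V_{\vc_j}$ is a genuine CLF; (iii) if it halts with failure, no linear combination of the basis functions yields a CLF that is robustly compatible with the relaxed demonstrator $\D\circ\pi$.

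\emph{Termination.} The key point is that the learner is essentially unaltered: for a fixed observation $(Z_k,\vu_k)$, the relaxed compatibility constraints $\tupleof{\mathcal V_{\vc},Z_k}>0$ and $\tupleof{F_{\vc,0},Z_k}+\sum_i \tupleof{F_{\vc,i},Z_k}u_{ki}<0$ are linear in $\vc$, since by~\eqref{eq:relaxed-template} the quantities $\tupleof{G_k,Z_k}$ and $\tupleof{G_{ki},Z_k}$ are constants. Hence the induction in Lemma~\ref{lemma:cj-convex} goes through verbatim and every $\overline{C_j}$ is a polytope contained in the box $C$ (after factoring out the linearity space of $C_0$, exactly as in the earlier remark). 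I would then re-establish $\vc_j\notin C_j$ as in part~1 of Theorem~\ref{thm:formal-learning-thm}: the SDP verifier returns a counterexample $Z_j\neq Z_0$ for which either $\tupleof{\mathcal V_{\vc_j},Z_j}\le 0$ (positivity branch,~\eqref{eq:positivity-cond-relax}), or, by Lemma~\ref{lem:no-lam-relaxation}, $\tupleof{F_{\vc_j,0},Z_j}+\sum_i \tupleof{F_{\vc_j,i},Z_j}u_i\ge 0$ for \emph{every} $\vu\in U$, in particular for the recorded control $\vu_j=\D(\pi(Z_j))$. Either way $(Z_j,\vu_j)$ is incompatible with $V_{\vc_j}$, so $\vc_j\notin C_j$ and Lemma~\ref{lemma:cj-half-space} applies. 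Choosing $\vc_j$ as the center of the MVE $E_{j-1}$ and invoking Theorem~\ref{thm:mve-2} gives $\Vol(E_j)\le \tfrac{8}{9}\Vol(E_{j-1})$, so the stopping rule $\Vol(E_k)<\gamma\delta^r$ of Theorem~\ref{thm:termination2} is met within $O(r)$ iterations; hence the loop always halts (reporting success or failure).

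\emph{Soundness and completeness.} The loop reports success only when both relaxed SDPs~\eqref{eq:positivity-cond-relax} and~\eqref{eq:decr-cond-relax} attain optimal value $1$, which by Lemma~\ref{lem:non-zero-sol} already forces $V_{\vc_j}$ to be a CLF (equivalently, no $Z\neq Z_0$ violates~\eqref{eq:relaxed-clf}, so $V_{\vc_j}$ is a $D$-relaxed CLF and Theorem~\ref{thm:relaxed-CLF-vs-CLF} upgrades it to a CLF). For failure, suppose $\Vol(E_k)<\gamma\delta^r$ (this subsumes $C_k=\emptyset$, where $\Vol(E_k)=0$), and suppose toward a contradiction that some $\vc^{\ast}$ makes $V_{\vc^{\ast}}$ robustly compatible with $\D\circ\pi$. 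Then $\B_{\delta}(\vc^{\ast})$ lies inside the parameter set compatible with the relaxed demonstrator (Definition~\ref{def:compatible-dem-relaxed}), which is contained in the set compatible with the finite observation set $O_k$, namely $\overline{C_k}$; hence a $\delta$-ball is inscribed in $C_k$ and $\Vol(E_k)\ge\gamma\delta^r$, a contradiction. Finally, by Definition~\ref{def:compatible-dem-relaxed} robust compatibility with $\D\circ\pi$ already implies the $D$-relaxed CLF conditions of Definition~\ref{def:relaxed-CLF}, hence (Theorem~\ref{thm:relaxed-CLF-vs-CLF}) that $V_{\vc^{\ast}}$ is a CLF, so the two clauses in the statement coincide.

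\emph{Main obstacle.} I expect the work to be bookkeeping around the lifting rather than a new idea: the delicate step is verifying that replacing $\vx_j$ by the SDP-returned matrix $Z_j$ and feeding $\pi(Z_j)$ (which need not satisfy $Z(\pi(Z_j))=Z_j$) to the demonstrator still produces a half-space strictly excluding $\vc_j$. This is exactly where Lemma~\ref{lem:no-lam-relaxation} is essential, since it guarantees the decrease counterexample is violated for the \emph{specific} control $\vu_j$ that is actually recorded, not merely for some unspecified control; the rest is a transcription of the polytope/ellipsoid arguments of Theorems~\ref{thm:mve}, \ref{thm:termination2}, and~\ref{thm:clf-or-no-robust-solution}.
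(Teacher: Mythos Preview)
Your proposal is correct and follows essentially the same route as the paper: both proofs lift Lemma~\ref{lemma:cj-convex} to the $Z$-observations via the linearity in~\eqref{eq:relaxed-template}, use Lemma~\ref{lem:no-lam-relaxation} to force $\vc_j\notin C_j$ for the recorded control $\vu_j=\D(\pi(Z_j))$, appeal to the MVE-center termination bound, and then invoke Lemma~\ref{lem:non-zero-sol} for soundness and the $\delta$-ball argument of Theorem~\ref{thm:clf-or-no-robust-solution} for the failure case. Your organization into termination/soundness/completeness is slightly cleaner than the paper's linear walk-through, and your explicit remark that relaxed-demonstrator compatibility already entails the relaxed CLF conditions (hence a CLF via Theorem~\ref{thm:relaxed-CLF-vs-CLF}) is a useful addition the paper leaves implicit.
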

\begin{proof}
	$C_{j-1}$ represents all $\vc$ s.t. $V_\vc$
	is compatible with relaxed-observation $O_{j-1}$. Still $\mathcal{V}_\vc$ and $F_{\vc,i}$
	are linear in $\vc$ (Eq.~\eqref{eq:relaxed-template}), and therefore $C_{j-1}$ which is
	the set of all $\vc \in C$ s.t.
	\begin{equation*}
	\begin{array}{l}
		\tupleof{\mathcal{V}_\vc, Z_0} = 0 \ \wedge \\
		\bigwedge\limits_{(Z_k, \vu_k) \in O_{j-1}}
			\left( \begin{array}{c} \tupleof{\mathcal{V}_\vc, Z_k} > 0\ \land\ \\ 
				\sum_{i=1}^m \tupleof{F_{\vc,i}, Z_k}u_{ki}
				+ \tupleof{F_{\vc,0}, Z_k} < 0 
			\end{array} \right)
	\end{array} \,,
	\end{equation*}
 	is a polytope (similar to Lemma~\ref{lemma:cj-convex}).
	Suppose, at $j^{th}$ iteration, $V_{\vc_j} : \vc_j^t . \vg$ is generated by the learner. 
	The relaxed verifier solves Eqs.~\eqref{eq:positivity-cond-relax} 
  and~\eqref{eq:decr-cond-relax}. If the optimal solution for these problems are $1$,
  by Lemma~\ref{lem:non-zero-sol}, $V_{\vc_j}$ is a CLF. Otherwise, it returns a 
  counterexample $Z_j \succeq Z_0$ and $Z_j \neq Z_0$. More over, according to Eqs.~\eqref{eq:positivity-cond-relax} and~\eqref{eq:decr-cond-relax} and Lemma~\ref{lem:no-lam-relaxation}:
  \begin{align*}
  &\tupleof{\mathcal{V}_{\vc_j}, Z_j} \leq 0 \ \lor \\ &(\forall \vu \in U) \ \tupleof{F_{\vc_j,0}, Z_j} + \sum_{i=1}^m \tupleof{F_{\vc_j,i}, Z_j} u_i \geq 0\,.	
  \end{align*}
  	In other words, $V_{\vc_j}$ is not a $D$-relaxed CLF. Next, the demonstrator 
	generates a proper feedback for $\pi(Z_j)$ and observation
	$(Z_j, \D(\pi(Z_j)))$ is added to the set of observations.
	Notice that $V_{\vc_j}$ does
	not respect the $D$-relaxed CLF conditions for $(Z_j, \D(\pi(Z_j)))$. I.e.
	\begin{align*}
	&\tupleof{\mathcal{V}_{\vc_j}, Z_j} \leq 0 \ \lor \\ &\tupleof{F_{\vc_j,0}, Z_j} + \sum_{i=1}^m \tupleof{F_{\vc_j,i}, Z_j} \D(\pi(Z_j))_i \geq 0 \,.	
	\end{align*}
	Therefore, the new set $C_{j}$ does not contain $\vc_j$.
	Now, the learner uses the center of maximum volume ellipsoid, 
	to generate the next candidate. This process repeats and the learning
	procedure terminates in finite iterations. When the algorithm
	returns with no solution, it means that $\Vol(C_j)$ $\leq \gamma \delta^r$.
	Similar to Theorem~\ref{thm:clf-or-no-robust-solution}, this guarantees 
	that no ball of radius $\delta$ fits inside $C_j$, which represents the
	set of all linear combination of basis functions, compatible
	with the relaxed observations. Therefore, no linear combination of basis functions
	would yield a CLF with robust compatibility with the relaxed 
	observation and therefore with the relaxed-demonstrator.
\end{proof}

In the rest of this paper, we use CLF for discussions. Nevertheless, the same
results can be applied to relaxed CLF as well.

\subsection{Counterexamples Selection}\label{sec:counterexample-selection}

As discussed earlier, in Section~\ref{sec:learner}, there are
  two important factors that affect the overall convergence rate of
  the learning framework: (a) the choice of a candidate
  $\vc_j \in C_{j-1}$ and (b) the choice of a counterexample $\vx_j$ that
  shows that the current candidate $V_{\vc_j}$ is not a CLF. We will now
  discuss the choice of a ``good'' counterexample.

As mentioned, when there is a counterexample $\vx_j$ for $V_{\vc_j}$, 
there are two half spaces 
$H_{j1} : \{\vc \ | \ \va_{j1}^t . \vc > b_{j1}\}$, and
$H_{j2} : \{\vc \ | \ \va_{j2}^t . \vc > b_{j2}\}$ such that 
$C_{j} : C_{j-1} \cap H_{j1} \cap H_{j2}$. In particular,
$\vc_j \not\in C_{j}$, yields the following constraints over $\vc_j$:
\begin{equation}\label{eq:cj-property-counterexample}
\va_{j1}^t . \vc_j \leq b_{j1} \lor \va_{j2}^t . \vc_j \leq b_{j2} \,.
\end{equation}
In general, the counterexample affects the coefficients of the
  half-spaces $\va_{jl}, b_{jl}$ for $l \in \{1,2\}$.  To wit, the
counterexample $\vx_j$ defines values for $\vu_j : \D(\vx_j)$,
$g_i(\vx_j)$, $f_i(\vx_j, \vu_j)$, which in turn, define $H_{j1}$ and
$H_{j2}$.  Thus, a good counterexample should ``remove'' as large a
set as possible from $C_{j-1}$. Looking at
  Eq.~\eqref{eq:cj-property-counterexample}, it is clear that
  $\va_{jl}^t . \vc_j - b_{jl} $ would measure how ``far away'' the
  counterexample is from the boundary of the half-space $H_{jl}$,
  assuming that $||\va_{jl}||$ is kept constant.  As proposed in our
earlier work~\cite{Ravanbakhsh-Others/2015/Counter-LMI}, one could
find a counterexample that maximizes these quantities, so that a
  ``good'' counterexample can be selected.  For checking~\eqref{eq:positivity-cond}, the
verifier finds a counterexample $\vx$ that maximizes a slack variable $\gamma$ s.t.
\[
V_{\vc_j}(\vx) \leq -\gamma \,,
\]
and for the second check~\eqref{eq:decr-condition}, the slack variable
$\gamma$ is introduced and maximized as follows:
\begin{align*}
&\vlam \geq \gamma \ \land \ \bigwedge_{i=1}^m A_i^t \vlam = \nabla V_{\vc_j} \cdot f_i(\vx) \ \land \\
&\vlam^t . \vb \geq -\nabla V_{\vc_j} \cdot f_0(\vx) + \gamma \,.
\end{align*}

As such, we cannot prove improved bounds on the number of
  iterations to terminate using this approach. However, we do, in
  fact, see a significant decrease in the number of iterations 
  by adding an objective function to the selection of the counterexample.

\section{Specifications}\label{sec:spec}
In previous sections, the problem of finding a CLF was
discussed. However, the concept can be extend to other Lyapunov-like
arguments that are useful for specifications such as
  reach-while-stay, and safety. In this section,
some of these specifications are addressed.
\subsection{Local Lyapunov Function}
Many nonlinear systems are only locally stabilizable, especially
in presence of input saturation. Therefore, we wish to study 
stabilization inside a compact set $S$. Let $int(R)$ be the interior
of set $R$. We consider 
a compact and connected set $S \subset X$ where the origin 
$\vzero \in int(S)$ is the state we seek to stabilize to. Furthermore, we
restrict the set $S$ to be a basic semi-algebraic set defined by a
conjunction of polynomial inequalities:
\[ S: \{ \vx \in \reals^n\ |\ p_{S,1}(\vx) \leq 0, \ldots, p_{S,k}(\vx) \leq 0 \} \,.\]

The stabilization problem can be reduced to the problem of
finding a local CLF $V$ which respect the following constraints

\begin{equation}\label{eq:local-clf-def} 
\begin{array}{rl}
& V(\vzero) = 0 \\
  (\forall \vx \in S \setminus \{\vzero\}) \ & V(\vx) > 0 \\
	(\forall \vx \in S \setminus \{\vzero\}) \ (\exists \vu \in U)\ & \nabla V \cdot f(\vx, \vu) < 0 \,. \\ 
\end{array}
\end{equation}
Given a function $V$ and a comparison predicate $\Join \in \{ =, \leq, <, \geq, > \} $, we define $V^{\Join \beta}$ as the set:
\[ V^{\Join \beta} = \{\vx | V(\vx) \Join \beta \} \,. \]
Let $\beta^*$ be maximum $\beta$ s.t. $V^{\leq \beta} \subseteq S$.
Having a CLF $V$, it guarantees that there is a strategy to keep the state
inside $V^{< \beta}$, and stabilize to the origin (Fig.~\ref{fig:clf}). 

\begin{theorem}
	Given a control affine system $\Psi$, where $U : \reals^m$ 
	and a polynomial control Lyapunov function $V$ satisfying Eq.~\eqref{eq:local-clf-def}, there is a feedback function $\K$ for which if $\vx_0 \in V^{< \beta^*}$, then:
	\begin{enumerate}
		\item $(\forall t \geq 0) \ \vx(t) \in S$
		\item $(\forall \epsilon > 0) \ (\exists T \geq 0) \ \norm{\vx(T) - \vzero} < \epsilon$\,.
	\end{enumerate}
\end{theorem}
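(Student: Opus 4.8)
The plan is to construct the feedback law $\K$ explicitly from the CLF $V$ using a Sontag-type universal formula, restricted appropriately so that trajectories starting in $V^{<\beta^*}$ never leave this sublevel set, and then argue stabilization via LaSalle/Lyapunov arguments. First I would observe that by the definition of $\beta^*$ as the maximal $\beta$ with $V^{\le\beta}\subseteq S$, the set $\overline{V^{<\beta^*}} = V^{\le \beta^*}$ is a compact subset of $S$ containing the origin in its interior (using positive definiteness of $V$ over $S$ and continuity). On this compact set, the local CLF condition~\eqref{eq:local-clf-def} guarantees that for every $\vx \in V^{<\beta^*}\setminus\{\vzero\}$ there is a $\vu\in U=\reals^m$ with $\nabla V\cdot f(\vx,\vu)<0$; since the system is control-affine, I can invoke the Sontag formula~\eqref{eq:sontag} (or its saturated variants cited in the paper) to obtain a feedback $\K$ that is smooth away from the origin and for which $\dot V = \nabla V \cdot f(\vx,\K(\vx)) < 0$ on $V^{<\beta^*}\setminus\{\vzero\}$.

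Next I would establish invariance of $V^{<\beta^*}$: because $\dot V<0$ strictly on the boundary-approaching region and on all of $V^{<\beta^*}\setminus\{\vzero\}$, the value $V(\vx(t))$ is strictly decreasing along any trajectory starting at $\vx_0\in V^{<\beta^*}$, so $V(\vx(t)) \le V(\vx_0) < \beta^*$ for all $t$ in the maximal interval of existence. This keeps the trajectory inside the compact set $V^{<\beta^*}\subset S$, which (as the paper notes for sublevel sets of Lyapunov functions together with smoothness of $f,\K$) forces the trajectory to exist for all $t\ge 0$; this gives property (1), $\vx(t)\in S$ for all $t\ge 0$. Then property (2) — asymptotic convergence to the origin — follows from a standard Lyapunov argument: $V$ is positive definite and radially unbounded-on-$S$ (here just continuous and positive definite on the compact invariant set suffices), $\dot V$ is negative definite on $V^{<\beta^*}\setminus\{\vzero\}$, so by the Lyapunov stability theorem (or LaSalle's invariance principle restricted to the compact invariant set $V^{\le V(\vx_0)}$) the origin is asymptotically stable with $V^{<\beta^*}$ contained in its region of attraction, hence for every $\epsilon>0$ there is $T\ge0$ with $\norm{\vx(T)-\vzero}<\epsilon$.

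The main obstacle I expect is the behavior at the origin: the Sontag formula generically yields a feedback $\K$ that is not continuous at $\vzero$ (the paper already flags this in its Remark following~\eqref{eq:sontag}), so the clean existence/uniqueness theory for smooth $f,\K$ does not literally apply in a neighborhood of the equilibrium. I would handle this exactly as the cited references do: either appeal to the known result that the Sontag feedback still guarantees asymptotic stabilization despite the discontinuity at $\vzero$ (Sontag~\cite{sontag1989universal}, and the saturated-input versions~\cite{LIN1991UNIVERSAL,suarez2001global}), or note that one only needs convergence into an arbitrarily small ball $\B_\epsilon(\vzero)$ — i.e. property (2) as stated does not require well-posedness at the origin itself, only on $S\setminus\B_\epsilon(\vzero)$ where $\K$ is smooth and $\dot V$ is bounded away from zero by a positive constant (by compactness), giving a uniform decrease rate that drives $V(\vx(t))$ below any threshold in finite time. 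A secondary technical point is verifying that $V^{\le\beta^*}$ is actually contained in $S$ and not merely in its closure; since $S$ is closed (a basic semi-algebraic set cut out by non-strict inequalities) this is immediate, and the strict sublevel set $V^{<\beta^*}$ is then safely in the interior region where the CLF condition was assumed.
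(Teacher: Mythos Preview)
Your proposal is correct and follows essentially the same approach as the paper: construct the feedback via Sontag's universal formula so that $\dot V<0$ on $S\setminus\{\vzero\}$, argue that the sublevel set $V^{<\beta^*}\subseteq S$ is forward invariant because $V$ is strictly decreasing along trajectories, and then invoke standard Lyapunov theory for the closed loop to obtain asymptotic convergence. The paper's proof presents the invariance step as a contradiction argument (supposing the trajectory reaches $\partial S$ and deriving $V(\vx(t_1))=\beta^*$ versus $V(\vx(t_1))<V(\vx(0))<\beta^*$), whereas you argue it directly via monotonicity of $V(\vx(t))$; these are equivalent, and your treatment of the discontinuity-at-origin issue and the containment $V^{\le\beta^*}\subseteq S$ is in fact more explicit than the paper's.
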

\begin{proof}
	First, using Sontag results, 
	there exists a feedback function $\K^*$ s.t. while $\vx \in S$, then 
	$\frac{dV}{dt} = \nabla V \cdot f(\vx, \vu) < 0$~\cite{sontag1989universal}. Assuming $\vx(0) = \vx_0 \in V^{<\beta^*} \subset S$, then initially $V(\vx(0)) < \beta^*$. Now, assume the state reaches $\partial S$ at time $t_2$. By continuity, there is a time $t_1 \leq t_2$
		s.t. $\vx(t_1) \in \partial (V^{<\beta^*})$ and $(\forall t \in [0, t_1]) \ \vx(t) \in S$. Thus, $V(\vx(t_1)) = \beta^*$ and
	\[
	V(\vx(t_1)) = \left(V(\vx(0)) + \int_{0}^{t_1} \frac{dV}{dt} dt\right) < V(\vx(0)) \,.
	\]
	This means $V(\vx(t_1)) < \beta^*$, which is a contradiction. Therefore, the state never reaches $\partial S$ and remains in $int(S)$ forever.
	
	$V$ would be a Lyapunov function for
	the closed loop system when the control unit is replaced with the feedback function $\K^*$ and using standard results in Lyapunov theory
	$(\forall \epsilon > 0) \ (\exists T \geq 0) \ ||\vx(T) - 0|| < \epsilon$.
\end{proof}
Finding a local CLF is similar to finding a global one. One only needs to
consider set $S$ in the formulation. The observation set would consists of
$(\vx_i, \vu_i)_{i=1}^j$ where $\vx_i$ is inside $S$ and the verifier would
check the following conditions:
\begin{align*}
	(\exists \vx \neq \vzero)& \bigwedge_{i=1}^k p_{S,i}(\vx) \leq 0 \land V(\vx) \geq 0 \\
	(\exists \vx \neq \vzero)& \bigwedge_{i=1}^k p_{S,i}(\vx) \leq 0 \land 
	(\forall \vu \in U) \ \nabla V \cdot f(\vx, \vu) \geq 0 \,,
\end{align*}
which is as hard as the one solved in Section.~\ref{sec:verifier}.

\begin{lemma} \label{lem:completeness}
	Assuming (i) the demonstrator function $\D$ is smooth, (ii) the closed loop system with feedback law $\D$ is exponentially stable over a bounded region $S$, then there exists a local polynomial CLF, compatible with $\D$.
\end{lemma}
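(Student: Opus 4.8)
The plan is to invoke the converse Lyapunov theorem for exponentially stable systems specialized to polynomial approximations, which is exactly the result of Peet~\cite{Peet/2009/Exponentially} cited earlier in the paper. First I would observe that once we fix the demonstrator feedback $\D$, the closed loop $\dot{\vx} = f(\vx, \D(\vx))$ is an autonomous system that, by hypothesis~(i), has a smooth right-hand side (composition of smooth $f$ and smooth $\D$), and by hypothesis~(ii) is exponentially stable over the bounded region $S$. Applying Peet's result to this closed loop yields a polynomial function $W(\vx)$ that is a \emph{local} Lyapunov function for $\Psi(X, U, f, \D)$ over (a possibly slightly shrunk) neighborhood contained in $S$: that is, $W(\vzero) = 0$, $W$ is positive definite on a neighborhood of the origin, and $\nabla W \cdot f(\vx, \D(\vx)) < 0$ on that neighborhood minus the origin.

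Next I would argue that this $W$ is, by construction, compatible with $\D$ in the sense of Definition~\ref{def:compatible-dem} (restricted to the local setting of Eq.~\eqref{eq:local-clf-def}): positive definiteness and the strict decrease condition along the specific input $\vu = \D(\vx)$ are precisely the two requirements, and they hold for every $\vx \neq \vzero$ in the region because $W$ is a Lyapunov function for the closed loop. Since $W$ satisfies the decrease condition with the \emph{particular} choice $\vu = \D(\vx) \in U$, it a fortiori satisfies $\min_{\vu \in U} \nabla W \cdot f(\vx,\vu) < 0$, so $W$ is in fact a local CLF satisfying Eq.~\eqref{eq:local-clf-def}. Thus the polynomial $W$ witnesses the claim.

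One technical subtlety I would address is the mismatch between the region over which Peet's theorem guarantees a polynomial Lyapunov function and the prescribed set $S$: the converse theorem typically gives a Lyapunov function valid on some sublevel-set neighborhood of the origin, which may be strictly smaller than $S$. This is handled by noting that the lemma only asserts existence of \emph{a} local polynomial CLF compatible with $\D$ — it does not demand that its validity region coincide with $S$ — or alternatively by shrinking $S$ at the outset, since exponential stability over $S$ implies exponential stability over any compact subset containing the origin in its interior. A second minor point is that $\D$ maps into $U$, so the decrease inequality is automatically compatible with the input constraint set.

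The main obstacle is really just a matter of correctly citing and stating the converse theorem with the right hypotheses: Peet's result requires exponential stability (which we have by assumption~(ii)) and smoothness of the vector field on a bounded domain (which we have by assumption~(i)), and it produces a polynomial Lyapunov function of \emph{some} finite degree — crucially, the lemma does not claim that a polynomial CLF of any \emph{prescribed} degree $D_V$ exists, only that some polynomial CLF exists, which sidesteps the degree-bound difficulty flagged earlier in the related-work discussion. I expect the proof to be short: set up the closed loop, invoke~\cite{Peet/2009/Exponentially}, and check the two defining inequalities of a local CLF against the definition of demonstrator compatibility.
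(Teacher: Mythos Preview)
Your proposal is correct and follows essentially the same route as the paper: invoke Peet's converse Lyapunov theorem for the smooth, exponentially stable closed loop $\dot{\vx}=f(\vx,\D(\vx))$ to obtain a polynomial Lyapunov function, then observe that this function is simultaneously compatible with $\D$ (by Definition~\ref{def:compatible-dem}) and a local CLF (since the decrease along $\vu=\D(\vx)\in U$ forces the $\min_{\vu\in U}$ to be negative). Your discussion of the region-of-validity subtlety and the degree-bound caveat goes slightly beyond what the paper's proof spells out, but the core argument is identical.
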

\begin{proof}
	Under assumption (i) and (ii), one can show that a polynomial local Lyapunov function $V$ (not control Lyapunov function) exists for the closed loop system $\Psi(X, U, f , \D)$~\cite{peet2008polynomial}:
	\[
	V(\vzero)=0 \ \land
	(\forall \vx \in S \setminus \vzero) \left( \begin{array}{c}
		V(\vx) > 0 \\
		\nabla V \cdot f(\vx, \D(\vx)) < 0
	\end{array} \right) \,.
	\]
	 This means that $V$ is compatible with the demonstrator. $V$ is also a local CLF as it satisfies Eq.~\eqref{eq:local-clf-def}.
\end{proof}

As mentioned, the learning framework fails when the basis functions are not expressive to capture a CLF compatible with the demonstrator and one needs to update the demonstrator and/or the set of basis functions. However, if one believes that the demonstrator satisfies the conditions in Lemma~\ref{lem:completeness}, then, success of the learning procedure is guaranteed, provided the set of basis functions is rich enough. 

%
%

\subsection{Barrier Certificate}
Barrier certificates are used to guarantee safety properties for the
system.  More specifically, given compact and connected semi-algebraic
sets $S$ (safe) and $I$ (initial) s.t. $I \subset int(S)$, the
  overall goal is to ensure that whenever $\vx(0) \in I$, we have
  $\vx(t) \in S$ for all time $t \geq 0$. The sets $S,I$ are expressed
  as semi-algebraic sets of the following form:
\begin{align*}
	S: \{ \vx \in \reals^n\ |\ p_{S,1}(\vx) \leq 0, \ldots, p_{S,k}(\vx) \leq 0 \}\\
	I: \{ \vx \in \reals^n\ |\ p_{I,1}(\vx) \leq 0, \ldots, p_{I,l}(\vx) \leq 0 \}\,.
\end{align*} 

The safety problem can be reduced to the problem of
finding a (relaxed~\cite{prajna2004safety}) control barrier certificate $B$ which respect 
the following constraints~\cite{WIELAND2007462}:
\begin{equation}\label{eq:barrier-cert-def} 
\begin{array}{rl}
(\forall \vx \in I) \ & B(\vx) < 0 \\
(\forall \vx \not\in int(S)) \ & B(\vx) > 0 \\
(\forall \vx \in S \setminus int(I)) \ (\exists \vu \in U)\ & \nabla B \cdot f(\vx, \vu) < 0 \,. \\ 
\end{array}
\end{equation}
To find such a barrier certificate, one needs to define $B$ as a linear
combination of basis functions and use the framework to find a correct $B$.
The verifier would check the following conditions that negate each of the
conditions in Eq.~\eqref{eq:barrier-cert-def}.
First we check if there is a $\vx \in I$ such that $B(\vx) \geq 0$.
\[
 (\exists \vx)\ \ \bigwedge_{j=1}^l p_{I,j}(\vx) \leq 0 \ \land\ B(\vx) \geq 0\,.
\]

Next, we check if there exists a $\vx \not \in int(S)$ such that $B(\vx) \leq 0$. Clearly, if $\vx \not\in int(S)$, we have $p_{S,i}(\vx) \geq 0$ for at least one $i \in \{1,\ldots,k\}$. This yields $k$ conditions of the form:
\[
  (\exists \vx) \ p_{S,i}(\vx) \geq 0 \land B(\vx) \leq 0,\ i \in \{ 1, \ldots, k\}\,.
\]
Finally, we ask if $\exists \vx \in S \setminus int(I)$ that violates the decrease condition. Doing so, we obtain $l$ conditions. For each $i \in \{ 1, \ldots, l \}$, we solve
\begin{align*}
(\exists \vx) & \ \underset{\vx \not\in int(I)}{\underbrace{p_{I,i}(\vx) \geq 0}} \land\ \underset{\vx \in S}{\underbrace{\bigwedge_{j=1}^k p_{S,j}(\vx) \leq 0}} \\
	     &\ \land (\forall \vu \in U) \ \nabla B \cdot f(\vx, \vu) \geq 0 \,,
\end{align*}

Overall, we have $1 + k + l$ different checks. If any of these checks result
in $\vx$, it serves as a counterexample to the conditions for a barrier function
~\eqref{eq:barrier-cert-def}.

As before, we choose basis functions $g_1, \ldots, g_r$ for the barrier
set $B_\vc: \sum_{k=1}^r c_k g_k(\vx)$.
Given observations set $O_j: \{ (\vx_1, \vu_1), \ldots, (\vx_{j}, \vu_{j})\}$, the corresponding
 candidate set $C_j$ of observation compatible barrier functions
is defined as the following:
\[
C_j: \left\{ \vc |\hspace{-0.1cm}
\begin{array}{l}
\bigwedge\limits_{(\vx_i, \vu_i) \in O_j}
\left(\begin{array}{rl} 
\vx_i \in I \rightarrow & B_\vc(\vx_i) < 0 \ \land \\
\vx_i \not\in int(S) \rightarrow & B_\vc(\vx_i) > 0 \ \land \\
\vx_i \in S\setminus int(I)& \\ \rightarrow \nabla B_\vc & . f(\vx_i, \vu_i)<0 \end{array}\right)\end{array}\right\}.
\]
The LHS of the implication for each observation $(\vx_i, \vu_i)$ is evaluated
 and the RHS constraint is added only when the LHS holds. Nevertheless, $\overline{C_j}$ remains a polytope similar to Lemma.~\ref{lemma:cj-convex}.

\begin{remark}
  For the original control barrier certificates, it is sufficient to
  check whether $B$ can be decreased on the boundary ($B^{=0}$).  The
  relaxed version of control barrier certificates is introduced by
  Prajna et al.~\cite{prajna2004safety} using sum of squares (SOS)
  relaxation. Here we use this relaxation to simplify the candidate
  generation process. However, for the verification process this
  relaxation is not needed and without any complication, one could
  verify the original conditions as opposed to the relaxed ones. This
  trick will improve the precision of the method.
\end{remark}

\subsection{Reach-While-Stay}
In this problem, the goal is to reach a target set $T$ from an initial
set $I$, while staying in a safe set $S$, wherein
  $I \subseteq S$. The set $S$ is assumed to be compact.  By
combining the local Lyapunov function and a barrier certificate, one
can define a smooth, Lyapunov-like
function $V$, that satisfies the following conditions
(see~\cite{Ravanbakhsh-Others/2016/Robust}):

\begin{equation}\label{eq:lyapunov-like-def} 
  \begin{array}{lrl}
C1:&(\forall \vx \in I) & V(\vx) < 0 \\
C2:&(\forall \vx \not\in int(S)) & V(\vx) > 0 \\
C3:&(\forall \vx \in S \setminus int(T)) (\exists \vu \in U)& \nabla V \cdot f(\vx,\vu)\hspace{-0.05cm}<\hspace{-0.05cm}0. \\ 
\end{array}
\end{equation}

 We briefly sketch the argument as to why such a Lyapunov-like
  function satisfies the reach-while-stay, referring the
  reader to our earlier work on control certificates for a detailed
  proof~\cite{Ravanbakhsh-Others/2016/Robust}.  Suppose we have found
  a function $V$ satisfying~\eqref{eq:lyapunov-like-def}.  $V$ is
  strictly negative over the initial set $I$ and strictly positive
  outside the safe set $S$.  Furthermore, as long as the flow remains
  inside the set $S$ without reaching the interior of the target $T$,
  there exists a control input at each state to strictly decrease the
  value of $V$.  Combining these observations, we conclude either (a)
  the flow remains forever inside set $S \setminus int(T)$ or (b) must visit the
  interior of set $T$ (before possibly leaving $S$). However, option (a) is ruled
  out because $S \setminus int(T)$ is a compact set and $V$ is a continuous
  function. Therefore, if the flow were to remain within $S \setminus int(T)$ forever
  then $V(\vx(t)) \rightarrow -\infty$ as $t \rightarrow \infty$,
  which directly contradicts the fact that $V$ must be lower bounded
  on a compact set $S \setminus int(T)$. We therefore, conclude that the flow must stay
  inside $S$ and eventually visit the interior of the target $T$.  

The learning framework extends easily to search for a function $V$
that satisfies the constraints in Eq.~\eqref{eq:lyapunov-like-def}.

\subsection{Finite-time Reachability}
The idea of funnels has been developed to use the Lyapunov argument
for finite-time reachability~\cite{mason1985mechanics}.
Then, following Majumdar et al., a library of
control funnels can provide building blocks for motion
planning~\cite{majumdar2013robust}. Likewise, control funnels are used to
reduce reach-avoid problem to timed automata~\cite{bouyer2017timed}.

  In this section, we consider Lyapunov-like functions for establishing
  control funnels.  Let $I$ be a set of initial states for the plant ($\vx(0) \in I$),
  and $T$ be the target set that the system should reach at time $\T > 0$ ($\vx(\T) \in int(T)$).
  Let $S$ be the safe set, such that $I, T \subseteq S$
  and $\vx(t) \in S$ for time $t \in [0,\T]$. The goal is to find a controller
  that guarantees that whenever $\vx(0) \in I$, we have $\vx(t) \in S$ for
  all $t \in [0, \T]$ and $\vx(\T) \in int(T)$. To solve this, we search instead
  for a control Lyapunov-like function $V(\vx,t)$ that is a function of
  the state and time, with the following properties:

\begin{equation}\label{eq:c-funnel-def} 
\begin{array}{lrl}
C1:& (\forall \vx \in I) & V(\vx, 0) < 0 \\
C2:& (\forall \vx \not\in int(T)) & V(\vx, \T)\ > \ 0 \\
C3:& \left(\forall \begin{array}{l}t \in [0, \T] \\ 
		\vx \not\in int(S) \end{array}\right) & V(\vx, t) > 0 \\
C4:&\left(\begin{array}{l}\forall t \in [0, \T]\\ 
		\forall \vx \in S\end{array}\right) (\exists \vu \in U)& \dot{V}(t, \vx, \vu) < 0 \,,\\ 
\end{array}
\end{equation}
where $\dot{V}(t, \vx, \vu) = \frac{\partial V}{\partial t} + \nabla V \cdot f(\vx, \vu)$.
First of all, when initialized to $\vx(0) \in I$, we have $V(\vx,0) < 0$ by condition
  C1. Next, the controller's action through condition $C4$ guarantees that $\frac{dV}{dt} < 0$ over the
  trajectory for $t \in [0, \T]$, as long as $\vx \in S$. Through $C3$, we can guarantee that $\vx(t) \in S$ for
  $t \in [0,\T]$. Finally, it follows that $V(\vx(\T),\T) < 0$. Through $C2$, we conclude that
  $\vx \in int(T)$. 
As depicted in Fig.~\ref{fig:funnel}, the set $V^{=0}$ forms a barrier, and set $V^{<0}$ 
forms the required funnel, while $t \leq \T$.
\begin{figure}[t]
\begin{center}
	\includegraphics[width=0.48\textwidth]{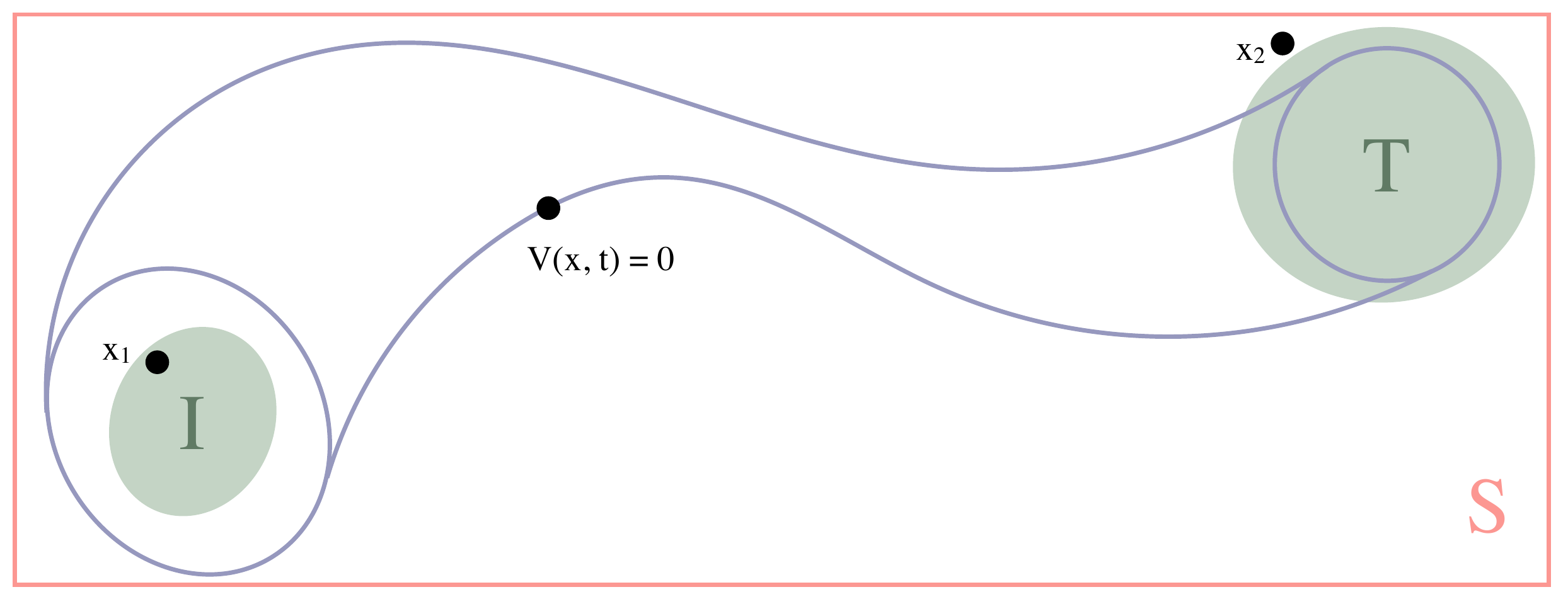}
\end{center}
\caption{A schematic view of a control funnel. Blue lines show the boundary of
the funnel $V(\vx, t) = 0$. Also, initially $V(\vx_1, 0) < 0$ and at the end of
horizon, $V(\vx_2, \T) > 0$.}\label{fig:funnel} 
\end{figure}

\begin{theorem}
	Given compact semi-algebraic sets $I$, $S$, $T$, a time horizon $\T$, and a smooth function $V$ satisfying Eq.~\eqref{eq:c-funnel-def}, there exists a control
	strategy s.t. for all traces of the closed loop system, if $\vx(0) \in I$, then
	\begin{enumerate}
		\item $(\forall t \in [0, \T]) \ \vx(t) \in S$
		\item $\vx(\T) \in int(T)$.
	\end{enumerate}
\end{theorem}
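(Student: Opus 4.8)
The plan is to mimic the argument already sketched informally just above the theorem statement, which is essentially a time-dependent Lyapunov argument for finite-time reachability. First I would invoke a result analogous to the Sontag-style construction used in the earlier local-CLF theorem: since $V$ is smooth and for each $(t,\vx)$ with $t \in [0,\T]$ and $\vx \in S$ there exists $\vu \in U$ making $\dot{V}(t,\vx,\vu) < 0$, one can extract a feedback law $\K(\vx,t)$ (possibly discontinuous, as in the Sontag remark) such that along any closed-loop trajectory with $\vx(t) \in S$ we have $\frac{d}{dt} V(\vx(t),t) = \dot{V}(t,\vx(t),\K(\vx(t),t)) < 0$. I would state this as the consequence of condition C4 together with control-affineness and the cited universal-formula results, exactly as was done in the proof of the local-CLF theorem.

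Next I would prove part (1), that $\vx(t) \in S$ for all $t \in [0,\T]$, by contradiction in the same style as the earlier proof. Suppose $\vx(0) \in I$ but the trajectory leaves $S$; let $t_2 \in (0,\T]$ be the first time $\vx(t_2) \in \partial S$. By condition C1, $V(\vx(0),0) < 0$. On $[0,t_2]$ the trajectory stays in $S$, so the extracted feedback guarantees $V(\vx(t),t)$ is strictly decreasing on $[0,t_2]$, hence $V(\vx(t_2),t_2) < V(\vx(0),0) < 0$. But $\vx(t_2) \notin int(S)$ and $t_2 \in [0,\T]$, so condition C3 forces $V(\vx(t_2),t_2) > 0$ — a contradiction. (One should handle $t_2 = \T$ by noting C3 covers the full interval $[0,\T]$, which it does as written.) Therefore $\vx(t) \in S$ for all $t \in [0,\T]$.

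Then part (2) follows quickly: having established $\vx(t) \in S$ on all of $[0,\T]$, the feedback law gives $V(\vx(t),t)$ strictly decreasing on $[0,\T]$, so $V(\vx(\T),\T) < V(\vx(0),0) < 0$ using C1 again. If $\vx(\T) \notin int(T)$, then condition C2 gives $V(\vx(\T),\T) > 0$, contradicting $V(\vx(\T),\T) < 0$. Hence $\vx(\T) \in int(T)$, completing the proof.

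The main obstacle — or at least the point requiring the most care — is the extraction of a well-behaved feedback law from condition C4 in the time-varying setting: the earlier theorem cited Sontag's universal formula for the time-invariant control-affine case, and I would need to argue it carries over when the Lie-derivative expression $\dot{V} = \partial V/\partial t + \nabla V \cdot f(\vx,\vu)$ has an extra time-dependent drift term $\partial V/\partial t$, which is harmless since it simply shifts $a(\vx)$ to $a(\vx,t) = \partial V/\partial t + \nabla V \cdot f_0(\vx)$ and the formula applies pointwise in $t$. A secondary subtlety is ensuring trajectories exist on the whole interval $[0,\T]$ and that the "first exit time" $t_2$ is well-defined; this follows from continuity of $\vx(\cdot)$, compactness of $S$, and closedness of $\partial S$, paralleling the reasoning in the local-CLF proof, so I would just remark on it rather than belabor it.
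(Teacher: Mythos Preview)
Your proposal is correct and follows essentially the same approach as the paper's proof: invoke a Sontag-type universal formula to extract a feedback making $\dot V<0$ on $[0,\T]\times S$, then argue by contradiction using C1 and C3 for safety and C1 and C2 for the terminal condition. The only cosmetic difference is that the paper inserts an intermediate time $t_1$ at which $V(\vx(t_1),t_1)=0$ and derives the contradiction there, whereas you derive it directly at the first boundary-hitting time $t_2$; your version is a slight streamlining of the same argument.
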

\begin{proof}
	Using Sontag result~\cite{sontag1989universal,WIELAND2007462}, there is a feedback $\K$ which decreases
	value of $V$ while $t \in [0, \T]$ and $\vx \in S$:
	\[
	(\forall t \in [0, \T], \vx \in S) \ \dot{V}(t, \vx, \K(\vx)) < 0\,.
	\]
	 Now, assume $\vx(0) \in I$.
	By the first condition of Eq.~\eqref{eq:c-funnel-def}, $V(\vx(0), 0) < 0$.
	Assume there is a time $t \in [0, \T]$ s.t. $\vx(t) \not\in S$. By compactness of $S$,
	and smooth dynamics, there is a time $t_2$ s.t. $V(\vx(t_2), t_2) \in \partial S$
	and for all $t < t_2$, $\vx(t) \in int(S)$. According to the third condition of 
	Eq.~\eqref{eq:c-funnel-def}, $V(\vx(t_2), t_2) > 0$. Since $V$ is a smooth function
	there is a time $t_1$ ($0 < t_1 < t_2$) s.t. $V(\vx(t_1), t_1) = 0$ and for all 
	$t < t_1$, $V(\vx(t), t) \in S$. By the fourth condition in Eq.~\eqref{eq:c-funnel-def}:
	\begin{align*}
	V(\vx(t_1), t_1) &= V(\vx(0), 0) + \int_0^{t_1} \dot{V}(t, \vx(t), \K(\vx(t))) \\	
					 &< V(\vx(0),0) < 0 \, .
	\end{align*}
	This is a contradiction and therefore, for all $t \in [0, \T]$, $\vx(t) \in S$.
	And similar to the argument above, it is guaranteed that for all $t \in [0, \T]$,
	$V(\vx(t),t) < 0$. By the second condition of Eq.~\eqref{eq:c-funnel-def}, it
	is guaranteed that if $\vx(\T) \not\in int(T)$, then $V(\vx(\T), \T) > 0$. 
	Therefore, $\vx(\T) \in int(T)$.
\end{proof}

Using the Lyapunov-like conditions~\eqref{eq:c-funnel-def}, the
problem of finding such control funnels (respecting
Eq.~\eqref{eq:c-funnel-def}) belongs to the class of problem which
could be solved with our method.

\section{Experiments}\label{sec:expr}
In this section, we describe numerical results on some case studies.
We first describe our implementation of the techniques
  described thus far.  The verifier component is implemented using
tool Gloptipoly~\cite{henrion2009gloptipoly}, which in turn uses
Mosek to solve SDP problems~\cite{mosek2010mosek}, and only needs a
degree of relaxation $D$ as its input. For the demonstrator, a
nonlinear MPC scheme is used, which is solved using a gradient descent
algorithm.  For each benchmark, the following parameters are tuned
to obtain the cost function:
\begin{enumerate}
	\item time step $\tau$
	\item number of horizon steps $N$
	\item $Q$, $R$, and $H$ for the cost function:
	\[
	\begin{array}{c}
	\left( \sum_{i = 1}^{N-1} \vx(i\tau)^t \ Q \ \vx(i\tau) + \vu(i\tau)^t \ R \ \vu(i\tau) \right) \\	+ \vx(N\tau)^t \ H \ \vx(N\tau)\,.
	\end{array}
	\]
      \end{enumerate}
As such, an MPC cost function is designed to enforce a specification such as
stability or reaching a target set. However, since the approach provides no
guarantees, we run  hundreds of simulations of the closed loop system starting 
from randomly selected initial states to check whether the specifications are
met. Failing this, the cost function is adjusted, repeating the testing process.
And finally, for the learner, quadratic polynomials
are used as candidates for the desired Lyapunov-like functions. Nevertheless,
more complicated polynomials are also supported by our implementation.
Beside these inputs, each control problem has a specification.
For example, for a \emph{reach-while-stay} problem,
the target set $T$, initial set $I$, and safe set $S$ are provided
as inputs.

All the computations
reported in this section were performed on a Mac Book Pro with 2.9 GHz
Intel Core i7 processor and 16GB of RAM. The reported CLFs are rounded
to two decimal points. The implementation is available upon request.

\subsection{Case Study I:} This system is two-wheeled mobile robot
modeled with five states $[x, y, v, \theta, \gamma]$ and two control
inputs~\cite{francis2016models}, where $x$ and $y$ define the position
of the robot, $v$ is its velocity, $\theta$ is the rotational
position and $\gamma$ is the angle between the front and rear
  axles.  The goal is to stabilize the robot to a target velocity
$v^*=5$, and $\theta^* = \gamma^* = y^* = 0$ as shown in Fig.~\ref{fig:bicycle}. The dynamics of the model is as
follows:
\begin{equation*}\label{ex:bicycle-dyn}
	\left[ \begin{array}{l}
		\dot{x} \\
		\dot{y} \\ \dot{v} \\ \dot{\theta} \\ \dot{\sigma}
	\end{array}\right] = 
	\left[ \begin{array}{l}
		v\cos(\theta) \\
		v\sin(\theta) \\ u_1 \\ v\sigma \\ u_2
	\end{array} \right] \,,
\end{equation*}
where $\sigma = tan(\gamma)$ (see Fig.~\ref{fig:bicycle}). 
Variable $x$ is immaterial in
the stabilization problem and is dropped to obtain a model with 
four state variables $[y, v, \theta, \sigma]$. Also, the sine function is approximated with a
polynomial of degree one. 
The inputs are saturated over the intervals $U: [-10, 10]\times[-10, 10]$, and
the specification is reach-while-stay, provided by the following sets
\[
\begin{array}{rl}
		S: &[-2, 2]\times[3, 7]\times[-1, 1]\times[-1, 1] \\
		I: &\B_{0.4}(\vzero) \\
		T: &\B_{0.1}(\vzero) \,.
\end{array}
\]
\begin{figure}[t]
\begin{center}
	\includegraphics[width=0.4\textwidth]{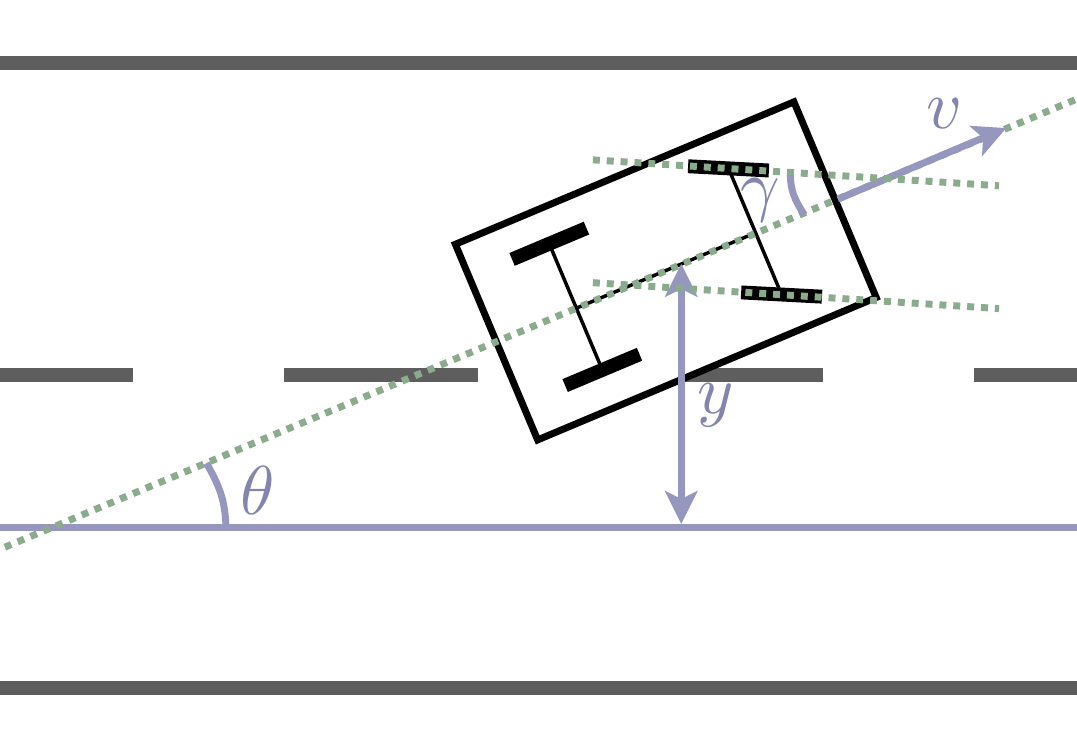}
      \end{center}
\caption{A schematic view of the bicycle model.}\label{fig:bicycle} 
\end{figure}
The method finds the following CLF:
\begin{align*}
V =
&0.37 y^2
+ 0.52 y\theta
+ 3.11 \theta^2
+ 0.98 y\sigma
+ 2.23 \sigma\theta +\\
&4.46 \sigma^2
- 0.36 vy
- 0.29 v\theta
+ 0.95 v\sigma
+ 3.86 v^2\,.
\end{align*}

This CLF is used to design a
  controller. Fig.~\ref{fig:bicycle-sim} shows the projection of
  trajectories on to $x$-$y$ plane for the synthesized controller in red.  The
  blue trajectories are generated using the MPC controller that served as the
  demonstrator. The
  behavior of the system for both controllers are similar but not
  identical. Notice that the initial state in
  Fig.~\ref{fig:bicycle-sim}(c) is not in the region of attraction
  (guaranteed region). Nevertheless, the CLF-based controller can
  still stabilize the system while keeping the system in the safe
  region. On the other hand, the MPC violates the safety constraints
  even when the safety constraints are imposed in the MPC
  scheme. The safety is violated because in the beginning $\theta$
  gets larger than $1$ and it gets close to $\pi/2$ (the robots moves
  almost vertically).

\begin{figure}[t]
\begin{center}
	\includegraphics[width=0.48\textwidth]{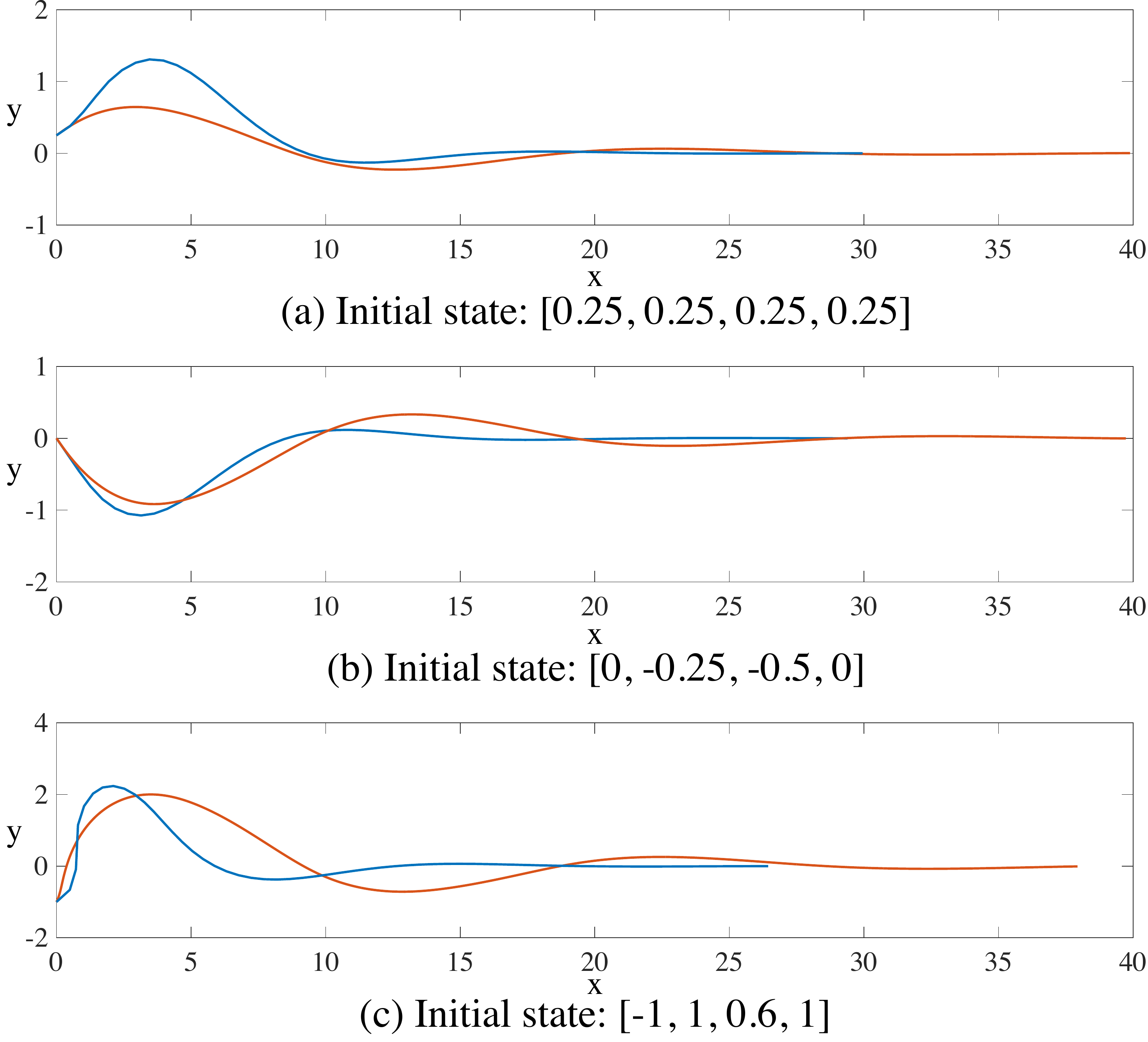}
      \end{center}
\caption{Simulation for the bicycle robot - Projected on x-y
 plane. Simulation traces are plotted for three different initial states. Blue (red) traces corresponds to trajectories of the system for MPC controller (CLF-based controller).}\label{fig:bicycle-sim} 
\end{figure}

\subsection{Case Study II:}
The problem of keeping the inverted pendulum in a vertical 
position is considered. This case study has applications in balancing 
two-wheeled robots~\cite{CHAN201389}. The system has two degrees
of freedom: the position of the cart $x$, and the degree
of the inverted pendulum $\theta$. The goal is to keep 
the pendulum in a vertical position by moving the cart with
input $u$ (Fig.~\ref{fig:inverted-pendulum}).

The system has four state variables 
$[x, \dot{x}, \theta, \dot{\theta}]$ with the following 
dynamics~\cite{landry2005dynamics}:
\begin{equation*}\label{eq:inverted-pendulum-dyn}
	\left[ \begin{array}{l}
		\ddot{\vx} \\ \ddot{\theta}
	\end{array}\right] = 
	\left[ \begin{array}{l}
		\frac{4u - 4\epsilon\dot{x} + 4ml\dot{\theta}^2 \sin(\theta) - 3mg\sin(\theta)\cos(\theta)}{4(M+m)-3m\cos^2(\theta)} \\ \frac{ (M+m)g\sin(\theta) - (u - \epsilon\dot{x}) \cos(\theta) - ml\dot{\theta}^2\sin(\theta)\cos(\theta)}{l(\frac{4}{3}(M+m)-m\cos(\theta)^2)}
	\end{array} \right] \,,
\end{equation*}
where $m = 0.21$ and $M=0.815$ are masses of the pendulum and the cart
respectively, $g=9.8$ is the gravitational acceleration, and $l=0.305$ 
is distance of center of mass of the pendulum from the cart. After partial
linearization, the dynamics have the following form:
\begin{equation*}\label{eq:inverted-pendulum-dyn}
	\left[ \begin{array}{l}
		\ddot{\vx} \\ \ddot{\theta}
	\end{array}\right] = 
	\left[ \begin{array}{l}
		4u + \frac{4(M+m)g \tan(\theta) - 3mg\sin(\theta)\cos(\theta)}{4(M+m)-3m\cos^2(\theta)} \\ \frac{- 3 u \cos(\theta)}{l}
	\end{array} \right] \,.
\end{equation*}

The trigonometric and rational functions are approximated with polynomials of degree three. The input is saturated $U:[-20, 20]$ and sets for
a safety specification are $S: [-1, 1]^4 , \ I:\B_{0.1}(\vzero)$.

Fig.~\ref{fig:inverted-sim} shows the some of the traces of the closed loop system 
for the CLF-based controller as well as the MPC controller. Notice that the trajectories of the CLF based
controller  are quite distinct from the MPC, especially in regions where the demonstration is not provided during the CLF synthesis process. For example, in Figure.~\ref{fig:inverted-sim}(b), the behaviors of these controllers are similar
outside the initial set $I$. However, inside $I$ (near the equilibrium) the 
behavior is different, since the demonstrations are only generated for states outside $I$.
The CLF-based controller is designed using the following CLF generated by the learning framework:

\begin{align*}
V =& 16.37 \dot{\theta}^2 + 50.37 \dot{\theta}\theta
+ 75.16 \theta^2 + 13.51 x\dot{\theta} 
+ 43.26 x\theta + \\
& 10.44 x^2 + 23.30 \dot{\theta}\dot{x} + 38.09 \dot{x}\theta
+ 11.13 \dot{x}x + 9.55 \dot{x}^2 \,.
\end{align*}

\begin{figure}[t]
\begin{center}
	\includegraphics[width=0.3\textwidth]{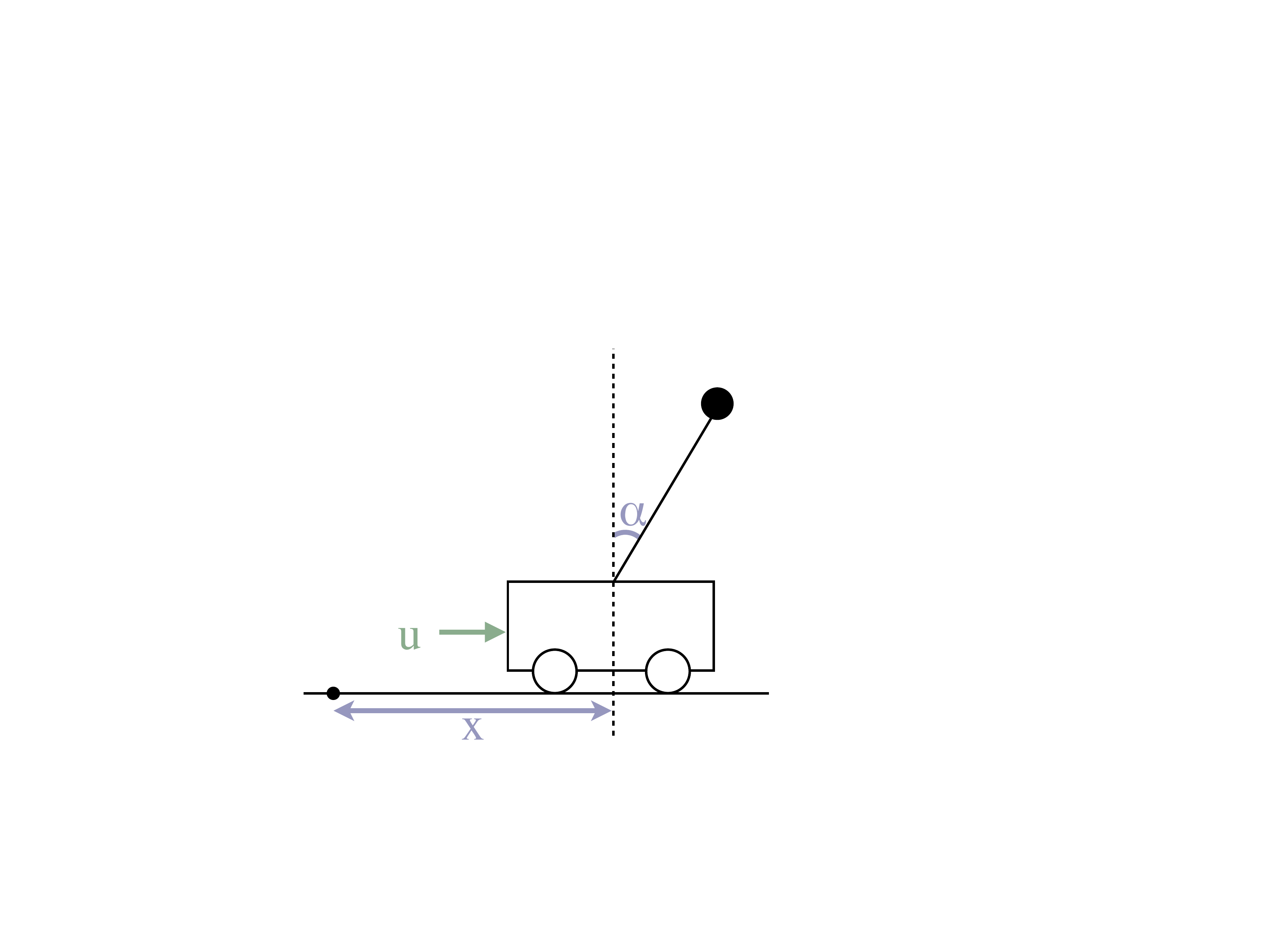}
\end{center}
\caption{A schematic view of the ``inverted pendulum on a cart".}\label{fig:inverted-pendulum} 
\end{figure}
\begin{figure*}[t]
\begin{center}
	\includegraphics[width=1\textwidth]{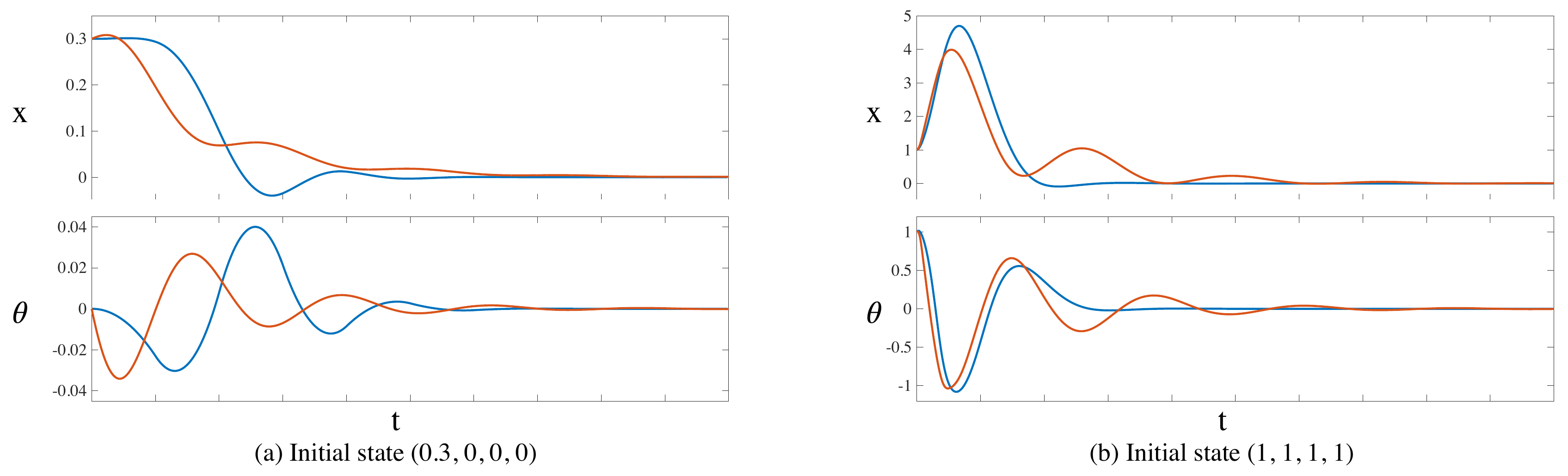}
\end{center}
\caption{Simulation for the inverted pendulum system. Simulation traces are plotted 
for two initial states. Red (blue) traces show the simulation for the CLF-based (MPC) controller.}\label{fig:inverted-sim} 
\end{figure*}

\subsection{Case Study III:}
Caltech ducted fan has been used to study the aerodynamics of a single 
wing of a thrust vectored, fixed wing aircraft~\cite{jadbabaie2002control}.  
In this case study, we wish to
design forward flight control in which the angle of attack needs to be
set for a stable forward flight. The model of the system is carefully 
calibrated through wind tunnel experiments. The system has four states:
$v$ is the velocity, $\gamma$ is the moving direction the ducted fan,
$\theta$ is the rotational position, and $q$ is the angular velocity.
The control inputs are the thrust $u$ and the angle at which the
thrust is applied $\delta_u$ (Fig.~\ref{fig:ducted-fan}). 
Also, the inputs are saturated: 
$U :[0, 13.5] \times [-0.45, 0.45]$.
The dynamics are:
\begin{equation*}\label{ex:ducted-fan-forward-dyn}
	\left[ \begin{array}{l}
		m \dot{v} \\ m v \dot{\gamma} \\ \dot{\theta} \\ J \dot{q}
	\end{array}\right] = 
	\left[ \begin{array}{l}
		-D(v, \alpha) - W \sin(\gamma) + u \cos(\alpha + \delta_u) \\
		L(v, \alpha) - W \cos(\gamma) + u \sin(\alpha + \delta_u) \\
		q \\ M(v, \alpha) - u l_T \sin(\delta_u)
	\end{array} \right] \,,
\end{equation*}
where the angle of attack $\alpha = \theta - \gamma$, and 
$D$, $L$, and $M$ are polynomials in $v$ and $\alpha$. 
For full list of parameters, see~\cite{jadbabaie2002control}. 
According to the dynamics, $\vx^*:\ [6, 0, 0.1771, 0]$ is a stable
equilibrium (for $\vu^*:\ [3.2, -0.138]$) where the ducted fan can move forward with velocity $6$.
Thus, the goal is to reach near $\vx^*$. 
The system is not affine in control. We replace $u$ and $\delta_u$ 
with $u_s = u \sin(\delta_u)$ and $u_c = u \cos(\delta_u)$:
\begin{equation*}\label{ex:ducted-fan-forward-dyn}
	\left[ \begin{array}{l}
		\dot{v} \\ \dot{\gamma} \\ \dot{\theta} \\ \dot{q}
	\end{array}\right] = 
	\left[ \begin{array}{l}
		\frac{-D(v, \alpha) - W \sin(\gamma) + u_c \cos(\alpha) -  u_s \sin(\alpha)}{m} \\
		\frac{L(v, \alpha) - W \cos(\gamma) + u_c \sin(\alpha) + u_s \cos(\alpha)}{mv} \\
		q \\ \frac{M(v, \alpha) - l_T u_s}{J}
	\end{array} \right] \,.
\end{equation*}
Projection of $U$ into the new coordinate will yield a sector of a circle.
Then, set $U$ is safely under-approximated by a polytope $\hat{U}$ as shown in Fig.~\ref{fig:uhat}.
\begin{figure}
\begin{center}
	\includegraphics[width=0.3\textwidth]{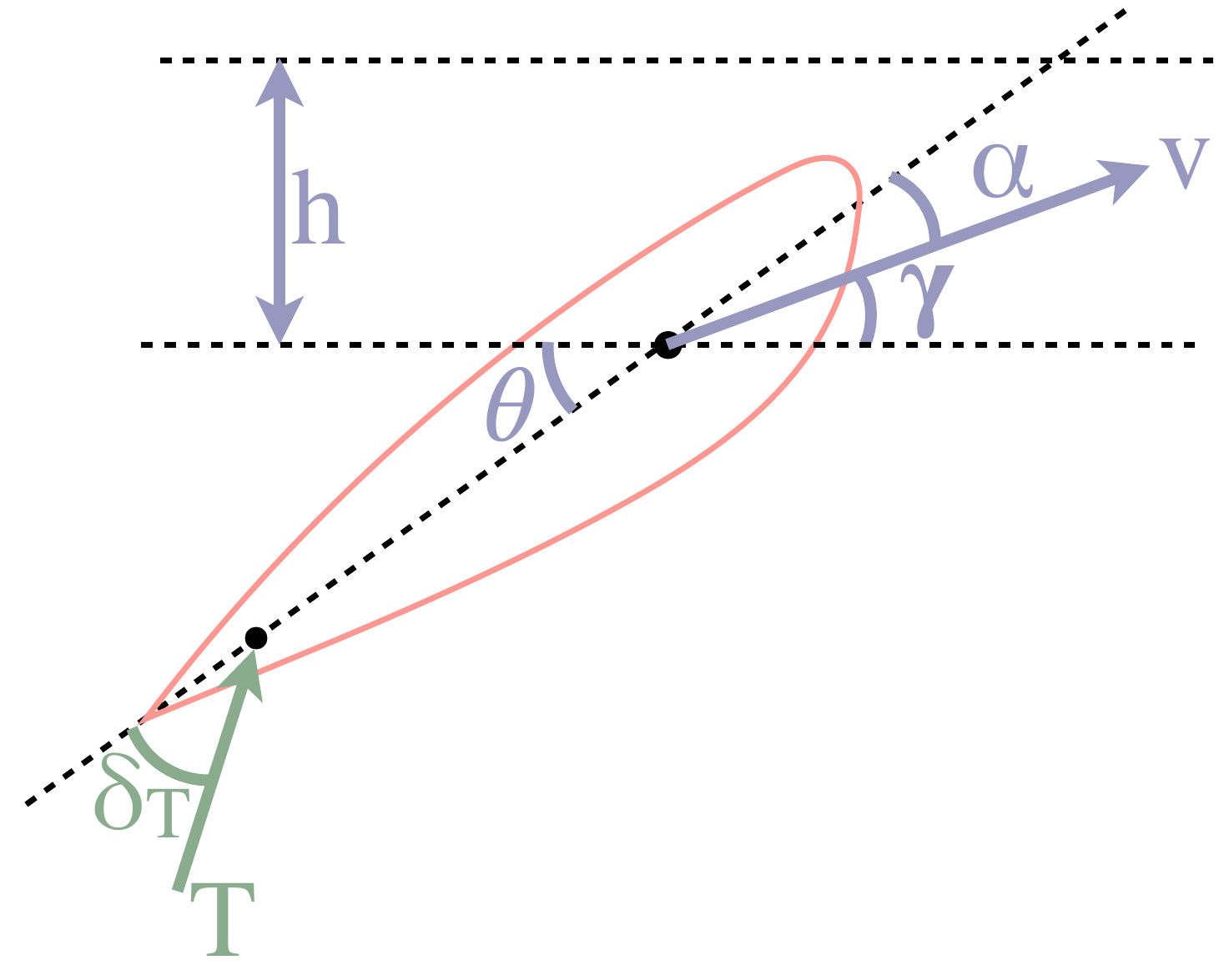}
\end{center}
\caption{A schematic view of the Caltech ducted fan.}\label{fig:ducted-fan} 
\end{figure}
\begin{figure}
\begin{center}
	\includegraphics[width=0.3\textwidth]{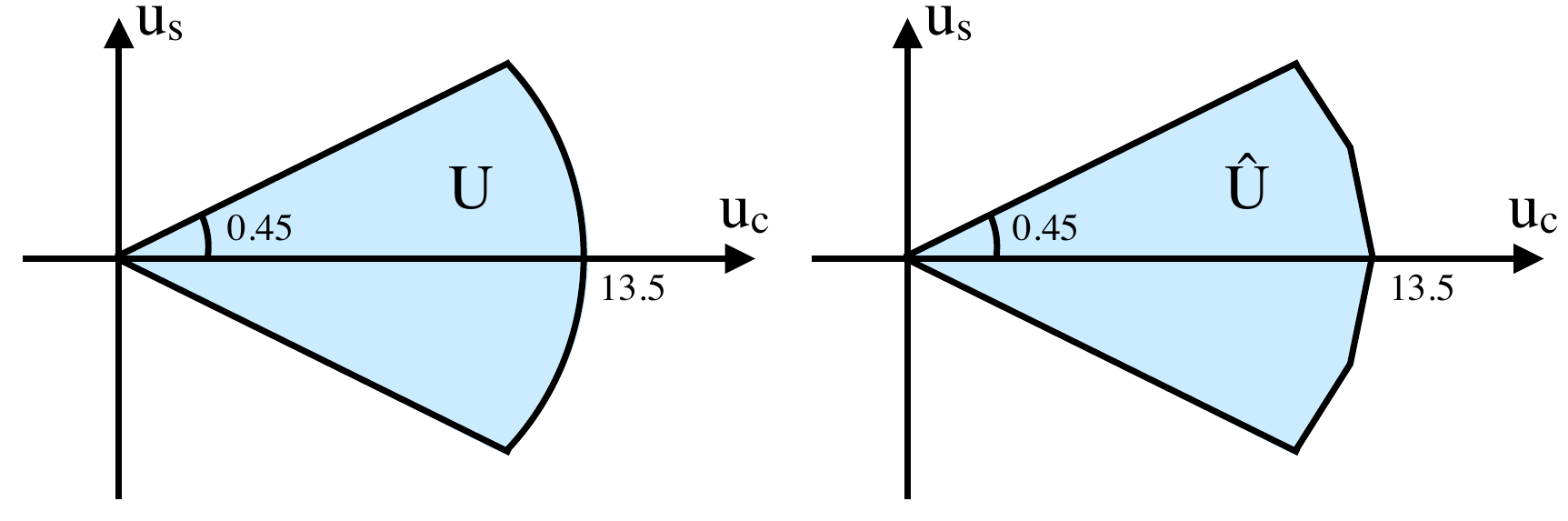}
\end{center}
\caption{Set of feasible inputs $U$ and its under approximation $\hat{U}$ in the new coordinate for case study III.}\label{fig:uhat} 
\end{figure}
Next, we perform a 
translation so that the $\vx^*$ ($\vu^*$) is the origin of the 
state (input) space in the new coordinate system.
In order to obtain a polynomial dynamics, we approximate 
$v^{-1}$, $\sin$ and $\cos$ with polynomials of degree one, three and three, 
respectively.
These changes yield a polynomial control affine dynamics, which fits the
description of our model. 
For the reach-while-stay specification, the sets are defined as the following:
\begin{align*}
S&:[3, 9]\times[-0.75, 0.75]\times[-0.75, 0.75]\times[-2, 2] \\
I&:\{[v, \gamma, \theta, q]^t |(0.4v)^2 + \gamma^2 + \theta^2 + q^2 < 0.4^2 \} \, \\
T&:\{[v, \gamma, \theta, q]^t |(0.4v)^2 + \gamma^2 + \theta^2 + q^2 < 0.05^2 \} \,.
\end{align*}
The projection of some of the traces of the system in $x$-$y$ plane is shown in Fig.~\ref{fig:forward-sim}. We set $x_0 = y_0 = 0$ and
\[
\dot{x} = v \cos(\gamma) , \ \dot{y} = v \sin(\gamma)\,.
\]
The CLF-based
controller is designed using the following generated CLF:
\begin{align*}
V =&+ 3.23 q^2 + 2.17 q\theta
+ 3.90 \theta^2 - 0.2 qv
- 0.45 v\theta \\
& + 0.53 v^2 + 1.66 q\gamma - 1.33 \gamma\theta
+ 0.48 v\gamma + 3.90 \gamma^2 \,.
\end{align*}
The traces show that the CLF-based controller stabilizes faster, however, the MPC
controller uses the aerodynamics to achieve the same goal with a better performance.
\begin{figure}
\begin{center}
	\includegraphics[width=0.48\textwidth]{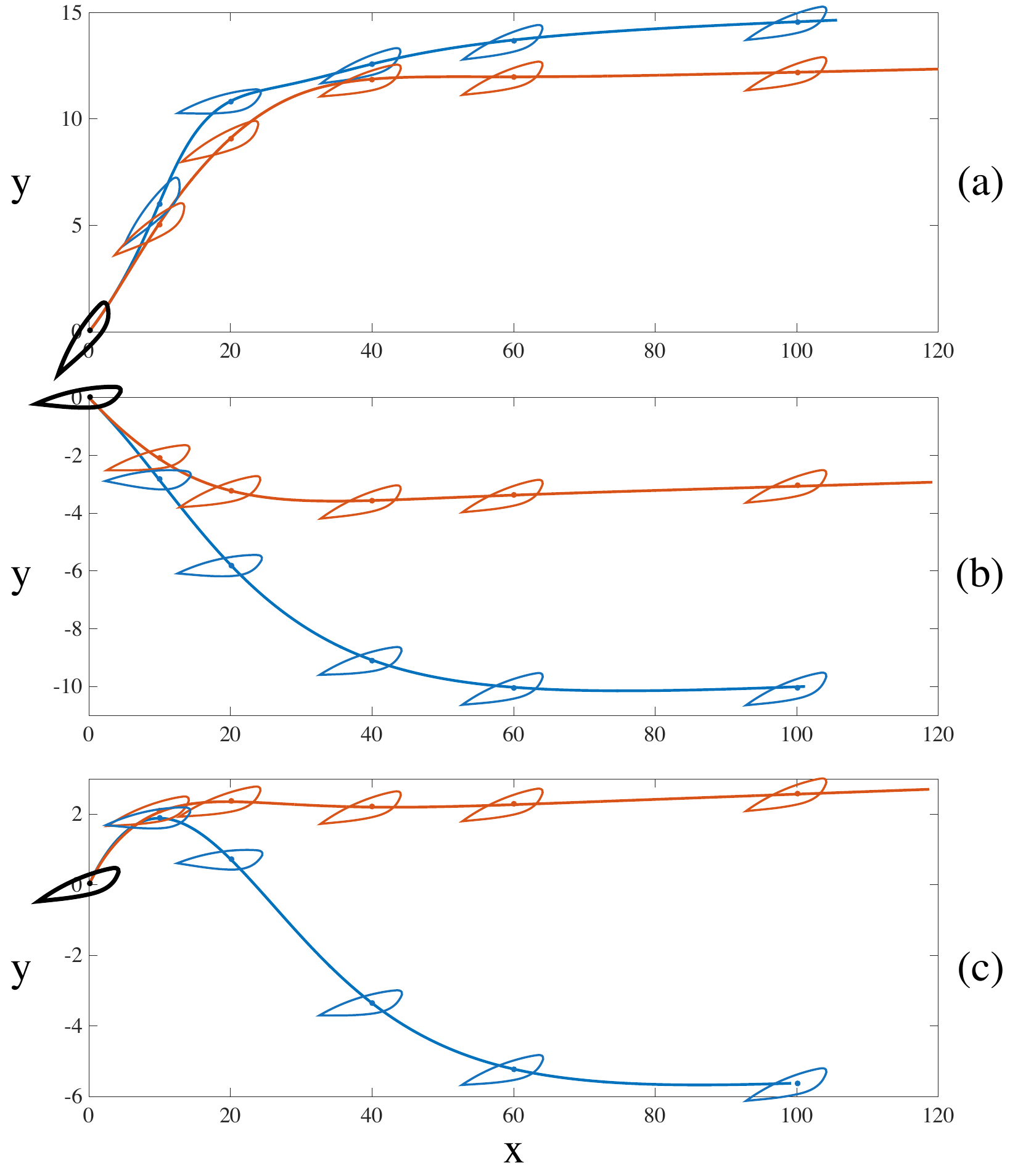}
\end{center}
\caption{Simulation for forward flight of 
Caltech ducted fan - Projected on x-y
 plane.
Blue (red) traces are trajectories of the closed loop system
 with the MPC (CLF-based) controller. The rotational position is shown for some of the states (in black for the initial state) for each trajectory. Initial states 
are $[2, 0.4, 0.717, 0]$, $[-1, -0.25, -0.133, 0]$, and $[-1, 0.4, 0.177, 0]$
for (a), (b), and (c), respectively.}\label{fig:forward-sim} 
\end{figure}

\subsection{Case Study IV:}\label{case:hover}
This case study addresses another problem for the planar Caltech ducted fan~\cite{jadbabaie2002control}.
The goal is to keep the planar ducted fan in a hover mode. The system has three degrees
of freedom, $x$, $y$, and $\theta$, which define the position and orientation of
the ducted fan. There are six state variables $x$, $y$, $\theta$, $\dot{x}$, $\dot{y}$, $\dot{\theta}$ and two control inputs $u_1$, $u_2$ ($U \in [-10, 10]\times[0, 10]$).
The dynamics are
\begin{equation*}\label{ex:ducted-fan-hover-dyn}
	\left[ \begin{array}{l}
		m \ddot{x} \\ m \ddot{y} \\ J \ddot{\theta}
	\end{array}\right] = 
	\left[ \begin{array}{l}
		-d_c\dot{x} + u_1 \cos(\theta) - u_2 \sin(\theta) \\
		-d_c\dot{y} + u_2 \cos(\theta) + u_1 \sin(\theta) - mg \\
		r u_1
	\end{array} \right] \,,
\end{equation*}
where $m = 11.2$, $g = 0.28$, $J = 0.0462$, $r = 0.156$ and $d_c = 0.1$. 
The system is stable at origin for $\vu^*: [0, mg]$. Therefore, we set
$\vu*$ as the origin for the input space.
The specification
is a reach-while-stay property with the following sets:
\begin{align*}
S &: [-1,1]\times[-1,1]\times[-0.7,0.7]\times[-1, 1]^3 \\
I &:\B_{0.25}(\vzero), T:\B_{0.1}(\vzero) \,.
\end{align*}
The trigonometric functions are approximated with degree two
polynomials and the procedure finds a quadratic CLF:
\begin{align*}
V =& 1.64 \dot{\theta}^2 - 0.56 \dot{\theta}\dot{y}
+ 13.53 \dot{y}^2 + 0.07 \dot{\theta}y + 1.15 y\dot{y} +\\
&1.16 y^2 + 1.74 \theta\dot{\theta} + 0.03 \dot{y}\theta - 0.77 y\theta + 4.80 \theta^2 -\\
&4.57 \dot{\theta}\dot{x} + 0.85 \dot{x}\dot{y} + 0.34 y\dot{x} - 8.59 \dot{x}\theta + 12.77 \dot{x}^2 -\\
&0.45 \dot{\theta}x + 0.06 \dot{y}x +  0.51 yx - 3.71 x\theta + 4.12 x\dot{x} + \\
&1.88 x^2 \,.
\end{align*}
Some of the traces are shown in Fig.~\ref{fig:hover-sim}. As the simulation suggest,
the MPC controller behaves very differently and the CLF-based controller
yield solutions with more oscillations. The CLF-based controller first
stabilizes $x$ and $\theta$ and then value of $y$ settles. Also, once 
the trace is inside the target region, the CLF-based controller does 
not guarantee decrease in $V$ as this fact is intuitively visible 
in Fig.~\ref{fig:hover-sim}(c).
\begin{figure}
\begin{center}
	\includegraphics[width=0.45\textwidth]{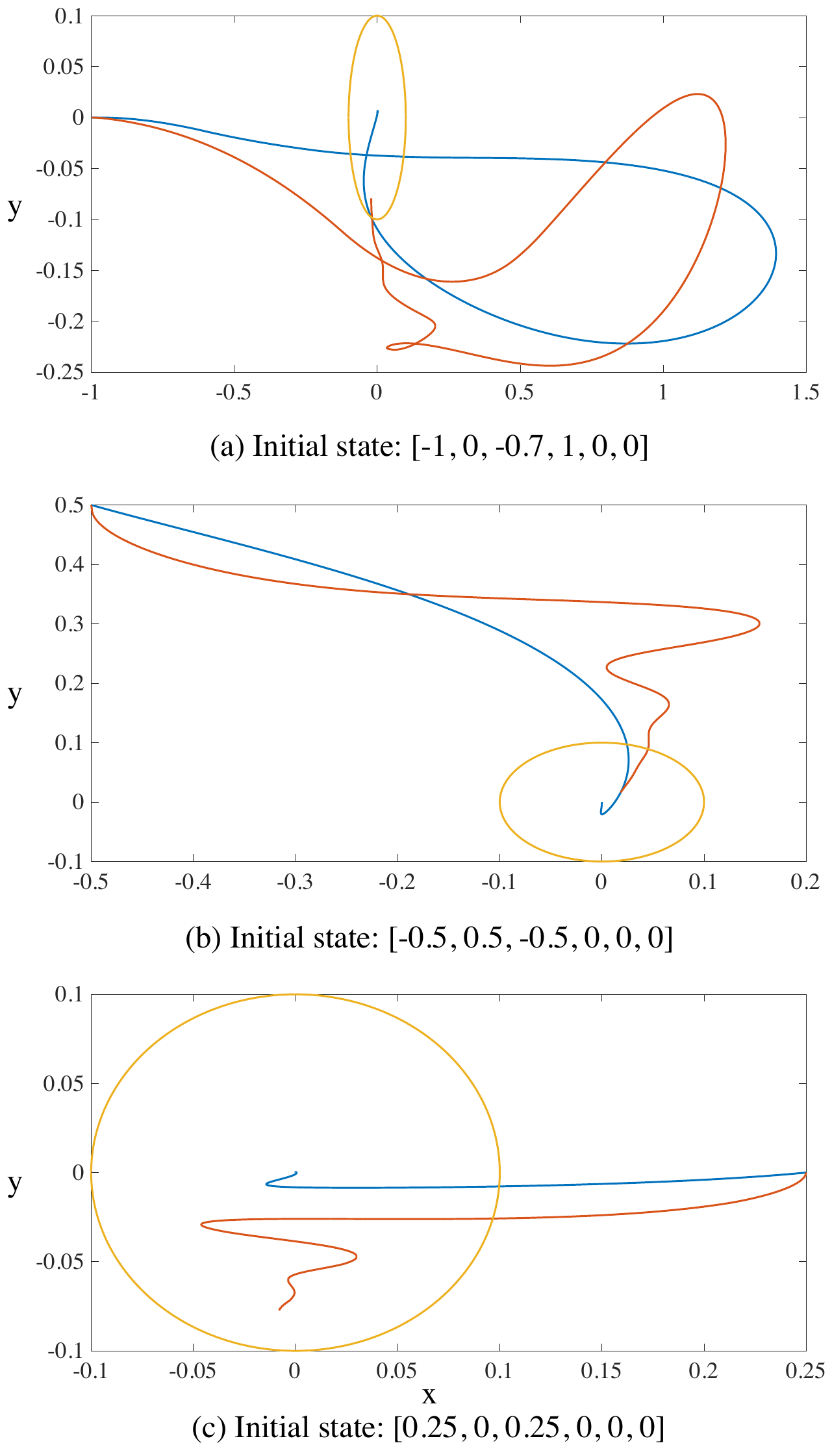}
\end{center}
\caption{Simulation for Case Study IV - Projected on x-y
 plane. The trajectories corresponding to the CLF-based (MPC) controller are shown in red 
(blue) lines. The boundary of the target set is 
shown in yellow.}\label{fig:hover-sim} 
\end{figure}
\subsection{Case Study V:}
In this case study, a unicycle model~\cite{liberzon2012switching} is considered.
It is known that no continuous feedback can stabilize the unicycle, and therefore no
continuous CLF exists. However, considering a reference trajectory for a moving unicycle,
one can keep the system near the reference trajectory, using control funnels.
The unicycle model has the dynamics: 
\[
\dot{x} = u_1 \cos(\theta) \ , \ \dot{y} = u_1 \sin(\theta) \ , \ \dot{\theta} = u_2\,.
\]
By a change of basis, a simpler dynamic model is used here (see.~\cite{liberzon2012switching}):
\[
\dot{x_1} = u_1, \dot{x_2} = u_2,  \dot{x_3} = x_1 u_2 - x_2 u_1 \,.
\]

\begin{figure*}[t]
\begin{center}
	\includegraphics[width=0.95\textwidth]{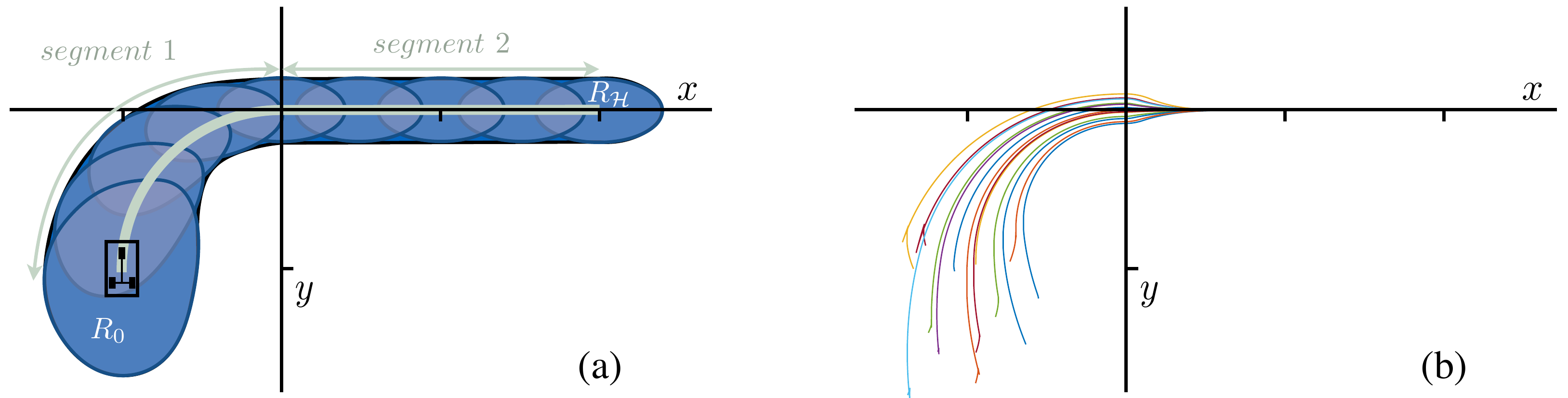}
\end{center}
\caption{(a) Trajectory tracking using control funnel - Projected on x-y
 plane. The reference trajectory is shown with the green line, consists of two segments.
 Starting from $R_0$, the state remains in the funnel (blue region) 
 until it reaches $R_\T$. Boundary of each smaller blue region shows
 the boundary of the funnel for a specific time. (b) Simulation traces for
 some random initial states.
 }\label{fig:unicycle} 
\end{figure*}

We consider a planning problem, in which starting
near $[\theta, x, y] = [\frac{\pi}{2}, -1, -1]$, the goal is to reach near $[\theta, x, y] = [0, 2, 0]$.
In the first step, a feasible trajectory $\vx^*(t)$ is generated as 
shown in Fig.~\ref{fig:unicycle}(a).
Then $\vx^*(t)$ is approximated with piecewise polynomials. 
More precisely, trajectory consists
 of two segments. The first segment brings the car to the
origin and the second segment moves the car to the
destination.
Each segment is approximated using
polynomials in $t$ with degree up to three:
\begin{align*}
&\mbox{seg. 2} : \begin{cases}
	\theta(t)^* = 0 \\
	x^*(t) = t \\
	y^*(t) = 0
\end{cases}  \\ 
&\mbox{seg. 1} : \begin{cases}
	\theta^*(t) = \pi - t \\
	x^*(t) = -(1-0.64t)(1+0.64t) \\
	y^*(t) = -(1-0.64t)(1-0.2t-0.25t^2) \,.
\end{cases}
\end{align*}
Let $Tr(\theta, x, y)$ represent the transformation of the state in terms
  of $(\theta, x, y)$ coordinate system to the $(x_1, x_2, x_3)$ coordinates.
Also, for two set $A$, and $B$, let $A \oplus B$ be the Minkowski sum of $A$ and $B$.
For example, we write $\{Tr(\theta, x, y)\} \oplus \B_{\delta}(\vzero)$ to denote a state
and a ball of radius $\delta$ around it. Moreover, let $S_1$ ($S_2$) be the minimal box
which contains the trajectory $\vx^*(\cdot)$ for the first (second) segment in the 
$(x_1, x_2, x_3)$ coordinates.
For the first segment, the goal is to reach from the initial set
$I:\{Tr(\pi/2, -1, -1)\} \oplus \B_{1}(\vzero)$ to the target set $T:\{Tr(0, 0, 0)\} \oplus \B_{1}(\vzero)$. 
Also, the safe set is defined as $S:S_1 \oplus [-1.5,1.5]^3$.
That is, an enlarged box around $S_1$.
And in the next segment, the goal is to reach from initial set $I:Tr(0, 0, 0) \oplus \B_{1}(\vzero)$
to $T:Tr(0, 2, 0) \oplus \B_{1}(\vzero)$ as the target, while staying in $S:S_2 \oplus [-2, 2]^3$.

For each segment, we search for a Lyapunov-like function $V$ as a time varying function, quadratic in the states. 
Our method is applied to this problem,
and we are able to find a strategy to implement the plan with guarantees.
The boundary of the funnels is shown in Fig.~\ref{fig:unicycle}(a). Also, some
simulation traces are shown in Fig.~\ref{fig:unicycle}(b), where the CLF controller
is implemented using the generated funnels. As simulations suggest, the funnels
can effectively stabilize the traces to the trajectory, when the unicycle
is moving forward.

\subsection{Performance}
As mentioned earlier, the inputs to the learning framework are the
plant, monomial basis functions, and the demonstrator. Also, the degree of 
relaxation $D$ is also considered as input.
At each iteration, first a MVE inscribed inside a polytope is calculated. This
task is performed quite efficiently. The MPC scheme used inside the demonstrator is
an input and we do not consider its performance
here. Nevertheless, MPC is known to be very efficient if it is carefully tuned.
We mention that the MPC parameters used here
are selected by a non-expert and usually the time step is very small and
the horizon is very long. 
Nevertheless, as the MPC is used offline, they are still suitable for
our framework. Also, costs matrices $Q$, $R$, and $H$ are diagonal:
\[
Q = diag(Q') \ , \ R = diag(R') \ , \ H = N diag(Q') \,,
\]
where $Q' \in \reals^n$ and $R' \in \reals^m$.  There are two other
important factors that determines the performance of the whole
learning framework: (i) the time taken by the verifier and
(ii) the number of iterations. Table.~\ref{tab:result} shows the
results of the learning framework for the set of case studies
  described thus far. For each problem instance, the parameters of
the MPC, as well as the degree of relaxation are provided.
Also, the performance of the learning framework is tabulated.
First, the procedure starts from $C : [-\Delta, \Delta]^r$ and terminates 
whenever $\Vol(E_j) < \gamma \delta^r$. We set $\Delta = 100$ and 
$\delta = 10^{-3}$. The results demonstrate that the method terminates in
few iterations, even for the cases where a compatible CLF does not
exists.

Notice that the number of demonstrations is different from the number
of iterations.  Recall that two separate problems are solved for the
verification.  One involves checking the positivity of $V$, and the
other involves checking whether $\nabla V$ can be
decreased. When a counterexample $\vx_j$ is found for the
  former problem, there is no need to check the latter
  condition. Furthermore, we do not require a demonstration for such
  a scenario. This optimization is added to speed up our overall
  procedure by avoiding expensive calls to the MPC.  To accommodate
this, our approach calculates $\hat{C}_{j+1}$ (instead of $C_{j+1}$) for such
counterexamples as:
\begin{equation}\label{eq:approx-C_j+1}
		\hat{C}_{j+1}:\ \hat{C}_j \cap \left\{ \vc \ |\ V_{\vc}(\vx_{j}) > 0 \right\} \,.
	\end{equation}
 Otherwise, if
the counterexample violates conditions on $\nabla V$, then
\begin{equation}\label{eq:approx-C_j+1-simple}
		\hat{C}_{j+1}:\ \hat{C}_j \cap \left\{ \vc \ |\ \begin{array}{c} V_{\vc}(\vx_{j}) > 0 \\
 			\nabla V_\vc.f(\vx_j, \vu_j) < 0 \end{array}
 \right\} \,.
	\end{equation}
However, $\vc_j \not\in \hat{C}_{j+1}$ for both cases and the 
convergence guarantees continue to hold.
As Table.~\ref{tab:result} shows, using this trick, the number 
of demonstrations can be much smaller than the total number of iterations.

At each iteration, several verification problems are solved which
involve solving large SDP problems. While the complexity of solving
SDP is polynomial in the number of variables, they are still hard to
solve. The verification problem is quite expensive when the number of
variables and degree of relaxation are large. Nevertheless, as the SDP
solvers mature further, we believe our method can solve larger problems, since
the verification procedure is currently the computational bottleneck
for the learning framework.  We note that, using larger degree of
relaxation does not necessarily lead to a longer learning process
(e.g. hover flight example).  For example, for the inverted pendulum
example, using degree of relaxation five the procedure finds a CLF
faster when compared to the case wherein the degree of relaxation is
set to four.

 \begin{table*}[t]
\caption{\small Results on the benchmark. $\tau$: MPC time step, $N$: number of horizon steps, $Q'$: defines MPC state cost, $R'$: defines MPC input cost, $D$: SDP relaxation degree bound, \#Dem : number of demonstrations, \#Itr: number of iterations, V. Time: total computation time for verification (minutes), Time: total computation time (minutes)}\label{tab:result} 
\begin{center}
\begin{tabular}{ ||l||c|c|c|c||c||c|c|c|c|c|| } 
 \hline
 \multicolumn{1}{||c||}{Problem} & \multicolumn{4}{c||}{Demonstrator} & Verifier & 
 \multicolumn{5}{c||}{Performance} \\
 \hline
 System Name & $\tau$ & $N$ & $Q'$ & $R'$ & $D$ & \#Dem & \# Itr & V. Time & Time & Status \\
 \hline
 \multirow{2}{*}{Unicycle-Segment 2} & \multirow{2}{*}{0.1} & \multirow{2}{*}{10} & \multirow{2}{*}{[1 1 1]} & \multirow{2}{*}{[1 1]} & 3 & 2 & 74 & 3 & 3 & Fail \\ 
  &  &  & &  & 4 & 2 & 57 & 4 & 4 & Succ\\ 
 \hline
 \multirow{2}{*}{Unicycle-Segment 1} & \multirow{2}{*}{0.1} & \multirow{2}{*}{20} & \multirow{2}{*}{[1 1 1]} & \multirow{2}{*}{[1 1]} & 3 & 27 & 86 & 9 & 10 & Fail \\ 
  &  &  &  & & 4 & 23 & 71 & 11 & 12 & Succ \\ 
 \hline
 \multirow{2}{*}{TORA} & \multirow{2}{*}{1} & \multirow{2}{*}{30} & \multirow{2}{*}{[1 1 1 1]} & \multirow{2}{*}{[1]} & 3 & 52 & 118 & 7 & 14 & Fail \\ 
  &  &  & & & 4 & 19 & 76 & 5 & 8 & Succ \\ 
  \hline
 \multirow{3}{*}{Inverted Pendulum} & \multirow{3}{*}{0.04} & \multirow{3}{*}{50} & \multirow{3}{*}{[10 1 1 1]} & \multirow{3}{*}{[10]} & 3 & 56 & 85 & 7 & 27 & Fail \\ 
  &  &  & & & 4 & 53 & 69 & 9 & 25 & Succ \\ 
  &  &  & & & 5 & 34 & 50 & 7 & 19 & Succ \\ 
  \hline
 \multirow{2}{*}{Bicycle} & \multirow{2}{*}{0.4} & \multirow{2}{*}{20} & \multirow{2}{*}{[1 1 1 1]} & \multirow{2}{*}{[1 1]} & 2 & 14 & 32 & 2 & 2 & Fail \\ 
  &  &  & & & 3 & 7 & 25 & 1 & 1 & Succ \\ 
  \hline
 \multirow{2}{*}{Bicycle $\times$ 2} & \multirow{2}{*}{0.4} & \multirow{2}{*}{20} & \multirow{2}{*}{[1 1 1 1 1 1 1 1]} & \multirow{2}{*}{[1 1 1 1]} & 2 & 119 & 225 & 77 & 90 & Fail \\ 
  &  &  & & & 3 & 30 & 81 & 43 & 46 & Succ\\ 
  \hline
 \multirow{2}{*}{Forward Flight} & \multirow{2}{*}{0.4} & \multirow{2}{*}{40} & \multirow{2}{*}{[1 1 1 1]} & \multirow{2}{*}{[1 1]} & 4 & 14 & 77 & 16 & 18 & Fail \\ 
  &  &  & & & 5 & 4 & 64 & 10 & 10 & Succ\\ 
  \hline
 \multirow{3}{*}{Hover Flight} & \multirow{3}{*}{0.4} & \multirow{3}{*}{40} & \multirow{3}{*}{[1 1 1 1 1 1]} & \multirow{3}{*}{[1 1]} & 2 & 57 & 147 & 12 & 40 & Fail \\ 
  &  &  & & & 3 & 57 & 124 & 21 & 47 & Succ \\ 
  &  &  & & & 4 & 51 & 116 & 30 & 54 & Succ \\ 
 \hline
\end{tabular}
\end{center}
\end{table*}

In previous sections, we discussed that two important factor governs the convergence of
the search process: (i) candidate selection, and (ii) counterexample selection. In order
to study the effect of these processes, we investigate different techniques to evaluate
their performances. For candidate selection, we consider three different methods.
In the first method, a Chebyshev center of $C_j$ is used as a candidate. In the second
method, the analytic center of constraints defining $C_j$ is the selected candidate and
redundant constraints are not dropped. And finally, in the last method, the center of
MVE inscribed in $C_j$ yields the candidate. Also, for each of these methods, we compare
the performance for two different cases: (i) a random counterexample is generated, 
(ii) the generated counterexample maximizes constraint violations (see Sec.~\ref{sec:counterexample-selection}). Table~\ref{tab:selection} shows the performance
for each of these cases, applied to the same set of problems. 
The results demonstrate that selecting good counterexamples would increase the convergence 
rate (fewer iterations). Nevertheless, the time it takes to generate these
counterexamples increases, and therefore, the overall performance degrades. In conclusion, while
generating good counterexamples provides better reduction in the space of candidates, it is computationally expensive, and  thus,
it seems to be beneficial to just rely on candidate selection for fast termination.
Table.~\ref{tab:selection} also suggests that Chebyshev center has the worst performance. 
Also, the MVE-based method performs better (fewer iterations) compared to the method 
which is  based on the analytic center.

\begin{table*}[t]
\caption{\small Results on different variations. I: number of iterations, VT: computation time for verification (minutes), T: total computation time (minutes), Simple CE: any counterexample, Max CE: counterexample with maximum violation}\label{tab:selection}
\begin{center}{\scriptsize
\begin{tabular}{ ||l||rrr|rrr||rrr|rrr||rrr|rrr||} 
 \hline
 \multirow{3}{*}{Problem} & \multicolumn{6}{c||}{Chebyshev Center} & \multicolumn{6}{c||}{Analytic Center} & \multicolumn{6}{c||}{MVE Center} \\
 \cline{2-19}
 & \multicolumn{3}{c|}{Simple CE} & \multicolumn{3}{c||}{Max CE}  & \multicolumn{3}{c|}{Simple CE} & \multicolumn{3}{c||}{Max CE}  & \multicolumn{3}{c|}{Simple CE} & \multicolumn{3}{c||}{Max CE} \\
 \cline{2-19}
  & I & VT & T & I & VT & T & I & VT & T & I & VT & T & I & VT & T & I & VT & T  \\
 \hline
 Unicycle - Seg. 2 & 83 & 4 & 4 & 22 & 9 & 9 & 76 & 5 & 6 & 23 & 9 & 10 & 57 & 4 & 4 & 15 & 6 & 6  \\
 Unicycle - Seg. 1 & 81 & 6 & 7 & 34 & 17 & 17 & 85 & 10 & 10 & 35 & 15 & 16 & 71 & 11 & 12 & 36 & 18 & 18  \\
 TORA & 185 & 7 & 10 & 52 & 12 & 15 & 95 & 5 & 9 & 36 & 9 & 11 & 76 & 5 & 8 & 36 & 12 & 14  \\
 Inverted Pend. & 163 & 10 & 23 & 85 & 22 & 30 & 57 & 8 & 20 & 51 & 22 & 32 & 50 & 7 & 19 & 35 & 18 & 25  \\
 Bicycle & 99 & 3 & 3 & 40 & 5 & 5 & 31 & 2 & 2 & 20 & 3 & 3 & 25 & 1 & 2 & 15 & 3 & 3  \\
 Bicycle $\times$ 2 & 759 & 121 & 127 & 438 & 244 & 246 & 96 & 47 & 50 & 77 & 141 & 143 & 81 & 43 & 46 & 66 & 132 & 133  \\
 Forward Flight & 676 & 20 & 21 & 34 & 30 & 31 & 113 & 15 & 16 & 21 & 18 & 19 & 64 & 10 & 10 & 16 & 16 & 16  \\
 Hover Flight & 499 & 65 & 90 & 196 & 113 & 127 & 146 & 36 & 67 & 90 & 92 & 109 & 116 & 30 & 54 & 75 & 69 & 82  \\
 \hline
 \end{tabular}
} \end{center}
 \end{table*}

\subsection{Comparison with Other Approaches}

We now compare our method against other techniques used to
automatically construct provably correct controllers.

\paragraph{Comparison with CEGIS:}
We have claimed that the use of demonstrator
helps our approach deal with a computationally
expensive quantifier alternation in the CLF condition.
To understand the impact of this aspect of our approach,
we first we compare
the proposed method with our previous work, namely counterexample
guided inductive synthesis
(CEGIS) that is designed to solve constraints with quantifier alternation,
and applied to the synthesis of CLFs~\cite{ravanbakhsh2015counterexample}. In this framework, the
learning process only relies on counterexamples provided by a verifier component, without involving
demonstrations. Despite a timeout that is set to two hours, our
CEGIS method timed out for \emph{all the problem instances} discussed in
this article, without discovering a CLF.  As a result, we exclude this approach from
further comparisons. These results suggest that demonstrations are essential
for fast convergence.

\paragraph{Learning CLFs from Data:} On the other hand, 
Khansari-Zadeh et al.~\cite{KHANSARIZADEH2014} learn likely CLFs from
demonstrations from sets of states that are sampled without (a) the use of a verifier to check,
and  (b) counterexamples as new samples, both of which are features of our approach.
Therefore, the correctness of the controller thus derived is not formally
guaranteed. To this end, we verify if the solution is in fact
a CLF.

The methodology of Khansari-Zadeh et al. is implemented using the following steps:
\begin{compactenum}
\item Choose a  parameterization of the desired CLF $V_{\vc}(\vx)$ (identical to our approach).
\item Generate samples in batches, wherein for each batch:
  \begin{compactenum}
  \item Sample $N_1=50$ states uniformly at random, and for each state $\vx_i$, add the constraint $V_{\vc}(\vx_i) \geq 0 $,
    for $i \in [1, N_1]$.
  \item Sample $N_2=5$ States at random, and for each state $\vx_j$ ($j \in [1, N_2]$),
    simulate the MPC demonstrator for $N_3 = 10$ time steps to obtain state control
    samples
    \[\{(\vx_{j,1}, \vu_{j,1}),\ldots,(\vx_{j,N_3}, \vu_{j,N_3})\} \,.\]
   
  \item Add the constraints $\grad V_{\vc} \cdot f |_{\vx = \vx_{j,k}, \vu=\vu_{j,k}}< 0$ for $j=1, \ldots, N_2$ and
    $k = 1, \ldots, N_3$ to enforce the negative definiteness of the CLF.
  \end{compactenum}
  \item At the end of batch $k$, solve the system of linear constraints thus far to check if there is a feasible solution.
  \item If there is no feasible solution, then \textbf{exit}, since no function in $V_{\vc}(\vx)$ is compatible with the data.
  \item If there is a feasible solution, check this solution using the \textsc{verifier}.
  \item If the verifier succeeds, then \textbf{exit} successfully with the CLF discovered.
  \item Otherwise, continue to generate another batch of samples.
\end{compactenum}
We enforce the constraint $V(\vx) > 0$ and $\grad V \cdot f < 0$  over different sets of samples,
since simulating the demonstrator is much more expensive for each point. 
The approach iterates between  generating successive batches of data until a preset
timeout of two hours as long as (a) there are CLFs remaining to consider and (b) no
CLF has been discovered thus far. The time taken to learn and verify the solution is not considered
against the total time limit, and also not added to the overall time reported.
Besides stability, the approach is also adapted for other properties, which are used in our benchmarks.

  \begin{table*}[t]
\caption{\small Results for ``demonstration-only" method. \#Sam.: number of samples, \#Dem: number of demonstrations, Case: best-case or worst-case, Time: total computation time (minutes), TO: time out ($>$ 2 hours).}\label{tab:no-CE}
\begin{center}
\begin{tabular}{ ||l||r|r||l|r|r|r|c||r|r|r|| } 
 \hline
 \multicolumn{1}{||c||}{Problem} &  \multicolumn{2}{c||}{Stats} &
 \multicolumn{5}{c||}{Performance} & \multicolumn{3}{c||}{Proposed Method} \\
 \hline
System Name & Succ. \% & TO \% & Case & \#Sam. & \#Dem. & Time & Status & \#Sam. & \#Dem. & Time \\
 \hline
 \multirow{2}{*}{Unicycle-Segment 2} & \multirow{2}{*}{60} & \multirow{2}{*}{0}
 & best & 400 & 40 & 1 & Succ & \multirow{2}{*}{65} & \multirow{2}{*}{2} & \multirow{2}{*}{4} \\ 
& & & worst & 600 & 72 & 1 & Fail & & &\\  
 \hline
\multirow{2}{*}{Unicycle-Segment 1} & \multirow{2}{*}{45} & \multirow{2}{*}{0}
 & best & 600 & 35 & 2 & Succ & \multirow{2}{*}{79} & \multirow{2}{*}{23} & \multirow{2}{*}{12} \\ 
& & & worst & 800 & 70 & 3 & Fail & & &\\ 
 \hline
\multirow{2}{*}{TORA} & \multirow{2}{*}{60} & \multirow{2}{*}{30}
 & best & 6300 & 535 & 43 & Succ & \multirow{2}{*}{84} & \multirow{2}{*}{19} & \multirow{2}{*}{8} \\ 
& & & worst & 17100 & 1580 & TO & Fail & & &\\ 
 \hline
\multirow{2}{*}{Inverted Pendulum} & \multirow{2}{*}{30} & \multirow{2}{*}{70}
 & best & 2250 & 137 & 84 & Succ & \multirow{2}{*}{58} & \multirow{2}{*}{34} & \multirow{2}{*}{19} \\ 
& & & worst & 15750 & 300 & TO & Fail & & &\\ 
 \hline
\multirow{2}{*}{Bicycle} & \multirow{2}{*}{100} & \multirow{2}{*}{0}
 & best & 2700 & 55 & 2 & Succ  & \multirow{2}{*}{33} & \multirow{2}{*}{7} & \multirow{2}{*}{1}\\ 
& & & worst & 54000 & 1883 & 48 & Succ & & &\\ 
 \hline
\multirow{2}{*}{Bicycle $\times$ 2} & \multirow{2}{*}{0} & \multirow{2}{*}{100}
 & best & 81600 & 1736 & TO & Fail & \multirow{2}{*}{89} & \multirow{2}{*}{30} & \multirow{2}{*}{46}\\ 
& & & worst & - & - & - & - & & &\\ 
 \hline
\multirow{2}{*}{Forward Flight} & \multirow{2}{*}{0} & \multirow{2}{*}{0}
 & best & 900 & 35 & 4 & Fail  & \multirow{2}{*}{72} & \multirow{2}{*}{4} & \multirow{2}{*}{10}\\ 
& & & worst & 2700 & 254 & 31 & Fail & & &\\ 
 \hline
\multirow{2}{*}{Hover Flight} & \multirow{2}{*}{0} & \multirow{2}{*}{100}
 & best & 7150 & 227 & TO & Fail  & \multirow{2}{*}{132} & \multirow{2}{*}{57} & \multirow{2}{*}{47}\\ 
& & & worst & - & - & - & - & & &\\ 
 \hline
 \end{tabular}
\end{center}
\end{table*}

 The results are reported in Table.~\ref{tab:no-CE}. Since the
 generation of random samples are involved, we run the procedure $10$
 times on each benchmark, and report the percentage of trials that
 succeeded in finding a CLF, the number of timeouts and the number of
 trials that ended in an infeasible set of constraints.  We note that
 the success rate is $100\%$ for just one problem instance. For four
 other problem instances, the method is successful for a fraction
 of the trials. The remaining benchmarks fail on all trials.
 Next, the minimum and maximum number of demonstrations needed in the
 trials to find a CLF is reported as the ``best-case'' and ``worst-case''
 respectively. We note that our approach requires much fewer demonstrations
 even when compared the best case scenario. Thus, we conclude from this
 data that the 
 time spent by our approach for  finding counterexamples is justified by the
 \emph{significant decrease} in the number of demonstrations, and thus, faster
 convergence. This is beneficial especially for cases where generating
 demonstrations is expensive.

 For one of the benchmarks (the forward flight problem of the Caltech ducted fan),
 the method stops for all cases because a function compatible with the data does not exist.
As such, this suggests that no CLF compatible with the demonstrator exists. 
On the other hand, our approach successfully finds a CLF while considering just
four demonstrations. 

Finally, for two of the larger  problem instances, we continue to obtain 
feasible solutions at the end of the time limit, although the verifier
cannot prove the learned function is a CLF. In other words, there are
values of $\vc$ left, that have not been considered by the verifier.
Our approach uses  counterexamples, along with  a judicious choice of candidate CLFs to eliminate
all but a bounded volume of candidates.

\paragraph{Comparison with Bilinear Solvers:} We now compare our method against approaches based on 
bilinear formulations found in related work~\cite{el1994synthesis,majumdar2013control,tan2004searching}.
We wish to find a Lyapunov function $V$ and a corresponding feedback law $K: X \mapsto U$, simultaneously.
Therefore, we assume $K$ is a linear combination of basis functions $K: \sum_{k=1}^{r'} \theta_{k} h_k(\vx)$.
Likewise, we parameterize $V$ as a linear combination of basis functions, as well: $V : \sum_{k=1}^r c_{k} g_k(\vx)$.
Then, we wish to find $\vc$ and $\vth$ that satisfy the constraints corresponding to the
property at hand. 
To synthesize a CLF, we wish to find $V_{\vc}, K_{\vth}$, so that $V_{\vc}(\vx)$ and its
Lie derivative under the feedback $u = K_{\vth}(\vx)$ is negative definite. This is relaxed
as an optimization problem:
\[ \begin{array}{rcl}
     \min\limits_{\vc,\vth,\gamma} \textcolor{red}{\gamma} &\  \\
        \mathsf{s.t.}
        & V_{\vc} \mbox{ is positive definite } \\
          & (\forall\ \vx \neq \vzero)\ \grad V_{\vc}(\vx) \cdot f(\vx, K_{\vth}(\vx)) \leq  {\color{red}\gamma} ||\vx||_2^2 \\
   \end{array}\]

The decision variables include $\vc, \vth$ that parameterize $V$ and $K$, respectively.
In fact, if a feasible solution  is obtained such that $\gamma < 0$ then we may stop the optimization and
declare that a CLF has been found. 
 To solve this bilinear problem, we use alternative minimization approach described below. 
First, $V$ is initialized to be a positive definite function (by initializing $\vc$ to some fixed value).
Then,  the approach repeatedly alternates between the following steps:
\begin{enumerate}
	\item $\vc$ is fixed, and we search for a $\vth$ that minimizes $\gamma$.
	\item $\vth$ is fixed, and we search for a $\vc$ that minimizes $\gamma$.
\end{enumerate}
Each of these problems can be relaxed using Sum of Squares (SOS)
programming~\cite{prajna2002introducing}.  The approach is iterated and
results in a sequence of values $\gamma_0 \geq \gamma_1 \geq \gamma_2 \geq \cdots \geq \gamma_i$,
wherein $\gamma_i$ is the value of the objective after $i$ optimization instances have
been solved. Since the solution of one optimization instance forms a feasible solution for the
subsequent instance, it follows that  $\gamma_i$ are monotonically nondecreasing. The iterations stop whenever
$\gamma$ does not decrease sufficiently between iterations.  After termination,
the approach succeeds in finding $V_{\vc}$, $K_{\vth}$ 
only if $\gamma < 0$. Otherwise the approach fails.

Finding a suitable initial value for $\vc$ is an important
factor for success. As proposed by Majumdar et al, we pose and solve a
linear feedback controller by applying the LQR method to the
linearization of the dynamics~\cite{majumdar2013control}. In this
case, we initialize $V$ using the optimal cost function provided by
the LQR.  We also note that the linearization for the dynamics is not
controllable for all cases and we can not always use this
initialization trick.

Additionally, Majumdar et al. (ibid) discuss solutions to handle input saturation, requiring
$K_{\vth}(\vx) \in U$ to avoid input saturation.
Here, we consider two different variations of this method:
(i) inputs are not saturated, (ii) inputs are saturated. We consider variation (ii) only if the method is successful without forcing the input saturation.
For the Lyapunov function $V$ we consider quadratic monomials as our basis functions, and for the feedback law
$K$, we consider both linear and quadratic basis functions as separate problem instances.
Similar to the SDP relaxation considered in this work,
the SOS programming approach uses a  degree limit $D$ for the multiplier polynomials used in the
positivstellensatz (cf.~\cite{lasserre2009moments}).  The limits used for the bilinear optimization
approach are identical to those used in our method for each benchmark. The bilinear method is adapted to other properties used in our
  benchmarks and the results are shown in Table~\ref{tab:bilinear}.

For the first two problem instances, the linearized dynamics are not
controllable, and thus, we can not use the LQR trick for
initialization. 
For the remaining instances, we were able to use the LQR trick successfully
to find an initial solution. Starting from this solution, the bilinear approach
is successful on four problem instances, but fails for the hover flight problem. This
suggests that even the LQR trick may not always provide a good
initialization.  For two of the larger problem instances,
the bilinear method fails because of numerical errors, when dealing
with large SDP problems. While the SOS programming has similar
complexity compared to our method, it encounters numerical problems
when solving large problems. We believe two factors are important
here. First, our method solves different smaller verification problems
and verifies each condition separately, while in a SOS formulation all
conditions on $V$ and $\nabla V$ are formulated into one big SDP
problem. Moreover, in our method when we encounter a numerical error,
we simply use the (potentially wrong) solution as a spurious
counterexample without losing the soundness. Then, using
demonstrations we continue the search. On the other hand, when the
bilinear optimization procedure encounters a numerical error, it
is unable to make further progress towards an optimal solution.

\begin{table}[t]
\caption{\small Results for ``bilinear formulation" method. $K$: basis functions used to parameterize $K$, L: basis functions are monomials with maximum degree $1$ (linear), Q: basis functions are monomials with maximum degree $2$ (quadratic), LQR: if LQR is used for initialization, ST.: saturation type, NP: numerical problem, St.: status.}\label{tab:bilinear}
\begin{center}\scriptsize
\begin{tabular}{ ||l||c|c||c|c|| } 
 \hline
 \multicolumn{1}{||c||}{Problem} &  \multicolumn{2}{|c||}{Param.} & 
 \multicolumn{2}{c||}{Status} \\
 \hline
 System Name & $K$ & LQR & ST.(i) & ST.(ii) \\ 
 \hline
\multirow{1}{*}{Unicycle-Seg. 2} & L & \crossMark & - & - \\
 \hline
\multirow{1}{*}{Unicycle-Seg. 1} & L & \crossMark & - & - \\
 \hline
TORA & L & \tick & \tick & \tick \\
 \hline
Inverted Pend. & L & \tick & \tick & \tick \\
 \hline
Bicycle & L & \tick & \tick & \tick \\
\hline
Bicycle $\times$ 2 & L & \tick & NP & - \\
 \hline
Forward Flight & L & \tick & NP & - \\
 \hline
\multirow{2}{*}{Hover Flight} & L & \tick & \crossMark & - \\
 & Q & \tick & \crossMark & - \\
 \hline
 \end{tabular}
\end{center}
\end{table}

In conclusion, our method has several benefits when compared to the
bilinear formulation. First, our method does not assume the linearized system is controllable to initialize a solution.
Second, our method uses demonstrations to generate a
candidate instead of a local search, and we provide an upper-bound on
the number of iterations. And finally, our method can sometimes recover from
numerically ill-posed SDPs, and thus scales better as demonstrated through
experiments. On the flip side, unlike the bilinear formulation, our method
relies on a demonstrator that may not be easy to implement.

\section{Related Work}\label{sec:related}
In this section, we review the related work from the robotics,
 control, and formal verification communities.

\paragraph{Synthesis of Lyapunov Functions \ from Data:}The problem of
synthesizing Lyapunov functions for a control system by observing the
states of the system in simulation has been investigated in the past
by Topcu et al. to learn Lyapunov functions along with the resulting
basin of attraction~\cite{topcu2007stability}. Whereas the original
problem is bilinear, the use of simulation data makes it easier to
postulate states that belong to the region of attraction, and
therefore find Lyapunov functions that belong to this region by
solving LMIs in each case. The application of this idea to larger
black-box systems is demonstrated by Kapinski et
al.~\cite{kapinski2014simulation}, where the counterexamples are used
to generate data iteratively.  Our approach focuses on controller
synthesis through learning a control Lyapunov function to replace an
existing controller. A key difference lies in the fact that \emph{we
do not attempt to prove that the original demonstrator is necessarily
correct}, but find a control Lyapunov function by assuming that the
demonstrator is able to stabilize the system for the specific states
that we query on. Another important contribution lies in our analysis
of the convergence of the learning with a bound on the maximum number
of queries needed. In fact, these results can also be applied to the
Lyapunov function synthesis approaches mentioned earlier.  Similar to
our work, Khansari-Zadeh et al.~\cite{KHANSARIZADEH2014} uses human
demonstrations to generate data and enforce CLF conditions for the
data points, to learn a CLF candidate. Their work does not include a
verifier and therefore, the CLF candidate may not, in fact, be a CLF.
However, the method can handle errors in the demonstrations by finding
a maximal set of observations for which a compatible CLF exists,
whereas our method does not address erroneous demonstrations.

\paragraph{Counter-Example Guided Inductive Synthesis:} Our
approach of alternating between a learning module that proposes a
candidate and a verification module that checks the proposed candidate
is identical to the counter-example guided inductive synthesis (CEGIS)
framework originally proposed in verification community by
Solar-Lezama et al.~\cite{solar2006combinatorial,solar2008program}.
As such, the CEGIS approach does not include a demonstrator that can
be queried. The extension of this approach Oracle-guided inductive
synthesis~\cite{jha2010oracle}, generalizes CEGIS using an
input/output \emph{oracle} that serves a similar role as a
demonstrator in this paper. However, the goal here is not to mimic the
demonstrator, but to satisfy the specifications.  Also, Jha et
al.~\cite{Jha2017} prove bounds on the number of queries for discrete
concept classes using results on exact concept learning in discrete
spaces~\cite{GOLDMAN199520}. In this article, we consider searching
over continuous concept class, and prove bounds on the number of
queries under a robustness assumption.

The CEGIS procedure has been used for the synthesis of CLFs recently
by
authors~\cite{Ravanbakhsh-Others/2015/Counter-LMI,Ravanbakhsh-Others/2016/Robust},
combining it with SDP solvers for verifying CLFs. The key difference
here lies in the use of the demonstrator module that simplifies the
learning module. In the absence of a demonstrator module, the problem
of finding a candidate reduces to solving linear constraints with
disjunctions, an NP-hard
problem~\cite{Ravanbakhsh-Others/2015/Counter-LMI}. Likewise, the
convergence results are quite
weak~\cite{ravanbakhsh2015counterexample}. In the setting of this
paper, however, the use of a MPC scheme as a demonstrator allows us to
use faster LP solvers and provide convergence guarantees.
Empirically, we are able to demonstrate the successful inference of CLFs
on systems with up to eight state variables, whereas previous work in
this space has been restricted to much smaller
problems~\cite{Ravanbakhsh-Others/2015/Counter-LMI}.

\paragraph{Learning from Demonstration:} The idea of learning from
demonstrations has a long history~\cite{ARGALL2009469}. The overall framework uses a
demonstrator that can, in fact, be a human
operator~\cite{KHANSARIZADEH2014,Khansari-Zadeh2017} or a complex
MPC-based control
law~\cite{stolle2006policies,atkeson2013trajectory,ross2011reduction,zhong2013value,Mordatch-RSS-14,zhang2016learning}.
The approaches differ on the nature of the interactions between the
learner and the demonstrator; as well as how the policy is inferred.
Our approach stands out in many ways: (a) We represent our policies by
CLFs which are polynomial. On one hand, these are much less powerful
than approaches that use neural networks~\cite{zhang2016learning}, for
instance. However, the advantage lies in our ability to solve
verification problems to ensure that the resulting policy learned
through the CLF is correct with respect to the underlying dynamical
model. (b) Our framework is \emph{adversarial}. The choice of the
counterexample to query the demonstrator comes from a failed attempt
to validate the current candidate. (c) Finally, we use simple yet
powerful ideas from convex optimization to place bounds on the number
of queries, paralleling some results on concept learning in discrete
spaces~\cite{GOLDMAN199520}.

\paragraph{Lyapunov Analysis for Controller Synthesis}
Sontag originally introduced Control Lyapunov functions
and provided a universal construction of a feedback law 
for a given CLF~\cite{sontag1983lyapunov,sontag1989universal}. As such,
the problem of learning CLFs is well known to be hard, involving
bilinear matrix inequalities (BMIs)~\cite{tan2004searching}.  An more
conservative (less precise) approach involves solving bilinear
problems simultaneously for a control law and a Lyapunov function
certifying it~\cite{el1994synthesis,majumdar2013control}. This also
leads to bilinear formulation. Prieur et al.~\cite{prieur1999uniting}
shows that the set of feasible solutions to such problem may not only
be non-convex, but also disconnected.  Nevertheless, there are some
attempts to solve these BMIs which are well known to be
NP-hard~\cite{henrion2005solving}. A common approach to solve these
BMIs is to perform an alternating minimization by fixing one set of
bilinear variables while minimizing over the other. Such an approach has
poor guarantees in practice, often ``getting stuck'' on a saddle point
that does not allow the technique to make progress in finding a
feasible solution~\cite{Helton+Merino/1997/Coordinate}. To combat
this, Majumdar et al. (ibid) use LQR controllers and their associated
Lyapunov functions for the linearization of the dynamics as good
initial seed solutions~\cite{majumdar2013control}. In contrast, our
approach simply assumes a demonstrator in the form of a MPC controller
that can be used to resolve the bilinearity. Furthermore, our approach
does not encounter the local saddle point problem. And finally, when
the inputs are saturated, the complexity of such a method is exponential
in the number of control inputs, while the complexity of our method
remains polynomial.

\paragraph{Formal Controller Synthesis}
The use of the learning framework with a demonstrator distinguishes
the approach in this paper from recently developed ideas based on
formal synthesis.  Majority of these techniques focus on a given
dynamical system and a specification of the correctness in temporal
logic to solve the problem of controller design to ensure that the
resulting trajectories of the closed loop satisfy the temporal
specifications.  Most of these approaches are based on
discretization of the state-space into cells to compute a discrete
abstraction of the overall
system~\cite{wongpiromsarn2011tulip,liu2013synthesis,rungger2016scots,mouelhi2013cosyma,kloetzer2008fully}.
Another set of solutions are based on formal parameter synthesis that
search for unknown parameters so that the specifications are
met~\cite{yordanov2008parameter,donze2009parameter}.  These methods
include  synthesize certificates (Lyapunov-like
functions) by solving nonlinear constraints either through
branch-and-bound
techniques~\cite{huang2015controller,ravanbakhsh2015counterexample},
or  through a combination of simulations and quantifier
elimination~\cite{taly2011synthesizing,taly2010switching}. Our method
is potentially more scalable, since the use of a demonstrator allows us to
solve convex constraints instead.  Raman et al. design a
model-predictive control (MPC) from temporal logic
properties~\cite{raman2015reactive}.  More specifically, MILP solvers
are used inside the MPC, which can be quite expensive for real-time
control applications. We instead learn a CLF from the MPC and the CLF
yields an easily computable feedback law (using Sontag's formula).

\paragraph{Occupation Measures}
In this paper, we use the Lyapunov function approach to synthesizing
controllers. An alternative is to use occupation
measures~\cite{rantzer2001dual,prajna2004nonlinear,lasserre2008nonlinear,majumdar2014convex}.
These methods formulate an infinite dimensional problem to maximize
the region of attraction and obtain a corresponding control law. This
is relaxed to a sequence of finite dimensional
SDPs~\cite{lasserre2001global}. Note however that the approach
computes an over approximation of the finite time backward reachable
set from the target and a corresponding control. Our framework here
instead seeks an under-approximation that yields a guaranteed
controller.

\paragraph{Modeling Inaccuracies and Safe Iterative Learning.} A key drawback of our approach is
its dependence on a mathematical model of the system for learning CLFs.
Although this model is by no means identical to the real system,
it is hoped that the CLF and the control law remain valid despite the
unmodeled dynamics. Our recent
work has successfully investigated physical experiments that use control
Lyapunov-like functions learned from mathematical models for path following problems on a
$\frac{1}{8}$-scale model vehicle using accurate indoor localization
to obtain full state information in real-time~\cite{Ravanbakhsh+Others/2018/Path}. The broader area of iterative learning controls considers the process
of learning how to control a given plant at the same time as
inferring a more refined model of the plant through exploration~\cite{French+Roghers/2000/Nonlinear}.
However, in order to avoid damaging the system, it is necessary to maintain
the system state in a safe set while learning the system dynamics. Recent work by Wang et al. consider a combination
of barrier certificates for maintaining safety while learning Gaussian
process models of the vehicle dynamics~\cite{Wang+Others/2017/Safe}. 
Another approach considers safe reinforcement learning that incrementally
refines a Gaussian process approximation of the unmodeled system dynamics, starting from a known initial model~\cite{Berkenkamp+Others/2017/Safe}. This approach uses a Lyapunov function and performs explorations at so-called ``safe points'' from which safety can be guaranteed during the exploration process. In doing so, the model of the system is updated along with an estimate of the safe set obtained as a region of attraction of the Lyapunov function.

\section{Discussion and Future Work}\label{sec:disscussion}
In this section, we discuss some current limitations of our approach
as well as possible extensions of our approach that can provide
avenues for future research.

\paragraph{Extension to Switched Systems:}
Thus far, our focus has been on control affine systems. We note that a
variation of our framework is applicable to switched systems.
Specifically, one can transform a plant wherein the control is
performed through switching between different modes into a problem
over control affine systems. Let $Q$ be a finite set of modes, such that
the  dynamics vary according the mode $q \in Q$
($\dot{\vx} = f_q(\vx)$). The controller is assumed to operate by selecting
the current mode $q$ of the plant. Then the condition on $\nabla V$ for
stabilizing switched systems:
\[
(\forall \vx \neq \vzero) \ (\exists q \in Q) \ \nabla V \cdot f_q(\vx) < 0 \,,
\]
is replaced with
\begin{align*}
(\forall \vx \neq \vzero) \ (\exists \vlam \geq \vzero, \sum_q \vlam_q = 1) \ \sum_{q} \vlam_q \left( \nabla V \cdot f_q(\vx) \right) < 0.
\end{align*}
This is identical to the conditions obtained for a control affine system, and
thus, our framework can readily extend to such systems.
Moreover, using the
original formulation, checking conditions on $\nabla V$ is even simpler (compared to Eq.~\eqref{eq:decrease-cond-init}):
\[
(\exists \vx \neq \vzero) \ \bigwedge_q \nabla V \cdot f_q(\vx) \geq 0 \,.
\]

\paragraph{Extensions to Discrete-Time Systems:} Control
  problems on discrete-time systems have been widely studied. MPC
  schemes are naturally implemented over such systems, and
  furthermore, Lyapunov-like conditions extend quite naturally. As
  such, our approach can be extended to discrete-time nonlinear
  systems defined by maps as opposed to ODEs. However, polynomial
  discrete systems are known to pose computational challenges: when
  the Lie derivative is replaced by a difference operator, the degree
  of the resulting polynomial can be larger.

\paragraph{Optimizing Performance Criteria:} Our approach stops as soon as
one CLF is discovered. However, no claims are made as to the optimality
of the CLF. The experimental results suggest that the controllers found
by the CLFs are quite different from the original demonstrator in terms
of their performance. An important extension to our work lies in
finding CLFs so that the resulting controllers optimize some performance
metric. One challenge lies in specifying these performance metrics as
functions of the coefficients of the CLF. A simple approach may consist
of using a black-box performance evaluation function over the CLF
discovered by our approach. Once a CLF is found, we may continue our
search but now target CLFs whose performance are strictly better than
the ones discovered thus far.


\paragraph{Other Verifiers:} The verifier is the main bottleneck in
our learning framework. While in theory, the SDP relaxation addresses
verification problems for polynomial system, the scalability for
systems of high dimensions is still an issue. There are alternative
solutions to the SDP relaxation, which promise better scalability. In
particular linear relaxations are more attractive for this
framework~\cite{ahmadi2014dsos,bensassi2015linear}. Using linear
relaxations, one could restrict the candidate space to positive
definite polynomials up front, and consider only the conditions over
$\nabla V$ during the verification process. Therefore, using linear
relaxations, not only the verification problem scales better, the
number of such verifications to be solved can be decreased.

For a highly nonlinear system, the degree of polynomials for the
dynamics as well as basis functions get larger. For these systems, the
scalability is even more challenging. In future we wish to explore the
the use of falsifiers (instead of verifiers) and move towards more
scalable
solutions~\cite{Abbas+Others/2013/Probabilistic,AnnapureddyLFS11tacas,Donze+Maler/2010/Robust}. While
falsifiers would not guarantee correctness, they can be used to find
concrete counterexamples.  And by dropping formal correctness, a
falsifier can replace the verifier in the learning framework.

\paragraph{Beyond Polynomial CLFs:} In this paper, we assumed that the CLF
candidate $V$ is a linear combination of some given basis
functions. While we showed that this model is precise enough to
address exponential stability over compact sets, there are systems for
which a smooth $V$ does not exist. Nevertheless, our framework can
also handle nonlinear models such as Gaussian mixture or feed forward
neural network models, especially if the verifier is replaced
  by a falsifier that can be implemented through simulations.
However, there are some serious drawbacks, including more expensive
candidate generation, and weaker convergence
guarantees. In future work we wish to investigate these models.

\paragraph{Beyond MPC-based Demonstrations:}
As mentioned earlier, the demonstrator is treated as a black-box. We
have investigated to use MPC as they are easy to design, and can
provide smooth feedbacks which in our experiments is the key to find a
smooth CLF. However, nonlinear MPC schemes using numerical
  optimization can guarantee convergence only to local minima, but
  this does not translate as such into guarantees of stability or that
  the original specifications are met.  However, if we employed human
demonstrators (for example, an expert who operates the system), the
demonstrator may include errors, and we may need to consider
approaches that can reject a subset of the given
demonstrations~\cite{KHANSARIZADEH2014}. In addition, the
demonstrations can lead to inconsistent data, wherein nearby
queries are handled using different strategies by the demonstrator,
leading to no single CLF that is compatible with the given demonstrations~\cite{chernova2008learning,BREAZEAL2006385}.
These problems are left for future work.

\section{Conclusion}
We have thus proposed an algorithmic learning framework for
synthesizing control Lyapunov-like functions for a variety of
properties including stability, reach-while-stay. 
The framework provides theoretical guarantees of soundness,
i.e., the synthesized controller is guaranteed to be correct by
construction against the given plant model.  Furthermore, our approach
uses ideas from convex analysis to provide termination guarantees and
bounds on the number of iterations.

\begin{acknowledgements}
  We are grateful to Mr. Sina Aghli, Mr. Souradeep Dutta,
  Prof. Christoffer Heckman and Prof. Eduardo Sontag for helpful
  discussions.  This work was funded in part by NSF under award
  numbers SHF 1527075 and CPS 1646556. All opinions expressed are
  those of the authors and not necessarily of the NSF.
\end{acknowledgements}

\bibliographystyle{spmpsci}      
\bibliography{ref}   


\end{document}